\documentclass[a4paper,onecolumn,11pt]{quantumarticle}

\pdfoutput=1
\usepackage[utf8]{inputenc}
\usepackage[T1]{fontenc}

\usepackage{mathptmx,amsmath,mathtools,commath,amsfonts,amssymb,amsthm}

\flushbottom
\usepackage{etoolbox}
\usepackage{bm}
\usepackage{xcolor}
\usepackage{graphicx}
\usepackage{subcaption}
\usepackage{dcolumn}
\usepackage{makecell}
\usepackage{url}
\usepackage{pifont}
\usepackage[font=small,labelfont=bf]{caption}
\usepackage[normalem]{ulem}

\newcommand{\greencircle}{{\scriptsize \color{mygreen}\ding{108}}}

\newcommand{\Dt}{\Delta t}
\newcommand{\Dy}{\Delta y}
\newcommand{\kernel}{\kappa}
\newcommand{\weight}{\mu}

\usepackage{pdfpages}
\usepackage{tikz}
\usepackage{braket}
\usepackage{hyperref}
\usepackage[capitalise]{cleveref}
\usepackage[numbers,sort&compress]{natbib}

\newcommand{\bc}{\boldsymbol{c}}

\newcommand{\bz}{\boldsymbol{z}}
\newcommand{\bt}{\boldsymbol{t}}
\newcommand{\bs}{\boldsymbol{s}}

\newlength\figureheight
\newlength\figurewidth
\usepackage{pgfplots}

 \newtheorem{thm}{Theorem}
 \newtheorem{lem}{Lemma}
 \newtheorem{cor}{Corollary}
 \newtheorem{prop}{Property}
 \newtheorem{claim}{Claim}

\newcommand{\externalizetikz}{0}   

\definecolor{myblue}{rgb}{0,0.4470,0.7410}
\definecolor{myred}{rgb}{0.8500,0.3250,0.0980}
\definecolor{myorange}{rgb}{0.9290,0.6940,0.1250}
\definecolor{mypurple}{rgb}{0.4940,0.1840,0.5560}
\definecolor{mygreen}{rgb}{0.4660,0.6740,0.1880}
\definecolor{mylightblue}{rgb}{0.3010,0.7450,0.9330}
\definecolor{mydarkred}{rgb}{0.6350,0.0780,0.1840}

\ifnum\externalizetikz=2
  \usepackage{tikzexternal}
  \tikzexternalize
  \tikzsetexternalprefix{Tikzfig/}
\else
  \usepackage{tikz}
  \usepackage{pgfplots}
  \usetikzlibrary{calc}
  \usetikzlibrary{intersections}
  \usetikzlibrary{patterns.meta}
  \usetikzlibrary{positioning}
  \usetikzlibrary{plotmarks}
  \tikzset{>=latex}
  \usetikzlibrary{shapes,decorations,matrix}
  \usetikzlibrary{backgrounds,fit,decorations.pathreplacing,calc}
  \pgfplotsset{
    compat=newest,
    table/header=false,
    tick label style={font=\footnotesize},
    label style={font=\small},
    legend style={font=\footnotesize},
    legend cell align=left,
    colormap={parula}{
      rgb255=(53,42,135)
      rgb255=(15,92,221)
      rgb255=(18,125,216)
      rgb255=(7,156,207)
      rgb255=(21,177,180)
      rgb255=(89,189,140)
      rgb255=(165,190,107)
      rgb255=(225,185,82)
      rgb255=(252,206,46)
      rgb255=(249,251,14)
    }
  }
  \pgfplotsset{
    myColOne/.style={myblue},
    myColTwo/.style={myred},
    myColThr/.style={myorange},
    myColFou/.style={mypurple},
    myColFiv/.style={mygreen},
    myColSix/.style={mylightblue},
    myColSev/.style={mydarkred}
  }
  \ifnum\externalizetikz=1
    \usepgfplotslibrary{external}
    \tikzexternalize[prefix=Tikzfig/]
  \fi
\fi

\newcommand{\datfile}[1]{dat/#1.dat}
\newcommand{\plotconv}[6][]{%
\begin{tikzpicture}
\begin{semilogyaxis}[
  width=\columnwidth,%
  xmin=#3,xmax=#4,%
  ymin=#5,ymax=#6,%
  xlabel={\# Observables (Real)},%
  ylabel={Absolute Error},%
  legend style={draw=none,row sep=-2pt},%
  legend pos=north east,%
  #1,%
]
#2
\end{semilogyaxis}
\end{tikzpicture}%
}

\title{Quantum Rational Transformation Using Linear Combinations of Hamiltonian Simulations}
\author{Yizhi Shen}
\email{yizhis@lbl.gov}
\thanks{Equal contributions}
\affiliation{Applied Mathematics and Computational Research Division, Lawrence Berkeley National Laboratory, Berkeley, CA 94720, USA}

\author{Niel Van Buggenhout}
\email{nvanbugg@math.uc3m.es}
\thanks{Equal contributions}
\affiliation{Departamento de Matem\'aticas, Universidad Carlos III de Madrid, Avenida de la Universidad 30, 28911 Legan\'es,
	Spain}

\author{Daan Camps}
\affiliation{National Energy Research Scientific Computing Center, Lawrence Berkeley National Laboratory, Berkeley, CA 94720, USA}

\author{Katherine Klymko}
\affiliation{National Energy Research Scientific Computing Center, Lawrence Berkeley National Laboratory, Berkeley, CA 94720, USA}

\author{Roel Van Beeumen}
\affiliation{Applied Mathematics and Computational Research Division, Lawrence Berkeley National Laboratory, Berkeley, CA 94720, USA}

%
\begin{document}

\maketitle

\begin{abstract}
    Rational functions are powerful tools in scientific computing, offering rapidly convergent solutions where polynomial functions struggle. Nonetheless, their abilities to advance quantum algorithms remain largely untapped. In this paper, we introduce effective implementations of rational transformations of a target operator on quantum hardware. By leveraging tailored integral representations of the operator resolvent, we show that quantum rational transformations can be implemented efficiently using linear combination of Hamiltonian simulations (LCHS), with favorable scaling in both the maximal and total evolution time. We formulate two complementary LCHS approaches -- discrete-time and continuous-time  each providing unique strategies to approximate the integral representations of a resolvent. We consider quantum rational transformation for the ubiquitous task of approximating functions of a Hermitian operator, with particular emphasis on the elementary signum function which we use for spectral analysis 
    including ground and excited state problems. 
    Our numerical demonstration on spin systems indicates that our approach is efficient and achieves accurate estimation of the low-lying energies. 
\end{abstract}

\section{Introduction}
\label{sec:introduction}

Quantum algorithms for approximating functions of Hermitian operators have gained both theoretical and practical interest across quantum information science. They have been explored for tasks such as computing eigenenergies, simulating quantum systems, and implementing quantum walks~\cite{klymko2022,kirby2023exact,shen2023estimating,Low2017QSP,Low2019hamiltonian,berry2024doubling,Berry2016quantumwalk}. In many-body physics, localized functions, e.g., the step function, are instrumental in extracting essential spectral information, such as specific ground and excited state properties~\cite{motta2020determining,KeDuWa21}. In quantum linear algebra, elementary functions including the inverse, absolute value, and square root are intimately tied to common operator factorizations that find numerous applications. Most notably, the signum function can be employed to obtain the symmetric eigendecomposition and singular value decomposition of a matrix \cite{NaFr16}. However, a matrix function can often not be computed exactly, in such cases it is advantageous to employ a subset of simpler but expressive transformations to approximate a broader set of target functions.

Function approximation via polynomial transformations in quantum algorithms are enabled by approaches such as quantum signal processing (QSP), the quantum singular value transformation (QSVT), and the quantum eigenvalue transformation (QET)~\cite{Tr19,Low2017QSP,Low2019hamiltonian,Gilyen2019QSVT,QET,Dong2022QETU}. Recent advancements have led to substantially improved implementations in terms of the number of ancillae and entangling gates~\cite{Dong2022QETU,berry2024doubling}. However, polynomial transformations fail to accurately approximate non-smooth functions~\cite{Newman1964} such as the discontinuous signum function which results in
impractically high polynomial degree, leading to excessive circuit depth and significant error accumulation on quantum hardware.


We consider rational transformations that excel at approximating functions with discontinuities and singularities, and extend beyond the capabilities of polynomial transformations 
as a subclass~\cite{Tr19}. While polynomial-based algorithms often encounter difficulties with singularities, rational algorithms provide a more robust and general solution, making them a promising alternative. For instance, a rational approximation to the absolute value function achieves fast convergence at a root exponential rate, \textit{i.e.}, the error decays as $e^{-\mathcal{O}(\sqrt{K})}$, where $K$ is the order of the rational approximant.
Techniques such as the rational Krylov projection and interpolation methods~\cite{VBMeMi15,NaSeTr18,CaMeVa19,AnBeGu20} have proven remarkably successful in classical numerical linear algebra. Nevertheless, their potential in quantum computation remains largely unexplored. This is partly because a rational approximation necessitates the computation of resolvents, a process typically reliant on quantum linear solvers that are resource intensive. Unlike previous work on matrix function evaluation~\cite{Tong2021} that assumes a block-encoding model for accessing the operator inverse, here we exploit different kernel representations of the resolvent. These representations can be constructed with Hamiltonian simulations of reasonably short durations on both digital and analog platforms.

We develop an approach based on real-time evolution for implementing the resolvents and thus rational functions. The main ingredient of our approach is time evolution under an effective Hamiltonian. Since a rational function can always be factorized into a sum of partial fractions (sum of resolvents), our key algorithmic primitive is the linear-combination-of-Hamiltonian-simulations (LCHS)~\cite{LCU2012,LCU2023}, specifically linear-combination-of-Hamiltonian-simulations which in general can be implemented with near-optimal query complexity~\cite{an2023optimalcost,an2023quantumalgorithmlinearnonunitary}. We show that both the maximal and total runtime of our Hamiltonian simulations can be bounded through proper selection of non-uniform time samples. Using a bosonic ancillary degree of freedom~\cite{Wallraff2004,Pirkkalainen2013,Gershon2015,Andersen2015,Gan2020}, \textit{e.g.}, a harmonic oscillator, the duration of time evolution can even be maintained constant, albeit with the trade-off of more demanding ancilla state preparation. As Hamiltonian simulation is a native primitive for quantum computers and falls within the BQP complexity class~\cite{Nielsen_Chuang_2010}, our formalism provides a convenient toolkit to construct resolvents and general rational functions on quantum computers.

We examine the implications of the LCHS construction of resolvents and rational functions. First, we show how to implement a tight approximation of the signum function, a bounded discontinuous function, by querying the Hamiltonian simulation operator $e^{-iHt}$. We rigorously analyze for the first time $(i)$ efficient strategies to sample $t$ from a discrete-time perspective, and $(ii)$ an alternative formulation from a continuous-time perspective. We examine the conditions under which $(i)$ and $(ii)$ become applicable or desirable.
Second, we illustrate the application of rational transformations to the ground and excited state problem by constructing an effective spectral filter. For demonstration, we calculate the eigenenergies of representative spin systems and show that applying a rational filter can accelerate convergence.

The remainder of the paper is organized as follows. In \cref{sec:rationalTransforms}, we motivate the importance of rational functions for approximating matrix functions and set up the mathematical basis for the techniques used in this paper. \cref{sec:resolvent_discrete,sec:resolvent_continuous,sec:filter} contain our main technical results. In \cref{sec:resolvent_discrete,sec:resolvent_continuous}, we examine LCHS schemes for efficiently constructing a single resolvent. We introduce complementary discrete- and continuous-time strategies derived from suitable kernel representations of the resolvent. Using resolvents as building blocks, we discuss the construction of quantum rational transformations in \cref{sec:filter}. In particular, we present recipes to construct effective rational approximations to non-smooth functions, where we will focus on the elementary yet versatile signum function. In \cref{sec:numerics_groundExcitedState}, we apply quantum rational transformations to implement a spectral filter from scratch, which allows us to accurately extract the ground and low-lying excited state energies of a many-body system. We conclude in \cref{sec:conclusions}.
\section{Rational transformations and resolvents}
\label{sec:rationalTransforms}

Polynomials are functions in some variable $z$ of the form $p(z) = \alpha_0 + \alpha_1 z + \cdots + \alpha_k z^k$, with $\alpha_i\in\mathbb{C}$.
Their analytical properties are well known and they can be efficiently manipulated on a classical computer, making them one of the most important building blocks for classical algorithms. 
This is exemplified, \textit{e.g.}, by Krylov subspace approaches for solving eigenvalue problems \cite{La50,Ar51} and matrix function approximations \cite{HoLu97}.
In the case of approximating a function $f$ applied to a Hermitian matrix $H\in\mathbb{C}^{N\times N}$, Krylov methods can be interpreted as finding a good polynomial approximation such that $f(H)\approx p(H)$.
In recent years, quantum algorithms based on polynomials have been extensively studied. For example, QSP \cite{Gilyen2019QSVT} and QSVT \cite{Low2017QSP} transform the input matrix using Chebyshev polynomials while quantum subspace methods \cite{motta2020determining,shen2023estimating,kirby2023exact} rely on similar polynomial approximation ideas.

For certain problems, such as the computation of interior eigenvalues or a matrix function approximation for a function that admits a branch cut or a singularity, polynomial-based algorithms might be unable to provide accurate approximations or require a high degree polynomial to achieve a reasonable accuracy. 
Even in the latter case, high degree polynomials can be computationally prohibitive to construct and to manipulate in subsequent tasks.
For these classes of problems, algorithms based on rational functions can reduce the computational cost significantly \cite{Newman1964,DrKn98, Ruhe1984,VBMeMi15}.
A rational function is the ratio of two polynomials, $r(z) = \frac{\beta_0 + \beta_1 z + \cdots + \beta_\ell z^\ell}{\gamma_0 + \gamma_1 z + \cdots + \gamma_k z^k}$, with $\beta_i,\gamma_j\in\mathbb{C}$ and integers $k,\ell\geq 0$.
For the evaluation of a rational function of a matrix, one typically represents the rational function in its partial fraction form $r(z) = p(z) + \frac{c_1}{z-z_1} + \frac{c_2}{z-z_2} + \cdots + \frac{c_k}{z-z_k}$, where $c_k\in \mathbb{C}$ are scalars, $z_i\in\mathbb{C}$ are poles of the rational function and $p(z)$ is of degree $\ell-k+1$, if $\ell\geq k$, and equal to zero, $p(z)\equiv 0$, otherwise.
For simplicity we assumed that all the poles are distinct, \textit{i.e.}, $z_i\neq z_j$ for $i\neq j$, terms of the form $\frac{1}{(z-z_i)^m}$ can be introduced to the partial fraction in the case of repeated poles. In recent quantum computing literature, rational functions were used to approximate matrix functions \cite{TaOhSoUs20, TaOhSoUs22,Tong2021} and to solve eigenvalue problems \cite{FuYeSa21}.

In \Cref{subsec:ratApprox}, we first review that contour integration naturally leads to a rational approximant of a target function. We emphasize that in this paper we \emph{do not} restrict ourselves to only rational approximations obtained via a contour integration. Instead, any rational function can be generated through a partial fraction representation.
In \cref{subsec:quantumRep}, we discuss an integral representation of the terms appearing in a partial fraction form of a rational matrix function.
Two strategies for approximating integrals are provided in \cref{sec:approx_integral}.
In \cref{sec:realtime_resolvent}, we discuss how the approximation strategies can be used to express a rational matrix function in terms of unitary matrices. 
More precisely, these unitary matrices correspond to real-time evolutions under a Hamiltonian (Hamiltonian simulations), especially suited for efficient simulation on quantum hardware.

\subsection{Classical rational approximation}
\label{subsec:ratApprox}

Cauchy's integral formula is a central theorem in complex analysis and allows us to write every scalar function $f:\Omega \rightarrow \mathbb{C}$ that is analytic over a simply connected region $\Omega \subseteq \mathbb{C}$, as a contour integral over the boundary of $\Omega$.
Numerical integration of Cauchy's contour integral formulation naturally yields the following rational approximation
\begin{align}
    f(\omega) = \frac{1}{2\pi i} \oint_{\partial \Omega} dz \frac{f(z)}{z - \omega}  \approx \sum_{k=0}^{K-1} \frac{c_k}{z_k - \omega} =: r(\omega), \qquad \omega \in \Omega, \quad  z_k \in \partial \Omega,\label{eq:Cauchy_integral_formula}
\end{align}
where the \emph{contour} $\partial \Omega$ denotes the boundary of $\Omega$ and $\{z_k\}_{k=0}^{K-1}$ form a suitable discretization of the contour together with weights $\{c_k\}_{k=0}^{K-1}$ .
The rational approximation, $r$, is thus represented in its partial fraction form.
For matrix functions, we have
\begin{align}
   f(H) \approx \sum_{k=0}^{K-1} {c_k}{(z_k - H)^{-1}} = r(H), \qquad z_k \in \partial \Omega,
   \label{eq:Cauchy_integral_discretized_matrix}
\end{align}
where $\partial \Omega$ is chosen such that it encircles a subset of the spectrum of $H$ of interest. 
Typically, the whole spectrum of $H$ is encircled \cite{Higham}, but for important special cases only a subset is encircled, e.g., for counting the number of eigenvalues in an interval \cite{NaPoSa16}.
The factors in the sum, $R(z_k) := (z_k-H)^{-1}$ are called the \emph{resolvents} and $z_k$ the poles of the corresponding resolvents.
For example, if we are interested in the action of $f$ on a subset of negative eigenvalues of $H$, a contour can be chosen and discretized accordingly to encircle only those eigenvalues, as illustrated in \cref{fig:ContourDiscretization}. 
\begin{figure}[!ht]
    \centering
    \setlength\figureheight{4cm}
    \setlength\figurewidth{8cm}	
%
~\hfill%
\begin{tikzpicture}
\begin{axis}[%
axis equal image,
height=\figureheight,
scale only axis,
xmin=-3,
xmax=2,
xtick={-3,-2,...,2},
ymin=-2,
ymax=2,
axis background/.style={fill=white}
]
\addplot [color=myred, forget plot]
  table[row sep=crcr]{%
0	0\\
-0.0020133	0.063424\\
-0.0080452	0.12659\\
-0.018071	0.18925\\
-0.032051	0.25115\\
-0.049929	0.31203\\
-0.071632	0.37166\\
-0.097073	0.42979\\
-0.12615	0.4862\\
-0.15875	0.54064\\
-0.19473	0.59291\\
-0.23396	0.64279\\
-0.27627	0.69008\\
-0.32149	0.73459\\
-0.36945	0.77615\\
-0.41994	0.81458\\
-0.47277	0.84973\\
-0.52773	0.88145\\
-0.58458	0.90963\\
-0.64311	0.93415\\
-0.70308	0.9549\\
-0.76424	0.97181\\
-0.82635	0.98481\\
-0.88916	0.99384\\
-0.95242	0.99887\\
-1.0159	0.99987\\
-1.0792	0.99685\\
-1.1423	0.98982\\
-1.2048	0.9788\\
-1.2665	0.96384\\
-1.3271	0.945\\
-1.3863	0.92235\\
-1.4441	0.89599\\
-1.5	0.86603\\
-1.5539	0.83257\\
-1.6056	0.79576\\
-1.6549	0.75575\\
-1.7015	0.71269\\
-1.7453	0.66677\\
-1.7861	0.61816\\
-1.8237	0.56706\\
-1.858	0.51368\\
-1.8888	0.45823\\
-1.9161	0.40093\\
-1.9397	0.34202\\
-1.9595	0.28173\\
-1.9754	0.22031\\
-1.9874	0.158\\
-1.9955	0.095056\\
-1.9995	0.031728\\
-1.9995	-0.031728\\
-1.9955	-0.095056\\
-1.9874	-0.158\\
-1.9754	-0.22031\\
-1.9595	-0.28173\\
-1.9397	-0.34202\\
-1.9161	-0.40093\\
-1.8888	-0.45823\\
-1.858	-0.51368\\
-1.8237	-0.56706\\
-1.7861	-0.61816\\
-1.7453	-0.66677\\
-1.7015	-0.71269\\
-1.6549	-0.75575\\
-1.6056	-0.79576\\
-1.5539	-0.83257\\
-1.5	-0.86603\\
-1.4441	-0.89599\\
-1.3863	-0.92235\\
-1.3271	-0.945\\
-1.2665	-0.96384\\
-1.2048	-0.9788\\
-1.1423	-0.98982\\
-1.0792	-0.99685\\
-1.0159	-0.99987\\
-0.95242	-0.99887\\
-0.88916	-0.99384\\
-0.82635	-0.98481\\
-0.76424	-0.97181\\
-0.70308	-0.9549\\
-0.64311	-0.93415\\
-0.58458	-0.90963\\
-0.52773	-0.88145\\
-0.47277	-0.84973\\
-0.41994	-0.81458\\
-0.36945	-0.77615\\
-0.32149	-0.73459\\
-0.27627	-0.69008\\
-0.23396	-0.64279\\
-0.19473	-0.59291\\
-0.15875	-0.54064\\
-0.12615	-0.4862\\
-0.097073	-0.42979\\
-0.071632	-0.37166\\
-0.049929	-0.31203\\
-0.032051	-0.25115\\
-0.018071	-0.18925\\
-0.0080452	-0.12659\\
-0.0020133	-0.063424\\
0	-2.4493e-16\\
};
\addplot [color=mygreen, draw=none, mark size=1.7pt, mark=*, mark options={solid, mygreen}, forget plot]
  table[row sep=crcr]{%
0.27484	0\\
};
\addplot [color=mygreen, draw=none, mark size=1.7pt, mark=*, mark options={solid, mygreen}, forget plot]
  table[row sep=crcr]{%
-1.5447	0\\
};
\addplot [color=mygreen, draw=none, mark size=1.7pt, mark=*, mark options={solid, mygreen}, forget plot]
  table[row sep=crcr]{%
1.4238	0\\
};
\addplot [color=mygreen, draw=none, mark size=1.7pt, mark=*, mark options={solid, mygreen}, forget plot]
  table[row sep=crcr]{%
0.46059	0\\
};
\addplot [color=mygreen, draw=none, mark size=1.7pt, mark=*, mark options={solid, mygreen}, forget plot]
  table[row sep=crcr]{%
0.20276	0\\
};
\addplot [color=mygreen, draw=none, mark size=1.7pt, mark=*, mark options={solid, mygreen}, forget plot]
  table[row sep=crcr]{%
0.77287	0\\
};
\addplot [color=mygreen, draw=none, mark size=1.7pt, mark=*, mark options={solid, mygreen}, forget plot]
  table[row sep=crcr]{%
0.39211	0\\
};
\addplot [color=mygreen, draw=none, mark size=1.7pt, mark=*, mark options={solid, mygreen}, forget plot]
  table[row sep=crcr]{%
-1.4122	0\\
};
\addplot [color=mygreen, draw=none, mark size=1.7pt, mark=*, mark options={solid, mygreen}, forget plot]
  table[row sep=crcr]{%
-0.24351	0\\
};
\addplot [color=mygreen, draw=none, mark size=1.7pt, mark=*, mark options={solid, mygreen}, forget plot]
  table[row sep=crcr]{%
-0.21908	0\\
};
\end{axis}
\end{tikzpicture}%
\hfill%
\begin{tikzpicture}
\begin{axis}[%
axis equal image,
height=\figureheight,
scale only axis,
xmin=-3,
xmax=2,
xtick={-3,-2,...,2},
ymin=-2,
ymax=2,
axis background/.style={fill=white}
]
\addplot [color=myred, forget plot]
  table[row sep=crcr]{%
0	0\\
-0.0020133	0.063424\\
-0.0080452	0.12659\\
-0.018071	0.18925\\
-0.032051	0.25115\\
-0.049929	0.31203\\
-0.071632	0.37166\\
-0.097073	0.42979\\
-0.12615	0.4862\\
-0.15875	0.54064\\
-0.19473	0.59291\\
-0.23396	0.64279\\
-0.27627	0.69008\\
-0.32149	0.73459\\
-0.36945	0.77615\\
-0.41994	0.81458\\
-0.47277	0.84973\\
-0.52773	0.88145\\
-0.58458	0.90963\\
-0.64311	0.93415\\
-0.70308	0.9549\\
-0.76424	0.97181\\
-0.82635	0.98481\\
-0.88916	0.99384\\
-0.95242	0.99887\\
-1.0159	0.99987\\
-1.0792	0.99685\\
-1.1423	0.98982\\
-1.2048	0.9788\\
-1.2665	0.96384\\
-1.3271	0.945\\
-1.3863	0.92235\\
-1.4441	0.89599\\
-1.5	0.86603\\
-1.5539	0.83257\\
-1.6056	0.79576\\
-1.6549	0.75575\\
-1.7015	0.71269\\
-1.7453	0.66677\\
-1.7861	0.61816\\
-1.8237	0.56706\\
-1.858	0.51368\\
-1.8888	0.45823\\
-1.9161	0.40093\\
-1.9397	0.34202\\
-1.9595	0.28173\\
-1.9754	0.22031\\
-1.9874	0.158\\
-1.9955	0.095056\\
-1.9995	0.031728\\
-1.9995	-0.031728\\
-1.9955	-0.095056\\
-1.9874	-0.158\\
-1.9754	-0.22031\\
-1.9595	-0.28173\\
-1.9397	-0.34202\\
-1.9161	-0.40093\\
-1.8888	-0.45823\\
-1.858	-0.51368\\
-1.8237	-0.56706\\
-1.7861	-0.61816\\
-1.7453	-0.66677\\
-1.7015	-0.71269\\
-1.6549	-0.75575\\
-1.6056	-0.79576\\
-1.5539	-0.83257\\
-1.5	-0.86603\\
-1.4441	-0.89599\\
-1.3863	-0.92235\\
-1.3271	-0.945\\
-1.2665	-0.96384\\
-1.2048	-0.9788\\
-1.1423	-0.98982\\
-1.0792	-0.99685\\
-1.0159	-0.99987\\
-0.95242	-0.99887\\
-0.88916	-0.99384\\
-0.82635	-0.98481\\
-0.76424	-0.97181\\
-0.70308	-0.9549\\
-0.64311	-0.93415\\
-0.58458	-0.90963\\
-0.52773	-0.88145\\
-0.47277	-0.84973\\
-0.41994	-0.81458\\
-0.36945	-0.77615\\
-0.32149	-0.73459\\
-0.27627	-0.69008\\
-0.23396	-0.64279\\
-0.19473	-0.59291\\
-0.15875	-0.54064\\
-0.12615	-0.4862\\
-0.097073	-0.42979\\
-0.071632	-0.37166\\
-0.049929	-0.31203\\
-0.032051	-0.25115\\
-0.018071	-0.18925\\
-0.0080452	-0.12659\\
-0.0020133	-0.063424\\
0	-2.4493e-16\\
};
\addplot [color=mygreen, draw=none, mark size=1.7pt, mark=*, mark options={solid, mygreen}, forget plot]
  table[row sep=crcr]{%
0.27484	0\\
};
\addplot [color=mygreen, draw=none, mark size=1.7pt, mark=*, mark options={solid, mygreen}, forget plot]
  table[row sep=crcr]{%
-1.5447	0\\
};
\addplot [color=mygreen, draw=none, mark size=1.7pt, mark=*, mark options={solid, mygreen}, forget plot]
  table[row sep=crcr]{%
1.4238	0\\
};
\addplot [color=mygreen, draw=none, mark size=1.7pt, mark=*, mark options={solid, mygreen}, forget plot]
  table[row sep=crcr]{%
0.46059	0\\
};
\addplot [color=mygreen, draw=none, mark size=1.7pt, mark=*, mark options={solid, mygreen}, forget plot]
  table[row sep=crcr]{%
0.20276	0\\
};
\addplot [color=mygreen, draw=none, mark size=1.7pt, mark=*, mark options={solid, mygreen}, forget plot]
  table[row sep=crcr]{%
0.77287	0\\
};
\addplot [color=mygreen, draw=none, mark size=1.7pt, mark=*, mark options={solid, mygreen}, forget plot]
  table[row sep=crcr]{%
0.39211	0\\
};
\addplot [color=mygreen, draw=none, mark size=1.7pt, mark=*, mark options={solid, mygreen}, forget plot]
  table[row sep=crcr]{%
-1.4122	0\\
};
\addplot [color=mygreen, draw=none, mark size=1.7pt, mark=*, mark options={solid, mygreen}, forget plot]
  table[row sep=crcr]{%
-0.24351	0\\
};
\addplot [color=mygreen, draw=none, mark size=1.7pt, mark=*, mark options={solid, mygreen}, forget plot]
  table[row sep=crcr]{%
-0.21908	0\\
};
\addplot [color=myblue, draw=none, mark size=5.0pt, mark=+, mark options={solid, myblue}, forget plot]
  table[row sep=crcr]{%
-0.29289	0.70711\\
-1	1\\
-1.7071	0.70711\\
-1.7071	-0.70711\\
-1	-1\\
-0.29289	-0.70711\\
};
\addplot [color=black, draw=none, mark size=5.0pt, mark=asterisk, mark options={solid, black}, forget plot]
  table[row sep=crcr]{%
0	0\\
-2	1.2246e-16\\
};
\end{axis}
\end{tikzpicture}%
\hfill~    
    
    \caption{Illustration of numerical contour integration. The eigenvalues are marked as (\greencircle) and a contour, in red, encircles the negative eigenvalues. On the right, a discretization of the contour is shown with $K =8$ equidistant points on the circle $z_k = -1 + \cos(2 k \pi/8)+i\sin(2 k\pi/8)$. 
    This leads to 6 complex poles (${\color{myblue} + }$) and 2 real poles ($\ast$), \textit{i.e.}, 6 resolvents to be computed for $z_k\notin \mathbb{R}$ and 2 resolvents for $z_k\in \mathbb{R}$.}
    \label{fig:ContourDiscretization}
\end{figure}
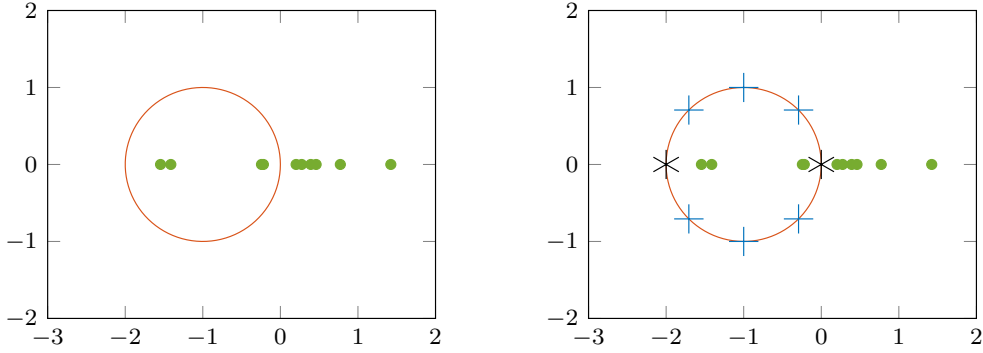

\subsection{Quantum-native representation of resolvents}
\label{subsec:quantumRep}

Although the resolvent is typically non-unitary, we aim to evaluate the resolvent using, ideally simple, unitary circuits that can be evaluated on a quantum computer. In this work, we consider Hamiltonian simulations $e^{-iHt}$ as the central algorithmic ingredient due to its feasibility on both digital and analog quantum platforms. Specifically, we can represent the resolvent in terms of unitary time evolution circuits through the following integral transformation
\begin{align}
    &R(z_k) = (z_k - H)^{-1} =  \int_{0}^{\infty} dq \int_{-\infty}^{\infty} dy \, \kernel_{k}(q,y) e^{iy(z_k - H)q},
    \label{eq:resolvent_unitary}
\end{align}
defined by some suitable choice of the integral kernel $\kernel_k: \mathbb{R} \times \mathbb{R} \rightarrow \mathbb{C}$ \cite{ChKoSo17,KeDuWa21,Yo95}. We immediately observe that the term $e^{iy(z_k-H)q} = e^{iyz_k q} e^{-iHyq}$ can be implemented by a time evolution of duration $t=yq$ under the Hamiltonian $H$. 
As the cost of Hamiltonian simulation scales with simulation time,
we seek a kernel for which $\kernel_k(q,y) e^{iy z_k q}$ decays sufficiently rapidly. In addition to satisfying \cref{eq:resolvent_unitary}, this requirement can impose further restrictions on the functional form of the kernel.

The distinction between poles off the real line (complex poles) and on the real line (real poles) turns out to be essential, and we will adopt different integral representations for these two separate cases.
For a complex  $z_k = a_k + i b_k$, with $a_k,b_k\in \mathbb{R}$, the imaginary part $i b_k$ leads to an exponentially decaying integrand. This means that a Dirac-delta kernel $\kernel_{k}(q,y) = - i \delta(y-1)$ can be employed for the case $\Im z_k=b_k > 0$, which reduces the double integral in \cref{eq:resolvent_unitary} to a single integral,
\begin{align}
    R(z_k) = - i \int_{0}^{\infty} dq \,  e^{-b_k q} \, e^{i(a_k - H)q}, \qquad z_k = a_k + ib_k,
    \label{eq:resolvent_I}
\end{align}
equivalent to the Laplace transform of the matrix inverse (see also \cite[Chapter 9]{Yo95}).
The restriction to $\Im z_k > 0$ is without loss of generality, since for a pole $z_k$ with $\Im z_k < 0$ we can use the identity $R^{\dagger}(z_k) = R(z_k^\ast)$, where $\dagger$ and $\ast$ denote the Hermitian and complex conjugation, respectively.

For a real pole $z_k = a_k\in\mathbb{R}$, the integrand in \cref{eq:resolvent_unitary} picks up a purely oscillatory phase without any decay. In this case, a rapid exponential decay can be introduced by choosing an appropriate Gaussian kernel~\cite{ChKoSo17}. 
Specifically, when the pole is outside the spectral range, \textit{i.e.}, $z_k = a_k \in \mathbb{R} \setminus [E_0, E_{N-1}]$ with $E_0$ and $E_{N-1}$ being the smallest and largest eigenvalues of $H$, the Gaussian kernel,
    $\kernel_{k}(y) = \frac{1}{\pi} {\rm sgn}(a_k - E_0) e^{-y^2/2}$,
leads to the double integral formulation,
\begin{align}
    R(z_k) = \frac{1}{\pi} {\rm sgn}( a_k - E_0 ) \int_{0}^{\infty} dq \, \int_{-\infty}^{\infty} dy \, e^{-y^2/2} e^{iy(a_k - H)q}.
    \label{eq:resolvent_II}
\end{align}
We remark that when the pole does fall in the spectral range, \textit{i.e.}, $a_k \in [E_0, E_{N-1}]$ and $a_k\neq E_n$ for $n=0,1,\dots,N-1$, a modified Gaussian kernel $\frac{i}{2\pi} q e^{-y^2/2}$ with an additional linear factor $q$ realizes a suitable resolvent representation. We will not discuss this case further, but the results presented in the paper can be readily generalized.

\subsection{Approximating the integral representation}
\label{sec:approx_integral}
A quantum-native representation of resolvents requires the computation or approximation of the integral in \cref{eq:resolvent_unitary}. We consider approximations using the generic integral,
\begin{align}
		\int_{a}^{b}dt \, f(t) \weight(t),
  \label{eq:classical_integral}
\end{align}
where $f(t)$ is the integrand function, $\weight(t)$ is the weight function, and $a,b\in \mathbb{R}$ define the integration bounds. The weight function might seem redundant at first glance, but a good pairing between $f(t)$ and $\weight(t)$ will significantly speed up the convergence of an approximation to the integral. We focus on two complementary approaches to approximate integrals: (i) the \emph{discrete approach} sums $f(t)$ at a finite set of time points, and (ii) the \emph{continuous approach} evaluates the integral of the expansion of $f(t)$ in a finite set of basis functions.

The discrete approach relies on an optimal quadrature rule \cite{DaRa84,Ga81}.
A quadrature rule consists of a (well-chosen) set of $J$ samples $\{t_j\}_{j=0}^{J-1}$ and accompanying weights $\{w_j\}_{j=0}^{J-1}$ such that
\begin{align}
		\int_{a}^{b}dt \, f(t) \weight(t) \approx  \sum_{j=0}^{J-1} w_j f(t_j), \qquad t_j\in[a,b],
  \label{eq:quadRule}
\end{align}
is a good approximation. For $\weight(t)\equiv 1$, Newton proposed this approximation for equidistant time samples and Cotes derived a closed-form formula for the weights \cite{Ga81}. Implicitly, this approach is underpinned by the polynomial interpolation idea. A polynomial $p_{J-1}(t)$ interpolating $f(t)$ at $J$ samples $\{t_j\}_{j=0}^{J-1}$ can be constructed and the weights $\{w_j\}_{j=0}^{J-1}$ are obtained from $\int_{a}^{b} p_{J-1}(t) dt$, since polynomials can be integrated analytically. The Newton--Cotes formula, with fixed samples, generally provides an approximation exact for all polynomials up to degree $J-1$.
    Later, Gauss showed that by adjusting the time samples based on the weight function $\weight(t)$ and interval $[a,b]$, it is possible to achieve an approximation exact for all polynomials up to degree $2J-1$ with only $J$ nodes and weights \cite{Ga81}, which is optimal. This insight underscores the importance of choosing the right nodes and weights for a specific integral. Because of the (implicit) polynomial interpolation, the efficiency of our discrete approach depends on how well the function $f(t)$ is approximated by a polynomial. Therefore, different choices of $f(t)$ and $\weight(t)$ can influence the rate of convergence.

    In the continuous approach, instead of implicitly interpolating the integrand $f(t)$ according to the weight function $\weight(t)$, we explicitly expand the weight function itself in a finite basis. In particular, we consider the case that $\weight(t)$ decays sufficiently rapidly, as enabled by the use of Dirac-delta or Gaussian kernels discussed within \cref{subsec:quantumRep}. In this case, a convenient choice of basis functions are the Gaussian functions $e^{- t^2/2\Gamma}$, where the localization parameter $\Gamma$ determines the width of the Gaussian function.
    For a weight function expressed as a linear combination of Gaussian kernels, $\weight(t) = \sum_{j=1}^{G} \alpha_j e^{-t^2/2 \Gamma_j}$, \cref{eq:classical_integral} can be evaluated as a sum $\sum_{j=1}^{G} \alpha_j I_j$, given that the $G$ basis integrals $I_j = \int_a^b dt \, f(t) e^{-t^2/2\Gamma_j}$ can be evaluated efficiently. On the other hand, if the coefficients in our Gaussian basis are not discrete but rather described by a continuous probability distribution, \textit{i.e.}, $\weight(t) \propto \int_{0}^{\infty} d\Gamma \, \rho(\Gamma) e^{-t^2/2\Gamma}$ for a normalized probability density $\rho(\Gamma)$, we may take an additional approximation step to obtain a weighted sum of basis integrals $I_j$. For example, Monte-Carlo sampling \cite{Go16} leads to an unbiased approximation,
    \begin{align}\label{eq:cont_idea}
 		\int_{a}^{b} dt \, f(t) \weight(t) = 		\int_{a}^{b} dt \, f(t) \int_{0}^{\infty} d\Gamma \, \rho(\Gamma) e^{-t^2/2\Gamma} \approx \frac{1}{G} \sum_{j=1}^{G} I_j 
    \end{align}
    where we sample the Gaussian widths $\{\Gamma_j\}_{j=1}^{G}$ i.i.d.~from the density $\rho$. The efficiency of such a continuous approach hinges on both the specific sampling procedure and the ease with which we can evaluate the basis integrals $I_j$.

\subsection{Approximation of resolvents via real-time evolution}
\label{sec:realtime_resolvent}
The integral representation of resolvents can be accurately approximated with both discrete and continuous approaches.
Here, we discuss the general form of these approximations and later detail their quantum-efficient constructions in \cref{sec:resolvent_discrete,sec:resolvent_continuous}, respectively.

\paragraph{Discrete-time approach.} 
Using the quadrature discretization in \cref{eq:quadRule}, the resolvent \eqref{eq:resolvent_unitary} can be computed by evaluating real time evolutions $e^{-iHt_j}$ at a set of discrete times $\{t_j\}_{j}$.
While the trapezoidal rule is commonly suggested in the literature~\cite{KeDuWa21,ChKoSo17}, we propose different quadrature rules that deliver significantly more efficient approximations.

For any 
rational function $r$ in partial fraction form, we can thus write it as a linear combination of Hamiltonian simulations by bootstrapping on the resolvent decomposition, i.e. write down an expansion of the form,
\begin{align}
    r(H) \approx \sum_{j=0}^{J-1} x_{j,d} e^{-iHt_j},\qquad x_{j,d} = x_{j,d}(\boldsymbol{\kappa}, \boldsymbol{c},\boldsymbol{z})~{\rm and}~t_j = t_j(\boldsymbol{c},\boldsymbol{z}),
    \label{eq:rational_LCU}
\end{align}
where $t_j$ gives the simulation time and $x_{j,d}$ the associated `weight' (with the subscript
$d$ indicating discrete-time). The bold symbols in \cref{eq:rational_LCU} represent vector-valued quantities of their respective dimensions. For example, $\bc = (c_0, \ldots, c_{K-1})$ and $\bz = (z_0, \ldots, z_{K-1})$ denote the sets of weights and poles in the approximation \cref{eq:Cauchy_integral_formula} of the rational function. Similarly, we denote our time grid as $\bt = (t_0, \ldots, t_{J-1})$ with
$t_j \leq t_{j+1}$.

\paragraph{Continuous-time approach.} In order to avoid explicit time discretization, we explore alternative Hamiltonian simulations based on \cref{eq:cont_idea}. An exact representation of the resolvent is decomposed into integral components, which are all efficiently constructible using a simple Gaussian basis set as discussed in \cref{sec:approx_integral}. We will show that the Gaussian basis can be physically realized with continuous-variable ancillary wavefunctions~\cite{LCU2023,liu2024hybrid}. Analogous to \cref{eq:rational_LCU}, a rational function can be implemented as a linear combination of real-time evolutions that couple the system and Gaussian ancillae,
\begin{equation}
     r(H) \approx \sum_{j=1}^{G} x_{j,c} \left( {\rm Id} \otimes \bra{0_j}_{c} \right) e^{-i\Tilde{H}} \left( {\rm Id} \otimes \ket{0_j}_{c} \right), \qquad x_{j,c} = x_{j,c}(\boldsymbol{\kappa}, \boldsymbol{c},\boldsymbol{z}),
    \label{eq:rational_LCU2}
\end{equation}
where $\ket{0_j(\boldsymbol{c},\boldsymbol{z})}_{c}$ denotes a Gaussian ancillary state of some characteristic width, $G$ is the total number of such Gaussian states, and
$\Tilde{H}$ describes an effective Hamiltonian that acts jointly on the system and ancillae (${\rm Id}$ is the identity operator on the system Hilbert space). Notably, the effective Hamiltonian is evolved for a unit duration, which can be simulated directly without time sampling. 

\paragraph{Block-encoding and quantum rational transformation.} In both the discrete- and continuous-time approaches, we note that the Hamiltonian simulations can be either performed individually, or combined coherently in a single quantum circuit through block-encoding (BE)~\cite{LCU2012,Berry2015}, a standard routine to represent non-unitary operators on a quantum computer. In addition to time evolution, a block-encoding circuit utilizes two important subroutines, the prepare (PREP) and select (SEL) oracles. For example in the discrete-time approach, the PREP oracle encodes the coefficients $x_j$ into a superposition state $\sum_{j}x_j\ket{j}$ on the auxiliary qubits, and the SEL oracle selectively applies Hamiltonian simulations $e^{-iHt_j}$ conditioned on the auxiliary index $j$. We write $(\alpha, m, \epsilon)$-BE for a block encoding $U_{\rm BE}$ which uses a renormalization factor $\alpha$, requires $m$ ancilla qubits, and incurs additive error $\epsilon$, \textit{i.e.}, $\lVert r(H) - \alpha \bra{0^m} U_{\rm BE} \ket{0^m} \Vert_2 \leq \epsilon$. The connection to BE in the continuous-time setting is further explored in \cref{sec:resolvent_continuous}.

Overall, we will refer to $r(H)$ as a \textit{quantum rational transformation} (QRT) when a rational function $r(\omega)$ is applied to a Hermitian operator $H$ on quantum hardware using techniques such as those exemplified above.

\section{Improved discrete-time LCHS construction of a resolvent}
\label{sec:resolvent_discrete}

The choice of quadrature rules for approximating the integral transformation \cref{eq:resolvent_unitary} is essential for the development of efficient quantum algorithms based on resolvents.
On one hand, the quadrature rule determines the rate of convergence of the approximation in \cref{eq:quadRule} while, on the other hand, it prescribes the Hamiltonian simulations that must be performed on quantum hardware.
\cref{fig:Legendre} outlines the discrete-time LCHS approach for computing a resolvent $R(z)=(z-H)^{-1}$ with $z\notin \mathbb{R}$. 

\begin{figure*}[htbp]
    \centering
    \includegraphics[scale=0.425]{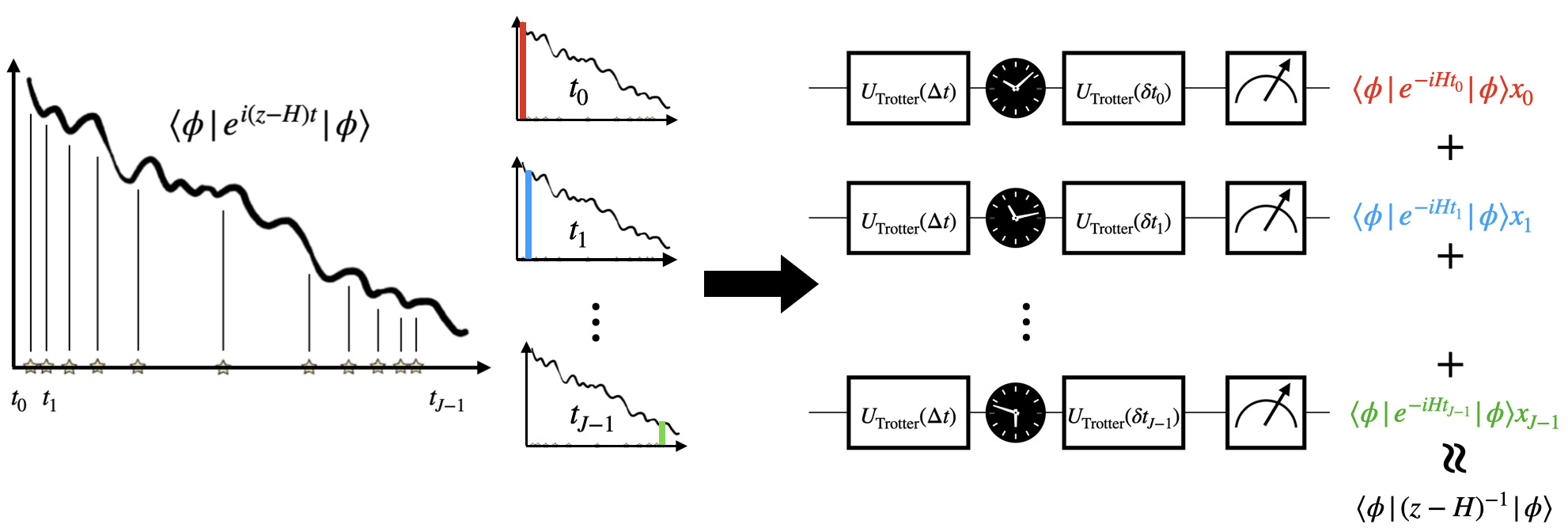}
    \caption{Discrete-time construction of a resolvent $R(z) = (z-H)^{-1}$, with a complex pole $z=a+bi$, via a quadrature rule with nodes $\{t_j\}_{j}$ and
    weights $\{w_j\}_{j}$, \textit{i.e.}, $R(z) \approx \sum_{j} w_j e^{(ia-b)t_j} e^{-iHt_j}$. On quantum hardware the Hamiltonian simulations $U(t_j) = e^{-iHt_j}$ can be performed efficiently. The circuit diagram depicts Trotterized time evolution with multiple fixed steps $\Delta t$ and a final variable step $\delta t_j$, so that $U_{\rm Trotter}(\delta t_j) U_{\rm Trotter}(\Dt)^{M_j} \approx U(t_j)$ for  $t_j = M_j \Delta t + \delta t_j$. Repeated measurements in the computational basis enable us to access scalar quantities such as $\langle \phi \vert U_{\rm Trotter}(\delta t_j) U_{\rm Trotter}(\Dt)^{M_j} \vert \phi \rangle$, which can be collected and combined in post-processing to obtain the desired inner product $\langle \phi\vert R(z) \vert \phi \rangle$.
    }
\label{fig:Legendre}
\end{figure*}

Thus far, previous work has only considered the trapezoidal rule \cite{ChKoSo17,KeDuWa21,TaOhSoUs20,WaXiZh23}, which uses equidistant time nodes but can converge slowly for resolvent approximation.
To address this, we propose using quadrature rules with non-equidistant nodes, which significantly improve convergence.
In \cref{table:quadRules}, we list different rules considered in this work along with their key properties.
The second column shows the node distribution. 
The third and fourth columns quantify two cost metrics that depend on the node distribution. These metrics follow from classical results on orthogonal polynomials \cite{Sz75} and are paramount to our analysis.
For the trapezoidal rule, the weights are uniformly set at $w_j \equiv 1/J$. For the other rules, the nodes are the roots of orthogonal polynomials and the weights are chosen to optimally approximate the integrals from the last column \cite{Ga81}. These nodes and weights are known and can be obtained efficiently. The trapezoidal and Legendre rules employ a constant weight function $\weight(t)\equiv 1$, while the Laguerre and Hermite rules employ rapidly decaying weight functions, $\weight(t)=e^{-t}$ and $\weight(t)=e^{-t^2}$ respectively. These weight functions, in turn, influence the associated quadrature weights.

\begin{table}[tbh!]
\centering
\begin{tabular}{|c|c|c|c|c|}
  \hline
  \makecell{Quadrature \\ rule} & \makecell{Node distribution \\$\{t_j\}_{j=0}^{J-1}$}  &   \makecell{Maximal time \\ $ \max_j\vert t_j\vert$}   &   \makecell{Total time\\ $\sum_j \vert t_j\vert$} & Integral
  \\ \hline
  Trapezoidal & \centering
    \setlength\figureheight{0.01cm}
    \setlength\figurewidth{0.25\textwidth}	
%
\begin{tikzpicture}

\begin{axis}[%
width=0.951\figurewidth,
height=\figureheight,
at={(0\figurewidth,0\figureheight)},
scale only axis,
xmin=0,
xmax=5,
ymin=-1,
ymax=1,
tick style={draw=none},
ytick = \empty,
yticklabels = {},
xtick = {0, 5},
xticklabels = {0, $T$},
axis background/.style={fill=white}
]
\addplot [color=myblue, draw=none, mark=|, mark options={solid, myblue}, forget plot]
  table[row sep=crcr]{%
0	0\\
};
\addplot [color=myblue, draw=none, mark=|, mark options={solid, myblue}, forget plot]
  table[row sep=crcr]{%
0.26316	0\\
};
\addplot [color=myblue, draw=none, mark=|, mark options={solid, myblue}, forget plot]
  table[row sep=crcr]{%
0.52632	0\\
};
\addplot [color=myblue, draw=none, mark=|, mark options={solid, myblue}, forget plot]
  table[row sep=crcr]{%
0.78947	0\\
};
\addplot [color=myblue, draw=none, mark=|, mark options={solid, myblue}, forget plot]
  table[row sep=crcr]{%
1.0526	0\\
};
\addplot [color=myblue, draw=none, mark=|, mark options={solid, myblue}, forget plot]
  table[row sep=crcr]{%
1.3158	0\\
};
\addplot [color=myblue, draw=none, mark=|, mark options={solid, myblue}, forget plot]
  table[row sep=crcr]{%
1.5789	0\\
};
\addplot [color=myblue, draw=none, mark=|, mark options={solid, myblue}, forget plot]
  table[row sep=crcr]{%
1.8421	0\\
};
\addplot [color=myblue, draw=none, mark=|, mark options={solid, myblue}, forget plot]
  table[row sep=crcr]{%
2.1053	0\\
};
\addplot [color=myblue, draw=none, mark=|, mark options={solid, myblue}, forget plot]
  table[row sep=crcr]{%
2.3684	0\\
};
\addplot [color=myblue, draw=none, mark=|, mark options={solid, myblue}, forget plot]
  table[row sep=crcr]{%
2.6316	0\\
};
\addplot [color=myblue, draw=none, mark=|, mark options={solid, myblue}, forget plot]
  table[row sep=crcr]{%
2.8947	0\\
};
\addplot [color=myblue, draw=none, mark=|, mark options={solid, myblue}, forget plot]
  table[row sep=crcr]{%
3.1579	0\\
};
\addplot [color=myblue, draw=none, mark=|, mark options={solid, myblue}, forget plot]
  table[row sep=crcr]{%
3.4211	0\\
};
\addplot [color=myblue, draw=none, mark=|, mark options={solid, myblue}, forget plot]
  table[row sep=crcr]{%
3.6842	0\\
};
\addplot [color=myblue, draw=none, mark=|, mark options={solid, myblue}, forget plot]
  table[row sep=crcr]{%
3.9474	0\\
};
\addplot [color=myblue, draw=none, mark=|, mark options={solid, myblue}, forget plot]
  table[row sep=crcr]{%
4.2105	0\\
};
\addplot [color=myblue, draw=none, mark=|, mark options={solid, myblue}, forget plot]
  table[row sep=crcr]{%
4.4737	0\\
};
\addplot [color=myblue, draw=none, mark=|, mark options={solid, myblue}, forget plot]
  table[row sep=crcr]{%
4.7368	0\\
};
\addplot [color=myblue, draw=none, mark=|, mark options={solid, myblue}, forget plot]
  table[row sep=crcr]{%
5	0\\
};
\end{axis}
\end{tikzpicture}
  \\ \hline
  Legendre & \centering
    \setlength\figureheight{0.01cm}
    \setlength\figurewidth{0.25\textwidth}	
%
\begin{tikzpicture}

\begin{axis}[%
width=0.951\figurewidth,
height=\figureheight,
at={(0\figurewidth,0\figureheight)},
scale only axis,
xmin=0,
xmax=5,
ymin=-1,
ymax=1,
tick style={draw=none},
ytick = \empty,
yticklabels = {},
xtick = {0, 5},
xticklabels = {0, $T$},
axis background/.style={fill=white}
]
\addplot [color=myblue, draw=none, mark=|, mark options={solid, myblue}, forget plot]
  table[row sep=crcr]{%
0.017179	0\\
};
\addplot [color=myblue, draw=none, mark=|, mark options={solid, myblue}, forget plot]
  table[row sep=crcr]{%
0.09007	0\\
};
\addplot [color=myblue, draw=none, mark=|, mark options={solid, myblue}, forget plot]
  table[row sep=crcr]{%
0.21941	0\\
};
\addplot [color=myblue, draw=none, mark=|, mark options={solid, myblue}, forget plot]
  table[row sep=crcr]{%
0.40221	0\\
};
\addplot [color=myblue, draw=none, mark=|, mark options={solid, myblue}, forget plot]
  table[row sep=crcr]{%
0.63417	0\\
};
\addplot [color=myblue, draw=none, mark=|, mark options={solid, myblue}, forget plot]
  table[row sep=crcr]{%
0.90987	0\\
};
\addplot [color=myblue, draw=none, mark=|, mark options={solid, myblue}, forget plot]
  table[row sep=crcr]{%
1.2228	0\\
};
\addplot [color=myblue, draw=none, mark=|, mark options={solid, myblue}, forget plot]
  table[row sep=crcr]{%
1.5657	0\\
};
\addplot [color=myblue, draw=none, mark=|, mark options={solid, myblue}, forget plot]
  table[row sep=crcr]{%
1.9305	0\\
};
\addplot [color=myblue, draw=none, mark=|, mark options={solid, myblue}, forget plot]
  table[row sep=crcr]{%
2.3087	0\\
};
\addplot [color=myblue, draw=none, mark=|, mark options={solid, myblue}, forget plot]
  table[row sep=crcr]{%
2.6913	0\\
};
\addplot [color=myblue, draw=none, mark=|, mark options={solid, myblue}, forget plot]
  table[row sep=crcr]{%
3.0695	0\\
};
\addplot [color=myblue, draw=none, mark=|, mark options={solid, myblue}, forget plot]
  table[row sep=crcr]{%
3.4343	0\\
};
\addplot [color=myblue, draw=none, mark=|, mark options={solid, myblue}, forget plot]
  table[row sep=crcr]{%
3.7772	0\\
};
\addplot [color=myblue, draw=none, mark=|, mark options={solid, myblue}, forget plot]
  table[row sep=crcr]{%
4.0901	0\\
};
\addplot [color=myblue, draw=none, mark=|, mark options={solid, myblue}, forget plot]
  table[row sep=crcr]{%
4.3658	0\\
};
\addplot [color=myblue, draw=none, mark=|, mark options={solid, myblue}, forget plot]
  table[row sep=crcr]{%
4.5978	0\\
};
\addplot [color=myblue, draw=none, mark=|, mark options={solid, myblue}, forget plot]
  table[row sep=crcr]{%
4.7806	0\\
};
\addplot [color=myblue, draw=none, mark=|, mark options={solid, myblue}, forget plot]
  table[row sep=crcr]{%
4.9099	0\\
};
\addplot [color=myblue, draw=none, mark=|, mark options={solid, myblue}, forget plot]
  table[row sep=crcr]{%
4.9828	0\\
};
\end{axis}
\end{tikzpicture}
  \\ \hline
  Laguerre & \centering
    \setlength\figureheight{0.01cm}
    \setlength\figurewidth{0.25\textwidth}	
%
\begin{tikzpicture}

\begin{axis}[%
width=0.951\figurewidth,
height=\figureheight,
at={(0\figurewidth,0\figureheight)},
scale only axis,
xmin=0,
xmax=70,
ymin=-1,
ymax=1,
tick style={draw=none},
ytick = \empty,
yticklabels = {},
axis background/.style={fill=white}
]
\addplot [color=myblue, draw=none, mark=|, mark options={solid, myblue}, forget plot]
  table[row sep=crcr]{%
0.07054	0\\
};
\addplot [color=myblue, draw=none, mark=|, mark options={solid, myblue}, forget plot]
  table[row sep=crcr]{%
0.37213	0\\
};
\addplot [color=myblue, draw=none, mark=|, mark options={solid, myblue}, forget plot]
  table[row sep=crcr]{%
0.91658	0\\
};
\addplot [color=myblue, draw=none, mark=|, mark options={solid, myblue}, forget plot]
  table[row sep=crcr]{%
1.7073	0\\
};
\addplot [color=myblue, draw=none, mark=|, mark options={solid, myblue}, forget plot]
  table[row sep=crcr]{%
2.7492	0\\
};
\addplot [color=myblue, draw=none, mark=|, mark options={solid, myblue}, forget plot]
  table[row sep=crcr]{%
4.0489	0\\
};
\addplot [color=myblue, draw=none, mark=|, mark options={solid, myblue}, forget plot]
  table[row sep=crcr]{%
5.6152	0\\
};
\addplot [color=myblue, draw=none, mark=|, mark options={solid, myblue}, forget plot]
  table[row sep=crcr]{%
7.459	0\\
};
\addplot [color=myblue, draw=none, mark=|, mark options={solid, myblue}, forget plot]
  table[row sep=crcr]{%
9.5944	0\\
};
\addplot [color=myblue, draw=none, mark=|, mark options={solid, myblue}, forget plot]
  table[row sep=crcr]{%
12.039	0\\
};
\addplot [color=myblue, draw=none, mark=|, mark options={solid, myblue}, forget plot]
  table[row sep=crcr]{%
14.814	0\\
};
\addplot [color=myblue, draw=none, mark=|, mark options={solid, myblue}, forget plot]
  table[row sep=crcr]{%
17.949	0\\
};
\addplot [color=myblue, draw=none, mark=|, mark options={solid, myblue}, forget plot]
  table[row sep=crcr]{%
21.479	0\\
};
\addplot [color=myblue, draw=none, mark=|, mark options={solid, myblue}, forget plot]
  table[row sep=crcr]{%
25.452	0\\
};
\addplot [color=myblue, draw=none, mark=|, mark options={solid, myblue}, forget plot]
  table[row sep=crcr]{%
29.933	0\\
};
\addplot [color=myblue, draw=none, mark=|, mark options={solid, myblue}, forget plot]
  table[row sep=crcr]{%
35.013	0\\
};
\addplot [color=myblue, draw=none, mark=|, mark options={solid, myblue}, forget plot]
  table[row sep=crcr]{%
40.833	0\\
};
\addplot [color=myblue, draw=none, mark=|, mark options={solid, myblue}, forget plot]
  table[row sep=crcr]{%
47.62	0\\
};
\addplot [color=myblue, draw=none, mark=|, mark options={solid, myblue}, forget plot]
  table[row sep=crcr]{%
55.811	0\\
};
\addplot [color=myblue, draw=none, mark=|, mark options={solid, myblue}, forget plot]
  table[row sep=crcr]{%
66.524	0\\
};
\end{axis}
\end{tikzpicture}
  \\ \hline
  Hermite & \centering
    \setlength\figureheight{0.01cm}
    \setlength\figurewidth{0.25\textwidth}	
%
\begin{tikzpicture}

\begin{axis}[%
width=0.951\figurewidth,
height=\figureheight,
at={(0\figurewidth,0\figureheight)},
scale only axis,
xmin=-6,
xmax=6,
ymin=-1,
ymax=1,
tick style={draw=none},
ytick = \empty,
yticklabels = {},
axis background/.style={fill=white}
]
\addplot [color=myblue, draw=none, mark=|, mark options={solid, myblue}, forget plot]
  table[row sep=crcr]{%
-5.3875	0\\
};
\addplot [color=myblue, draw=none, mark=|, mark options={solid, myblue}, forget plot]
  table[row sep=crcr]{%
-4.6037	0\\
};
\addplot [color=myblue, draw=none, mark=|, mark options={solid, myblue}, forget plot]
  table[row sep=crcr]{%
-3.9448	0\\
};
\addplot [color=myblue, draw=none, mark=|, mark options={solid, myblue}, forget plot]
  table[row sep=crcr]{%
-3.3479	0\\
};
\addplot [color=myblue, draw=none, mark=|, mark options={solid, myblue}, forget plot]
  table[row sep=crcr]{%
-2.7888	0\\
};
\addplot [color=myblue, draw=none, mark=|, mark options={solid, myblue}, forget plot]
  table[row sep=crcr]{%
-2.255	0\\
};
\addplot [color=myblue, draw=none, mark=|, mark options={solid, myblue}, forget plot]
  table[row sep=crcr]{%
-1.7385	0\\
};
\addplot [color=myblue, draw=none, mark=|, mark options={solid, myblue}, forget plot]
  table[row sep=crcr]{%
-1.2341	0\\
};
\addplot [color=myblue, draw=none, mark=|, mark options={solid, myblue}, forget plot]
  table[row sep=crcr]{%
-0.73747	0\\
};
\addplot [color=myblue, draw=none, mark=|, mark options={solid, myblue}, forget plot]
  table[row sep=crcr]{%
-0.24534	0\\
};
\addplot [color=myblue, draw=none, mark=|, mark options={solid, myblue}, forget plot]
  table[row sep=crcr]{%
0.24534	0\\
};
\addplot [color=myblue, draw=none, mark=|, mark options={solid, myblue}, forget plot]
  table[row sep=crcr]{%
0.73747	0\\
};
\addplot [color=myblue, draw=none, mark=|, mark options={solid, myblue}, forget plot]
  table[row sep=crcr]{%
1.2341	0\\
};
\addplot [color=myblue, draw=none, mark=|, mark options={solid, myblue}, forget plot]
  table[row sep=crcr]{%
1.7385	0\\
};
\addplot [color=myblue, draw=none, mark=|, mark options={solid, myblue}, forget plot]
  table[row sep=crcr]{%
2.255	0\\
};
\addplot [color=myblue, draw=none, mark=|, mark options={solid, myblue}, forget plot]
  table[row sep=crcr]{%
2.7888	0\\
};
\addplot [color=myblue, draw=none, mark=|, mark options={solid, myblue}, forget plot]
  table[row sep=crcr]{%
3.3479	0\\
};
\addplot [color=myblue, draw=none, mark=|, mark options={solid, myblue}, forget plot]
  table[row sep=crcr]{%
3.9448	0\\
};
\addplot [color=myblue, draw=none, mark=|, mark options={solid, myblue}, forget plot]
  table[row sep=crcr]{%
4.6037	0\\
};
\addplot [color=myblue, draw=none, mark=|, mark options={solid, myblue}, forget plot]
  table[row sep=crcr]{%
5.3875	0\\
};
\end{axis}
\end{tikzpicture}
  \\ \hline
  \end{tabular}
\caption{Properties of quadrature rules considered for the discrete-time approach to approximate the resolvent $R(z) = (z-H)^{-1}$ with an operator-norm error of $\mathcal{O}(\epsilon)$, where $z = a+bi$. The required number of quadrature nodes $J$ to achieve error $\leq \mathcal{O}(\epsilon)$ depends on the precision $\epsilon$, the choice of quadrature rule, the pole $z$, and Hamiltonian $H$. In the second column we show the distribution of $J=20$ nodes. The cost $T=\max_{j} \vert t_j\vert$ determines the length of integration interval for the trapezoidal or Legendre rule (see \cite{Sz75,Ga81,TrWe14} for details), and therefore the maximal runtime of the $J$ Hamiltonian simulations. The last column tabulates the integral for which the quadrature rule is optimal in the Gaussian sense. Note that any finite interval $[0,T]$ can be mapped linearly onto $[-1,1]$. For the trapezoidal/Legendre rule, the maximal runtime $T$ can be controlled a priori provided that $\epsilon$ and $b$ are known. For the Laguerre/Hermite rule, the maximal time depends on $J$ which is harder to quantify a priori. $^{(\star)}$The total runtime for the Hermite rule is determined empirically.}
\label{table:quadRules}
\end{table}

In this work, we consider the Legendre rule for the $q$-integral and trapezoidal rule for the $y$-integral in \cref{eq:resolvent_unitary}.
In \cref{subsec:Dirac_delta_discrete}, we argue that the Legendre rule is best suited for the $q$-integral, making it the preferred discrete-time LCHS approach for resolvents with complex poles. We further quantify the cost of constructing an $\epsilon$-approximation to a resolvent in terms of the simulation time metrics. 
This is among the first nonasymptotic cost analyses for computing resolvents on quantum hardware.
\cref{subsec:Dirac_delta_discrete_numerics} supports our findings with numerical examples.
\cref{subsec:Gaussian_discrete} shifts focus to quadrature rules for resolvents with real poles.
We propose combining the Legendre and trapezoidal rules,
instead of the common trapezoidal-trapezoidal combination.
A detailed cost analysis is presented, and numerical experiments in \cref{subsec:Gaussian_discrete_numerics} illustrate that our proposed combination outperforms others. While our discussion is restricted to the simple poles, a generalization to repeated poles of higher multiplicities is provided in \cref{sec:repeatedPoles}.

\subsection{Dirac-delta kernel for complex poles}
\label{subsec:Dirac_delta_discrete}
For complex poles, \textit{i.e.}, $z_k = a_k + ib_k$ with $b_k >0$, we consider the single integral representation \eqref{eq:resolvent_I}, and possible discrete-time approximations of the form \eqref{eq:quadRule}, which leads to approximant $\mathcal{I}_J(z_k)$
\begin{align}
    R(z_k) \approx -i\sum_{j=0}^{J-1} w_j  e^{(ia_k-b_k)t_j} e^{-iH t_j} =: -i\sum_{j=0}^{J-1} x_j e^{-iH t_j} =: \mathcal{I}_J(z_k).\label{eq:resolvent_I_discretized}
\end{align}
We absorb the scalar $e^{(ia_k-b_k)t_j}$ into the weights $w_j$, yielding new `weights' $x_j = e^{(ia_k-b_k)t_j}w_j$, to stress that the Hamiltonian simulations $e^{-iHt_j}$ are performed on a quantum computer, and the weighted sum of these real-time evolutions can be done via classical post-processing. 

The following scheme highlights how three important quadrature rules are applied.
All three rules follow the same procedure outlined in \cref{fig:Legendre}, querying a set of Hamiltonian simulations with $H$. Their key difference is in the choice of nodes, \textit{i.e.}, the evolution times, as listed in \cref{table:quadRules}.
\smallskip

{\centering

\begin{tikzpicture}
	\node[draw = none,minimum height = 1cm, text centered,text width=4cm] (R) at (-0.5,1) {$R(z_k)$, see \cref{eq:resolvent_I}};	
	\node[draw = none,minimum height = 1cm, text centered,text width=6.9cm] (I) at (7.4,1) {$ \mathcal{I}_J(z_k) =  -i\sum_{j=0}^{J-1} x_j e^{-iH t_j}$, see \cref{eq:resolvent_I_discretized}};
	\draw[->] (R.east) --node[above,sloped]{Quadrature}  (I.west);	
	\node[draw = none,minimum height = 1cm, text centered,text width=4cm] (Resolvent) at (-0.5,0) {$-i\int_{0}^{\infty} dq~e^{-b_k q} e^{i(a_k-H)q}$};	
	\node[draw = none,minimum height = 1cm, text centered,text width=4.1cm] (Lag) at (6,0) {$\frac{-i}{b_k} \sum_{j=0}^{J-1} w_j^{\rm Lag} e^{\frac{i}{b_k}(a_k-H)t_j^{\rm Lag}}$};	
    \draw[->] (Resolvent.east) --node[above,sloped]{Laguerre} node[below,sloped]{$b_k q \mapsto t$} (Lag.west);	
    
	\node[draw = none,minimum height = 1cm, text centered,text width=3.7cm] (trunc) at (-0.5,-3) {$-i\int_{0}^{T_k^{\rm max}} dq~e^{-b_k q} e^{i(a_k-H)q}$};
	\draw[->] (Resolvent.south) --node[left]{Truncate} node[right]{{$T_k^{\rm max} = -\frac{\log(\epsilon b_k)}{b_k}$}}(trunc.north);	
	\node[draw = none,minimum height = 1cm, text centered,text width=5cm] (Trap) at (6.5,-2) {$-i\sum_{j=0}^{J-1} w_j^{\rm Trap} e^{-b_k t_j^{\rm Trap}} e^{i(a_k-H)t_j^{\rm Trap}}$};	
	
	\node[] (C) at (2,-2){};
	
	\draw[-] (trunc.north east) -- (C.center);
	\draw[->] (C.center) -- node[below] {Trapezoidal} (Trap.west);
	
	\node[draw = none,minimum height = 1cm, text centered,text width=6.2cm] (Leg) at (6.5,-4) { $-i\sum_{j=0}^{J-1}  \frac{w_j^{\rm Leg}T_k^{\rm max}}{2} e^{- \frac{b_kT_k^{\rm max}}{2} (t_j^{\rm Leg}+1)} e^{\frac{iT_k^{\rm max}}{2}(a_k-H) (t_j^{\rm Leg}+1)}$};			
	\node[] (D) at (2,-4){};
	
	\draw[-] (trunc.south east) -- (D.center);
	\draw[->] (D.center) -- node[above] {Legendre} 
 (Leg.west);

    \node[draw=none] (D) at (2.4,-4.5){$q \mapsto \frac{T_{k}^{\rm max}(1+t)}{2}$};

	\draw[black,thick] ($(R.north west)+(-0,0.1)$)  rectangle ($(Leg.south east)+(2.5,-0.4)$);
 
\end{tikzpicture}

\smallskip

After a change of variable $b_k q \mapsto t$, \cref{table:quadRules} seemingly suggests the use of a Gauss--Laguerre rule, since we have $\weight(t)=e^{-t}$ and the integration interval $[0,\infty]$. 
Under specific regularity assumptions, the Laguerre rule indeed has the potential to converge exponentially fast~\cite{Lubinsky1983}. However, precise assertions or results about the general rate of convergence remain undetermined.

The trapezoidal and Legendre rule require truncation to a finite interval and use $\weight(t)\equiv 1$.
A truncation with $\epsilon$-control over the error can be obtained thanks to the exponential decay of the integrand, \textit{i.e.},
\begin{align}
    R(z_k)\approx - i \int_{0}^{T_k^{\rm max}} dq \,  e^{-b_k q} \, e^{i(a_k - H)q} , \quad \text{with }T^{\rm max}_{k} = \lVert \bt \rVert_{\infty} = \frac{1}{b_k} \log \frac{1}{\epsilon b_k}.
    \label{eq:Tmax_discrete}
\end{align}
While a trapezoidal discretization, $t^{\rm Trap}_j = j\Dt$ for the fixed timestep $\Dt = \frac{T_k^{\rm max}}{J-1}$, is commonly assumed for real-time algorithms \cite{KeDuWa21}, it can result in a large total simulation time $T^{\rm tot}_{k} = \lVert \bt \rVert_1$.
This is because the trapezoidal rule applied to the nonperiodic truncated integral in \cref{eq:Tmax_discrete} suppresses the approximation error slowly as the number of timesteps increases, \textit{i.e.}, at an algebraic rate of $\mathcal{O}\left( \frac{1}{\Dt^{2}} \right)$ \cite{DaRa84}. 

In this paper, we propose the use of the Gauss--Legendre quadrature rule. Observe that upon changing the variable $q \mapsto \frac{T_{k}^{\rm max}(1+t)}{2}$, 
\begin{align}
    R(z_k) \approx - \frac{i T_{k}^{\rm max}}{2} \int_{-1}^{1} dt \, e^{-b_k T_{k}^{\rm max} (1+t)/2} e^{i(a_k - H)T_{k}^{\rm max} (1+t)/2} \approx \mathcal{I}^{\rm Leg}_{J}(z_k),
    \label{eq:GL_discrete}
\end{align}
where for $\mathcal{I}^{\rm Leg}_J(z_k)$ we take $\bt \in [-1, 1]^{J}$ to be the roots of the degree-$J$ Legendre polynomial. 
For exponentially decaying integrands, it is known that the Legendre rule achieves similar accuracy as the Laguerre rule \cite{Ev05}. Moreover the Legendre rule establishes a robust exponential convergence, which allows us to favorably reduce the number of distinct real-time circuits $J$.
In particular, the convergence follows from the classical results~\cite{Sydow1977,Ga81} that the Gauss--Legendre error for a function $f$, which is analytic within and on the ellipse $\mathcal{E}_{\sigma} = \left\{ \frac{\sigma}{2} e^{i\theta} + \frac{1}{2\sigma} e^{-i\theta}: \theta \in [-\pi, \pi] \right\}$, is bounded by
\begin{align}
   e_{J}[f;\mathcal{E}_{\sigma}] \leq \frac{ 8 \sigma^2 \sup_{t \in \mathcal{E}_{\sigma}} \lvert f(t) \rvert }{\sigma^{2J}(\sigma^2 - 1)},
   \label{eq:GL_error}
\end{align}
where $e_{J}[f]$ labels the difference between the exact and discretized $f$-integral over $[-1, 1]$. Since the integrand in \cref{eq:GL_discrete}, $f(t) = e^{i(z_k-H)T^{\rm max}_k(1+t)/2}$, is analytic on the whole complex plane, this error bound holds for every $\sigma > 1$. With a fixed $\sigma$, the approximation accuracy hence improves exponentially with respect to $J$, highlighting the advantage of the Legendre rule over the algebraic convergence of the trapezoidal rule. Its efficiency, in terms of quantum resources, is summarized in the following theorem.

\begin{thm}
\label{thm:1}
For a complex pole $z_k = a_k + i b_k$ with $b_k > 0$ and tolerance $\epsilon > 0$, the resolvent $R(z_k)$ admits a real-time LCHS construction $\mathcal{I}_{J}(z_k)$ for which $\lVert R(z_k) - \mathcal{I}_{J}(z_k) \rVert_2 
< \epsilon$. The Gauss--Legendre rule defines a time grid $\bt$ such that
\begin{align}
    J =\log_{2} \log \frac{2}{ \epsilon b_k }  + (\eta_k + 1) \log_2 \frac{2}{\epsilon b_k } + 3
    \label{eq:GL_J}
\end{align}
distinct time evolution circuits suffice to construct $\mathcal{I}_{J}(z_k)$, where $\eta_k = \frac{3b_k - 2\sqrt{2} b_k  + a_k^{+}}{4\sqrt{2} b_k}$ and $a_k^{+} = \max_{0 \leq n \leq N-1} \lvert a_k -E_n \rvert$ for eigenvalues $E_n$ of $H$. The corresponding maximal and total evolution time are given by, respectively,
\begin{align}
    T^{\max} = \lVert \bt \rVert_{\infty} = \frac{1}{b_k} \log \frac{2}{\epsilon b_k}, \qquad \text{and} \qquad T^{{\rm tot}} = \lVert \bt \lVert_{1} = \frac{J}{2 b_k} \log \frac{2}{\epsilon b_k}.
    \label{eq:GL_tmax}
\end{align}
\end{thm}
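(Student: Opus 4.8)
The plan is to split $\lVert R(z_k) - \mathcal{I}_J(z_k)\rVert_2$ into a \emph{truncation} error and a \emph{quadrature} error and bound each by $\epsilon/2$. Starting from the Laplace representation \cref{eq:resolvent_I}, $R(z_k) = -i\int_0^\infty e^{-b_k q}e^{i(a_k-H)q}\,dq$, the truncation error of discarding the tail $q > T$ is bounded in operator norm, using that $e^{i(a_k-H)q}$ is unitary (as $H$ is Hermitian), by $\int_T^\infty e^{-b_k q}\,dq = e^{-b_k T}/b_k$. Setting this equal to $\epsilon/2$ forces $T = T^{\max} = \frac{1}{b_k}\log\frac{2}{\epsilon b_k}$, the first time metric in \cref{eq:GL_tmax}; this fixes the integration interval and hence $\mathcal{I}_J(z_k) = -\tfrac{iT}{2}\sum_j w_j\,e^{-b_k T(1+t_j)/2}\,e^{i(a_k-H)T(1+t_j)/2}$, manifestly a linear combination of $J$ Hamiltonian simulations at times in $[0,T]$.

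For the quadrature error on $[0,T]$ I would use the Gauss--Legendre bound \cref{eq:GL_error} spectrally. After the affine change of variable of \cref{eq:GL_discrete} the integrand is the operator-valued entire function $f(t) = e^{i(z_k-H)T(1+t)/2}$; diagonalizing $H = \sum_n E_n\ket{n}\bra{n}$ reduces the problem to the scalar symbols $f_n(t) = e^{i(z_k-E_n)T(1+t)/2}$, each entire, so that \cref{eq:GL_error} applies for every $\sigma > 1$. The key estimate is $\sup_{t\in\mathcal{E}_\sigma}\lvert f_n(t)\rvert$: parametrizing $t = \tfrac{\sigma}{2}e^{i\theta} + \tfrac{1}{2\sigma}e^{-i\theta}$ gives $\lvert f_n(t)\rvert = \exp\!\big(-\tfrac{T}{2}\Im[(z_k-E_n)(1+t)]\big)$, and writing $1+t$ in terms of $\cos\theta,\sin\theta$ one finds $\Im[(z_k-E_n)(1+t)] = b_k + A\sin\theta + B\cos\theta$ with $A = (a_k-E_n)\big(\tfrac{\sigma}{2}-\tfrac{1}{2\sigma}\big)$ and $B = b_k\big(\tfrac{\sigma}{2}+\tfrac{1}{2\sigma}\big)$. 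Minimizing over $\theta$ and then using $\sqrt{A^2+B^2}\le\lvert A\rvert+\lvert B\rvert$ and $\lvert a_k-E_n\rvert\le a_k^+$ yields $\sup_{t\in\mathcal{E}_\sigma}\lvert f_n(t)\rvert \le \exp\!\big(\tfrac{T}{2}(\lvert A\rvert+\lvert B\rvert - b_k)\big)$.

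The decisive choice is $\sigma = \sqrt{2}$: then $\sigma^2 - 1 = 1$, $\tfrac{\sigma}{2}-\tfrac{1}{2\sigma} = \tfrac{1}{2\sqrt{2}}$, and $\tfrac{\sigma}{2}+\tfrac{1}{2\sigma} = \tfrac{3}{2\sqrt{2}}$, so $\lvert A\rvert+\lvert B\rvert - b_k \le \frac{a_k^+ + 3b_k - 2\sqrt{2}\,b_k}{2\sqrt{2}} = 2\eta_k b_k$ with $\eta_k$ exactly as stated; since $b_k T = \log\frac{2}{\epsilon b_k}$ this gives $\sup_{t\in\mathcal{E}_\sigma}\lvert f_n(t)\rvert \le (2/\epsilon b_k)^{\eta_k}$. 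Feeding this into \cref{eq:GL_error} with $\sigma=\sqrt{2}$ gives $\lVert e_J[f]\rVert_2 = \max_n\lvert e_J[f_n]\rvert \le \tfrac{16}{2^J}(2/\epsilon b_k)^{\eta_k}$, and multiplying by the $T/2$ prefactor of \cref{eq:GL_discrete} bounds the quadrature error by $\tfrac{8T}{2^J}(2/\epsilon b_k)^{\eta_k}$. Requiring this $\le \epsilon/2$, substituting $T=\tfrac{1}{b_k}\log\tfrac{2}{\epsilon b_k}$, rewriting $\tfrac{16}{\epsilon b_k} = 8\cdot\tfrac{2}{\epsilon b_k}$, and taking $\log_2$ reproduces exactly $J = \log_2\log\tfrac{2}{\epsilon b_k} + (\eta_k+1)\log_2\tfrac{2}{\epsilon b_k} + 3$ of \cref{eq:GL_J}; combined with the truncation half, the total error is below $\epsilon$.

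It then remains to read off the time metrics: the rescaled Legendre nodes lie in $[0,T^{\max}]$ so $\lVert\bt\rVert_\infty = T^{\max}$, and since the Legendre roots on $[-1,1]$ are symmetric about the origin their images $\tfrac{T}{2}(1+\xi_j)$ sum to $\tfrac{TJ}{2}$ (the symmetric parts cancel), giving $T^{\rm tot} = \lVert\bt\rVert_1 = \frac{J}{2b_k}\log\frac{2}{\epsilon b_k}$, as in \cref{eq:GL_tmax}. The step I expect to be the main obstacle is the Bernstein-ellipse estimate of the second paragraph --- obtaining a closed-form lower bound on $\Im[(z_k-E_n)(1+t)]$ over $\mathcal{E}_\sigma$ that collapses cleanly into the stated $\eta_k$ after the choice $\sigma=\sqrt{2}$, and arguing that this fixed $\sigma$ (selected for algebraic convenience, making $\sigma^2-1=1$) is a legitimate choice rather than performing a joint optimization over $\sigma$ and $J$ that would only affect constants.
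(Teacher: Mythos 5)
Your proposal is correct and follows essentially the same route as the paper's proof: split into truncation and Gauss--Legendre quadrature errors, bound the ellipse supremum of the scalar symbols $f_n$, and specialize to $\sigma=\sqrt{2}$ with the truncation budget $\epsilon/2$, which reproduces the stated $\eta_k$, $J$, and time metrics exactly. The only cosmetic difference is that you fix $\epsilon_1=\epsilon/2$ from the outset and carry out the Bernstein-ellipse supremum explicitly (via minimizing $b_k+A\sin\theta+B\cos\theta$), whereas the paper keeps $\epsilon_1$ and $\sigma$ general before making the same choices.
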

\begin{proof}
    The proof is provided in \cref{subsec:proof_theorem1}.
\end{proof}

The constant $\eta_k$ in \cref{thm:1} is critical for determining the quantum resources since it lacks a logarithmic dependence on the resolvent parameters. Importantly, $\eta_k$, which depends on the complex component of the pole $b_k$ directly and on the real (and complex) component indirectly through the largest distance $a^+_k$ between the pole $z_k$ and the eigenvalues of $H$, can be systematically reduced if we place the pole $z_k$ farther from the real line. We remark that implementing the trapezoidal rule with the exact same maximal runtime necessitates $J = \left[ \frac{1}{\epsilon b_k} \log \frac{2}{\epsilon b_k}  \right]$ Hamiltonian simulations and a total runtime of $T^{\rm tot} = \frac{1}{2 \epsilon b_k^2} \left[ \log \frac{2}{\epsilon b_k} \right]^2$~\cite{KeDuWa21}; our result achieves a respective enhancement by multiplicative factors of $\frac{1}{\epsilon}$ and $\frac{1}{\epsilon} \left[ \log \frac{1}{\epsilon} \right]^{-1}$. Hence, the Gauss--Legendre rule offers a simple, improved LCHS recipe that can lead to substantial resource reduction for resolvent computation.

Given access to the PREP and SEL oracles, \cref{thm:1} states that the discrete-time approach can be alternatively viewed as a $(1, \lceil \log_2 J \rceil, \epsilon)$ BE of the resolvent.

\subsection{Numerical experiments for complex poles}
\label{subsec:Dirac_delta_discrete_numerics}
We now illustrate \cref{thm:1} for the 1D mixed-field Ising model (MFIM),
\begin{align}
     H =  - \sum_{i=1}^{L_{\rm sys}} Z_i Z_{i+1} - h \sum_{i=1}^{L_{\rm sys}} Z_i - g \sum_{i=1}^{L_{\rm sys}} X_i,
     \label{eq:MFIM}
\end{align}
where $h$ and $g$ set the longitudinal and transverse field, respectively, and $L_{\rm sys}$ denotes the number of spins.
For simplicity, we assume $\lVert H \rVert_2 \leq 1$ by rescaling the spectrum $H \mapsto \frac{H}{\Vert H \Vert_2}$.

To implement the real-time evolution $e^{-iHt}$, we employ a Trotter splitting $H = H_1 + H_2$, where $H_1 = -\sum_{i} Z_i Z_{i+1} - h \sum_{i} Z_i $ and $H_2 =- g \sum_{i} X_i$ can be simulated efficiently. 
For $M$ Trotter steps, each of size $\Dt = \frac{t}{M}$, the first-order Trotter scheme reads $e^{-iHt} \approx \left( e^{-iH_1 \Dt} e^{-iH_2 \Dt}\right)^M$ and incurs an error of $\mathcal{O}(t \Dt)$ \cite{Lloyd1996}.
This implies a Trotter step size of $\Dt = \frac{\epsilon}{T^{\rm max}}$ to reach a target accuracy of $\epsilon$ for all $t_j \in\bt$. For equidistant nodes $\bt$ of 
the trapezoidal rule, a fixed Trotter step size suffices to simulate the $J$ evolutions $e^{-i H t_j}$ necessary to construct a resolvent. However, the distance between adjacent Laguerre or Legendre nodes is not constant.
In this case, we take
\begin{align}
        e^{-iHt_j} \approx \left( e^{-iH_1 \Dt} e^{-iH_2 \Dt} \right)^{M_j}  e^{-iH_1  \delta t_j} e^{-iH_2 \delta t_j},
        \label{eq:Trotter_Variable}
\end{align}
where each $t_j\in \bt$ is approached as closely as possible by $M_j$ $\Dt$-steps and a final variable step of size $\delta t_j < \Dt$, as is illustrated in \cref{fig:Trotter}.
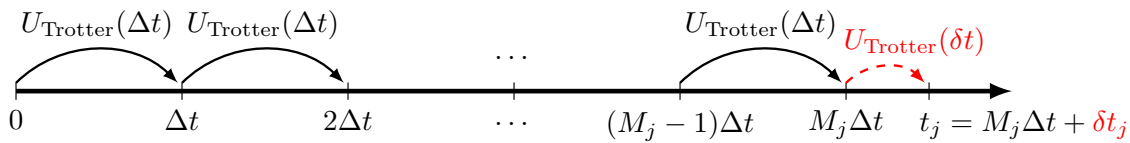
\begin{figure}[hbt!]
    \centering
    \begin{tikzpicture}
\draw[ultra thick, ->] (0,0) -- (12,0);

\foreach \x in {0,2,4,6,8,10,11}
\draw (\x cm,3pt) -- (\x cm,-3pt);

\draw[ultra thick] (0,0) node[below=3pt,thick] {0} node[above=3pt] {};
\draw[ultra thick] (2,0) node[below=3pt,thick] {$\Delta t$} node[above=3pt] {};
\draw[ultra thick] (4,0) node[below=3pt,thick] {$2\Dt$} node[above=3pt] {};
\draw[ultra thick] (6,0) node[below=7pt, thick] {$\dots$} node[above=7pt] {$\dots$};
\draw[ultra thick] (8,0) node[below=3pt, thick] {$(M_j-1)\Dt$} node[above=3pt] {};
\draw[ultra thick] (10,0) node[below=3pt, thick] {$M_j\Dt$} node[above=3pt] {};
\draw[ultra thick] (12.15,0) node[below=3pt, thick] {$t_j=M_j\Dt + {\color{red}\delta t_j}$} node[above=3pt] {};

\draw [thick,->] (0,0.1) to [out=45,in=135] node [above] {$U_{\rm Trotter}(\Dt)$} (1.95,0.1);
\draw [thick,->] (2,0.1) to [out=45,in=135] node [above] {$U_{\rm Trotter}(\Dt)$} (3.95,0.1);
\draw [thick,->] (8,0.1) to [out=45,in=135] node [above] {$U_{\rm Trotter}(\Dt)$} (9.95,0.1);
\draw [thick,->,dashed,red] (10,0.1) to [out=45,in=135] node [above] {$\qquad U_{\rm Trotter}(\delta t)$} (10.95,0.1);

\end{tikzpicture}
    \caption{Simulation of $e^{-iHt_j}$ by Trotterized time evolution $U_{\rm Trotter}(t) := e^{-iH_1 t}e^{-iH_2 t}$. The evolution is advanced for multiple fixed time steps $\Dt$, common for all simulations $t_j\in \bt$, and a variable shorter time step $\delta t_j<\Dt$ dependent on $t_j$.}
    \label{fig:Trotter}
\end{figure}

\paragraph{Convergence as a function of simulation time.}
For the MFIM with $L_{\rm sys} = 8$ spins, we verify that the Legendre rule outperforms the trapezoidal and Laguerre rules. We fix the Hamiltonian model parameters $(h, g) = (1, \frac{2}{3})$, and consider the complex pole $z = -0.8+0.1i$. \cref{fig:TFIM_convergence} displays the approximation error $\Vert R(z)-\mathcal{I}_J(z)\Vert_2$ relative to the scaled total evolution time $\frac{T^{\rm tot}}{T^{\rm max}}$ for the different quadrature rules and requested accuracy of $\epsilon = 10^{-3}$ and $\epsilon = 10^{-6}$.
We take a Trotter step of $\Dt= \frac{\epsilon}{T^{\rm max}}$.
The trapezoidal rule shows an algebraic convergence, while the Laguerre rule converges exponentially and incurs a simulation cost $T^{\rm tot} = \mathcal{O}(J^2)$. The Legendre rule converges exponentially and is the most efficient.
We also note that the onset of the convergence of the Legendre rule occurs earlier for lower requested accuracy, since the truncation $T^{\rm max}$ is smaller in this case, making the truncated integral easier to approximate.

\begin{figure}[!htb]
    \centering
    \setlength\figureheight{2cm}
    \setlength\figurewidth{0.85\textwidth}	
    \input{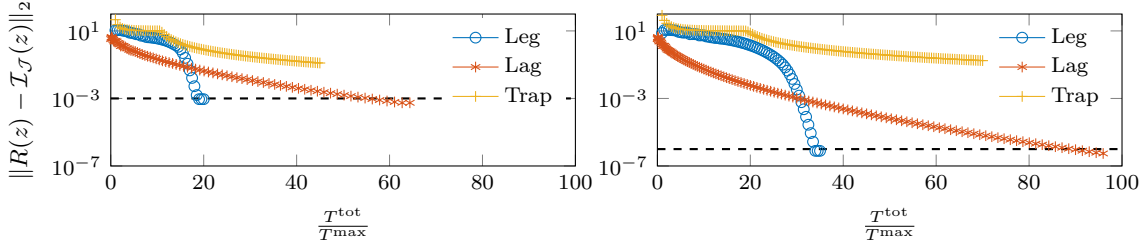}
    
    \caption{Approximation error $\Vert R(z)-\mathcal{I_J}(z)\Vert_2$ as a function of the scaled total evolution time $\frac{T^{\rm tot}}{T^{\rm max}}$, for $z = -0.8+0.1i$ and the (scaled) MFIM Hamiltonian with $L_{\rm sys} =8$ spins. The approximation $\mathcal{I_J}(z)$ is computed using the trapezoidal ({\color{myorange}+}), Legendre ({\color{myblue}$\circ$}) and Laguerre (\textcolor{myred}{$\ast$}) rule for a requested accuracy of $\epsilon$ ($10^{-3}$ left and $10^{-6}$ right) and Trotter step of $\Dt = \frac{\epsilon}{T^{\rm max}}$. 
    }
    \label{fig:TFIM_convergence}
\end{figure}

\paragraph{Simulation time as a function of pole position.}
\cref{eq:GL_tmax} indicates that a pole closer to the real line is more costly to simulate.
For selected poles $z_k$, a target accuracy $\epsilon = 10^{-6}$, and Trotter step $\Dt = \frac{\epsilon}{T^{\rm max}}$, we report the cost $\frac{T_k^{\rm tot}}{T_k^{\rm max}}$ for each of the quadrature rules to obtain an $\epsilon$-approximation satisfying $\Vert R(z_k)-\mathcal{I}_J(z_k)\Vert_2 < \epsilon$.
We also report the predicted cost in terms of the total evolution time for the Legendre rule in \cref{eq:GL_J} and \cref{eq:GL_tmax}.
\cref{fig:approachR} shows these quantities for poles $z_k = -0.8+2^{-k+1}i$ (left pane) and $z_k = 2^{-k+1}i$ (right pane), with $k=1,\dots,6$.
The Legendre rule proves to be the most efficient for all considered poles $z_k$. The Laguerre rule leads to numerical underflow for $z_k=-0.8+2^{-4}i$ and $z_k=2^{-5}i$ before reaching the target accuracy, whereas
the trapezoidal rule appears orders of magnitude more expensive. We only report data where the cost is below or approximately $10^4$.
The predicted cost for the Legendre rule captures the overall trend of the actual cost as the pole moves closer to the real line.
The difference between the two panes is attributed to the largest distance $a^{+}$ of the poles from an eigenvalue measured on the real line (and therefore independent of $k$ in this experiment).
A larger $a^+$ increases the cost for all considered quadrature rules.

\begin{figure}[!ht]
    \centering
    \setlength\figureheight{2cm}
    \setlength\figurewidth{0.85\textwidth}	
%
\begin{tikzpicture}
\begin{axis}[%
width=0.48\textwidth,
xmin=0,
xmax=1,
xlabel=$b$,
ymode=log,
ymin=1,
ymax=1e5,
yminorticks=true,
ylabel=$\frac{T^{\rm tot}}{T^{\rm max}}$,
legend style={draw=none, font=\footnotesize, fill=none},
title={$a^{+} = 1.36$}
]
\addplot [color=myblue, draw=none, mark=o, mark options={solid, myblue, thick}, mark size=2.5]
  table[row sep=crcr]{%
1 6\\
0.5 9\\
0.25 15\\
0.125 28\\
0.0625 53.501\\
0.03125 105.996\\
};
\addlegendentry{Leg}
\addplot [color=myblue, line width=1.25]
  table[row sep=crcr]{%
1 16.7209\\
0.5 20.0176\\
0.25 26.3004\\
0.125 38.7967\\
0.0625 64.2006\\
0.03125 116.3799\\
0.0156 224.0284\\
};
\addlegendentry{Thm 1}
\addplot [color=myred, draw=none, mark=asterisk, mark options={solid, myred, thick}, mark size=2.5]
  table[row sep=crcr]{%
1 13.5092\\
0.5 20.5568\\
0.25 36.2379\\
0.125 72.3712\\
};
\addlegendentry{Lag}
\addplot [color=myorange, draw=none, mark=+, mark options={solid, myorange, thick}, mark size=2.5]
  table[row sep=crcr]{%
1 3453.7\\
0.5 6116.958\\
0.25 11919\\
};
\addlegendentry{Trap}
\end{axis}
\end{tikzpicture}%
\hfill%
\begin{tikzpicture}
\begin{axis}[%
width=0.48\textwidth,
xmin=0,
xmax=1,
xlabel=$b$,
ymode=log,
ymin=1,
ymax=1e5,
yminorticks=true,
legend style={draw=none, font=\footnotesize, fill=none},
title={$a^{+} = 1$}
]
\addplot [color=myblue, draw=none, mark=o, mark options={solid, myblue, thick}, mark size=2.5]
  table[row sep=crcr]{%
1 5.5\\
0.5 7.5000\\
0.25 12.000\\
0.125 21.5\\
0.0625 40.5\\
0.03125 83\\
};
\addlegendentry{Leg}
\addplot [color=myblue,  line width=1.25]
  table[row sep=crcr]{%
1 16.0627\\
0.5 18.6385\\
0.25 23.4162\\
0.125 32.7773\\
0.0625 51.6564\\
0.03125 90.2885\\
0.015625 169.8334\\
};
\addlegendentry{Thm 1}
\addplot [color=myred, draw=none, mark=asterisk, mark options={solid, myred, thick}, mark size=2.5]
  table[row sep=crcr]{%
1 6.8924\\
0.5 9.5055\\
0.25 18.1819\\
0.125 35.8766\\
0.0625 71.9024\\
};
\addlegendentry{Lag}
\addplot [color=myorange, draw=none, mark=+, mark options={solid, myorange, thick}, mark size=2.5]
  table[row sep=crcr]{%
1 3097.4\\
0.5 4986.9\\
0.25 10588\\
};
\addlegendentry{Trap}
\end{axis}
\end{tikzpicture}%
    \caption{Cost metric $\frac{T^{\rm tot}}{T^{\rm max}}$ for the scaled total evolution time for quadrature rules necessary to compute an $\epsilon$-approximation to $(z-H)^{-1}$ with accuracy $\epsilon=10^{-6}$.
    A Trotter step of $\Dt = \frac{\epsilon}{T^{\rm max}}$ is used to evolve the MFIM Hamiltonian with $L_{\rm sys} = 8$ spins. We consider $z=-0.8+bi$ on the left and $z=bi$ on the right, which yields $a^+=1.36$ and $a^+=1$ respectively. The Legendre rule ({\color{myblue}$\circ$}) performs more efficiently than the Laguerre (\textcolor{myred}{$\ast$}) and trapezoidal ({\color{myorange}$+$}) rule. The solid line shows the predicted cost for the Legendre rule derived in \cref{thm:1}.}
    \label{fig:approachR}
\end{figure}
We comment that similar results for Heisenberg spin systems of varying dimensions and lattice geometries are presented in \cref{app:Heisenberg}.

\subsection{Gaussian kernel for real poles}
\label{subsec:Gaussian_discrete}
For a real pole outside the spectral range, \textit{i.e.}, $z_k = a_k \in \mathbb{R}$
and $a_k \notin [E_0, E_{N-1}]$, the resolvent can be represented as the double integral of \cref{eq:resolvent_II}.
For clarity and ease of notation, we will omit the overall sign factor from here on.
This double integral can be discretized in two steps. First, we discretize the inner $y$-integral,
\begin{align}
    R(z_k) \approx  \frac{1}{\pi}  \sum_{\ell =0}^{L_y-1} x_{y,\ell} \int_{0}^{\infty} dq \, e^{-y_\ell^2/2} e^{i y_\ell (a_k - H) q }.
    \label{eq:resolvent_II_marginal}
\end{align}
Second, by recognizing \cref{eq:resolvent_II_marginal} as a sum of $L_y$ integrals of the form as in \cref{eq:resolvent_I}, we discretize the remaining integrals over $q$ using $L_q$ nodes as before, \textit{i.e.}, by repeating \cref{eq:quadRule} $L_y$ times. The resulting approximation is also denoted as $\mathcal{I}_J(z_k) \approx R(z_k)$. 

The $q$-integrals can be efficiently approximated by the Legendre rule as justified in \cref{subsec:Dirac_delta_discrete}.
Thus, for the remainder of this section we only examine discretization of the inner $y$-integral. The following scheme provides an overview of the quadrature rules to be discussed.
\bigskip

{\centering

	\begin{tikzpicture}
		\node[draw = none,minimum height = 1cm, text centered,text width=4cm] (R) at (-0.5,1) {$R(z_k)$, see \cref{eq:resolvent_II}};	
		\node[draw = none,minimum height = 1cm, text centered,text width=7.3cm] (I) at (7.6,1) {$\mathcal{I}_J(z_k)$, see \cref{eq:resolvent_II_marginal} combined with \cref{eq:resolvent_I_discretized}};
		\draw[->] (R.east) --node[above,sloped]{Quadrature}  (I.west);	
		
		\node[draw = none,minimum height = 1cm, text centered,text width=5cm] (Resolvent) at (-0.5,0) {$\int_{0}^{\infty} dq \, \int_{-\infty}^{\infty} dy \, e^{-y^2/2} e^{iy(a_k - H)q}$};	
		\node[draw = none,minimum height = 1cm, text centered,text width=4cm] (Her) at (6,0) {$\sum_{l=0}^{L-1} \int_{0}^{\infty} dq \,   e^{i\sqrt{2}y_l^{\rm Her}(a_k - H)q}$};	
		\draw[->] (Resolvent.east) --node[above,sloped]{Hermite} node[below,sloped]{$\frac{y}{\sqrt{2}}\mapsto t$} (Her.west);	
		\node[draw = none,minimum height = 1cm, text centered,text width=4.8cm] (trunc) at (-0.5,-3) {$\int_{0}^{\infty} dq \, \int_{-y_k^{\rm max}}^{y_k^{\rm max}} dy \, e^{-y^2/2} e^{iy(a_k - H)q}$};
		\draw[->] (Resolvent.south) --node[left]{Truncate} node[right]{{$y_k^{\rm max} = \sqrt{ \log \frac{1}{\epsilon a_k^{-} }}$}}(trunc.north);	
		\node[draw = none,minimum height = 1cm, text centered,text width=7cm] (Trap) at (7.5,-2) {$\sum_{l=0}^{L-1} w_l^{\rm Trap} e^{-\frac{(y_l^{\rm Trap})^2}{2}} \int_{0}^{\infty} dq \,   e^{iy_l^{\rm Trap}(a_k - H)q}$};	
		
		\node[] (C) at (2,-2){};
		
		\draw[-] (0.5,-2.5) -- (C.center);
		\draw[->] (C.center) -- node[below] {Trapezoidal} (Trap.west);
		
		\node[draw = none,minimum height = 1cm, text centered,text width=8cm] (Leg) at (7,-4) { $y_k^{\rm max}\sum_{l=0}^{L-1}  \int_{0}^\infty dq~ w_l^{\rm Leg} e^{-\frac{(y_k^{\rm max} t_l^{\rm Leg})^2}{2}} e^{i y_k^{\rm max} t_l^{\rm Leg} (a_k-H)q}$ };			
		\node[] (D) at (1,-4){};
		
		\draw[-] (0.5,-3.5) -- (D.center);
		\draw[->] (D.center) -- node[above] {Legendre} node[below]{\small $\frac{y}{y_k^{\rm max}}\mapsto t$} (Leg.west);
		
		\draw[black,thick] ($(R.north west)+(-0.5,0.1)$)  rectangle ($(Leg.south east)+(0.5,-0.2)$);
			
	\end{tikzpicture}

\smallskip

Since the integration interval is the whole real line $[-\infty, \infty]$, Gauss--Hermite quadrature appears to be a natural candidate as suggested by \cref{table:quadRules}. 
The Gauss--Hermite rule can be applied to the inner $y$-integral in \cref{eq:resolvent_II} after a change of variable $\frac{y}{\sqrt{2}} \mapsto t$, which reveals the weight function $\weight(t)=e^{-t^2}$. Unfortunately, the lack of a tractable error analysis generally complicates the estimation of quantum resources, much like the case of the Laguerre quadrature rule for the $q$-integral.

In this work, we $\epsilon$-truncate the $y$-integral by exploiting the exponential decay of the integrand,
\begin{equation}
  R(z_k) \approx \frac{1}{\pi} \int_0^{\infty} dq \int_{-y^{\rm max}}^{y^{\rm max}} dy e^{-y^2/2} e^{iy(a_k-H)q}, \quad \text{with }   \pm y^{\rm max} = \pm \sqrt{ \log \frac{1}{\epsilon a_k^{-} } }.
\end{equation}
This truncation allows for the use of the trapezoidal and Legendre rule, these rules imply $w(t)\equiv 1$.
The Legendre rule achieves exponential convergence, which follows from similar arguments as those in \cref{subsec:Dirac_delta_discrete}.
The trapezoidal rule, $y_\ell = (\ell - \frac{L-1}{2}) \Dy$ with $\Dy = \frac{2 y^{\rm max}}{L-1}$, also converges exponentially based on the fact that the integrand function is quasi-periodic on the interval $[-y^{\rm max}, y^{\rm max}]$, \textit{i.e.}, $f(-y^{\rm max}) = f(y^{\rm max})+\mathcal{O}(\delta)$ for a small $\delta$ that can be controlled by the truncation parameter $y^{\rm max}$. 
The trapezoidal rule converges exponentially up to an accuracy $\mathcal{O}(\delta)$: this is a well-established result for integration of rapidly decaying analytic functions~\cite{TrWe14}. In particular, the trapezoidal error for a function $f$ that is analytic in the horizontal strip, $\mathcal{S}_{\sigma} = \left \{ \omega: \abs{\Im \omega} \leq \sigma \right \}$, is bounded by,
\begin{align}
   e_{\Dy}[f] \leq \frac{2  \sup_{s \in [-\sigma, \sigma]} \int_{-\infty}^{\infty} dy \, \lvert f(y + i s) \rvert }{ e^{2\pi \sigma / \Dy} - 1 },
   \label{eq:GLT_error} 
\end{align}
where $e_{\Dy}[f]$ gives the approximation error of the exact $f$-integral over $(-\infty, \infty)$ with a discretization step $\Dy$. 

Building upon \cref{thm:1}, we quantify the resource efficiency of the combined Legendre ($q$-integral) and trapezoidal ($y$-integral) rule for approximating~\cref{eq:resolvent_II} in the following theorem. 

\begin{thm}
\label{thm:2}
For a real pole $z_k = a_k \in \mathbb{R}$ with $a_k \notin [E_{0}, E_{N-1}]$ and tolerance $\epsilon > 0$, the resolvent $R(z_k)$ admits a real-time LCHS construction $\mathcal{I}_{J}(z_k)$ for which $\lVert R(z_k) - \mathcal{I}_{J}(z_k) \rVert_2 < \epsilon$. Let the truncation parameter $y^{\rm max}$ be chosen such that the $y$-integrand is periodic up to a $\delta$-perturbation with $\mathcal{O}(\delta)<\epsilon$.
Then the combination of the Gauss--Legendre (q-integral) and trapezoidal (y-integral) rules, with $L_q$ and $L_y$ nodes, respectively, defines a time grid $\bt$ such that, up to leading order,
\begin{align}
    \begin{split}
        J=L_q L_y = \bigg[ \log \frac{32 e^{\pi}}{ \sqrt{\pi} \epsilon^2 a_k^{+} a_k^{-}} + \frac{2\sqrt{2} a_k^{+} }{ a_k^{-} } \log \frac{4}{\epsilon a_k^{-}}  \bigg] \bigg[ \log_2 \frac{ \sqrt{2}}{\pi\epsilon a^{-}_k } + \frac{ a_k^{+} }{2 a_k^{-}} \log_2 \frac{4}{\epsilon a_k^{-}} + 7 \bigg]
    \end{split}
    \label{eq:GLT_J}
\end{align}
distinct time evolution circuits suffice to construct $\mathcal{I}_{J}(z_k)$,
where $a_k^{+} = \max_{0 \leq n \leq N-1} \lvert a_k -E_n \rvert$ and $a_k^{-} = \min_{0 \leq n \leq N-1} \lvert a_k -E_n \rvert$ for eigenvalues $E_n$ of $H$. Additionally, the maximal and total evolution time can also be controlled by
\begin{align}
    T^{\max} = \lVert \bt \rVert_{\infty} = \frac{2\sqrt{2}}{a_k^{-}} \log \frac{4}{\epsilon a_k^{-} }, \qquad \text{and} \qquad T^{\rm tot} = \lVert \bt \lVert_{1} =  \frac{ \sqrt{2} J}{2 a_k^{-}} \log \frac{4}{\epsilon a_k^{-}}.
    \label{eq:GLT_tmax}
\end{align}
\end{thm}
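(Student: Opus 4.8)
The plan is to prove \cref{thm:2} as the ``double-integral'' counterpart of \cref{thm:1}: truncate and discretize the outer Gaussian ($y$) integral of \cref{eq:resolvent_II} with the trapezoidal rule, reuse the Gauss--Legendre analysis of \cref{thm:1} for the inner ($q$) integral, and then count the resulting Hamiltonian simulations, whose durations form the product grid $\{\,|y_\ell|\,q_m\,\}_{\ell,m}$.

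First I would fix the two truncation lengths. Integrating $y$ over the whole line in \cref{eq:resolvent_II} replaces the $q$-integrand by the operator Gaussian $\sqrt{2/\pi}\,e^{-(a_k-H)^2q^2/2}$, so the error from also cutting $q$ at $T_q^{\rm max}$ is at most $\sqrt{2/\pi}\,e^{-(a_k^-)^2(T_q^{\rm max})^2/2}/((a_k^-)^2 T_q^{\rm max})$; choosing $T_q^{\rm max}=\mathcal{O}(\tfrac{1}{a_k^-}\sqrt{\log\tfrac{1}{\epsilon a_k^-}})$ makes this $\mathcal{O}(\epsilon)$. The hypothesis on $y^{\rm max}$ is met by the weight $e^{-y^2/2}$ once $y^{\rm max}=\mathcal{O}(\sqrt{\log\tfrac{1}{\epsilon a_k^-}})$, since then cutting $y$ to $[-y^{\rm max},y^{\rm max}]$ costs at most $\tfrac{2}{\pi} T_q^{\rm max}\,e^{-(y^{\rm max})^2/2}/y^{\rm max}$ and the endpoint mismatch of the truncated integrand is $\mathcal{O}(\delta)$ with $\delta<\epsilon$. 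The longest simulation time is then the product $\lVert\bt\rVert_\infty=y^{\rm max}T_q^{\rm max}$, which is exactly the $T^{\rm max}$ of \cref{eq:GLT_tmax}.

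Next I would bound the two node counts. In the $y$-direction the trapezoidal nodes $y_\ell=(\ell-\tfrac{L-1}{2})\Dy$, $\Dy=\tfrac{2y^{\rm max}}{L-1}$, act on the entire, $\delta$-quasi-periodic integrand $e^{-y^2/2}e^{iy(a_k-H)q}$, so \cref{eq:GLT_error} applies: shifting $y\mapsto y+is$ inflates $e^{-y^2/2}$ by $e^{s^2/2}$ and the oscillatory factor by at most $e^{|s|a_k^+ T_q^{\rm max}}$, so the strip supremum is $\lesssim e^{\sigma^2/2+\sigma a_k^+ T_q^{\rm max}}$; requiring this (times $T_q^{\rm max}$, from the $q$-integration) over $e^{2\pi\sigma/\Dy}$ to be $\mathcal{O}(\epsilon)$ and optimizing $\sigma$ gives $L=\mathcal{O}(\log\tfrac{1}{\epsilon^2 a_k^+ a_k^-}+\tfrac{a_k^+}{a_k^-}\log\tfrac{1}{\epsilon a_k^-})$, the first bracket of \cref{eq:GLT_J}. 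In the $q$-direction each remaining piece is the truncated oscillatory integral $\int_0^{T_q^{\rm max}}dq\,e^{iy_\ell(a_k-H)q}$; its Legendre integrand on $[0,T_q^{\rm max}]$ is entire, so the bound \cref{eq:GL_error} used in \cref{thm:1} shows that resolving a phase accumulation of at most $a_k^+ y^{\rm max}T_q^{\rm max}=a_k^+T^{\rm max}$ to accuracy $\epsilon$ costs $J_q=\mathcal{O}(\log\tfrac{1}{\epsilon a_k^-}+\tfrac{a_k^+}{a_k^-}\log\tfrac{1}{\epsilon a_k^-})$ nodes, the second bracket, with $a_k^+$ entering just as $\eta_k$ did in \cref{thm:1}. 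Since the durations are the $L\times J_q$ products $y_\ell q_m$, at most $J\le L\,J_q$ distinct circuits are required, which is \cref{eq:GLT_J}. For the total time, the symmetric $y$- and $q$-node sets satisfy $\sum_\ell|y_\ell|=\tfrac{y^{\rm max}L}{2}$ and $\sum_m q_m=\tfrac{T_q^{\rm max}J_q}{2}$ as recorded in \cref{table:quadRules}, so $\lVert\bt\rVert_1=(\sum_\ell|y_\ell|)(\sum_m q_m)=\tfrac{1}{4}\,y^{\rm max}T_q^{\rm max}\,LJ_q=\tfrac{1}{4} T^{\rm max}J$, the $T^{\rm tot}$ of \cref{eq:GLT_tmax}.

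The crux is the $q$-direction. Unlike the complex pole of \cref{thm:1}, where the factor $e^{-b_k q}$ supplies genuine exponential decay, the $q$-integrand for a fixed $y_\ell$ is purely oscillatory; the decay that legitimizes the truncation $T_q^{\rm max}$ (and hence the convergence of both quadratures) is borrowed from the Gaussian $y$-weight, equivalently from the $\delta$-quasi-periodicity of the $y$-integrand. One therefore has to carry this effective decay rate $\propto a_k^-$ --- the origin of the $2\sqrt2$ constants in \cref{eq:GLT_tmax} --- uniformly in $\ell$ through both the trapezoidal estimate \cref{eq:GLT_error} and the Legendre estimate \cref{eq:GL_error}, and then jointly optimize the truncations $y^{\rm max},T_q^{\rm max}$ against the counts $L,J_q$ to obtain the precise leading-order constants in \cref{eq:GLT_J}. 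By comparison, the $y$-truncation and the trapezoidal convergence itself are comparatively routine given \cref{eq:GLT_error}.
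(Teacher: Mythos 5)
Your proposal follows essentially the same route as the paper's proof: split the error into truncation and discretization parts, choose $q^{\rm max}\sim\frac{2}{a_k^-}\sqrt{\log\frac{1}{\epsilon a_k^-}}$ and $y^{\rm max}\sim\sqrt{2\log\frac{1}{\epsilon a_k^-}}$ using the Gaussian decay obtained by integrating out $y$ (the paper phrases this via ${\rm erfc}$ bounds), apply the strip bound \cref{eq:GLT_error} to the entire $y$-integrand and the ellipse bound \cref{eq:GL_error} to the entire $q$-integrand with worst-case phase $a_k^+ y^{\rm max} q^{\rm max}$, and multiply the resulting node counts and node sums to get $J=L_yL_q$ and $\lVert\bt\rVert_1=\frac{1}{4}T^{\rm max}J$, exactly as in \cref{subsec:proof_theorem2}. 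Your identification of the crux — that the $q$-truncation must be justified from the $y$-Gaussian rather than pole-wise decay, and that $a_k^+/a_k^-$ enters both quadrature exponents — matches the paper's treatment, so no gap.
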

\begin{proof}
    The proof is provided in \cref{subsec:proof_theorem2}.
\end{proof}

Whereas for complex poles, the simulation time depends strongly on the complex component of the pole, for real poles it is the smallest distance $a^-_k$, between the real pole $z_k = a_k$ and an eigenvalue of $H$, which influences the simulation time.
In \cref{thm:2}, the condition number of the shifted operator, $\frac{a_k^{+}}{a_k^{-}} = {\rm cond}(z_k - H)$, is essential in bounding the quantum resources cost. We notice that the asymptotic cost of employing the Dirac-delta kernel and the Gaussian kernel in the discrete-time LCHS approach is comparable since $J = \mathcal{O}\big( \log {\rm poly} \frac{{\rm constant}}{\epsilon} \big)$ and $T^{\rm max} = \mathcal{O}\big(\log \frac{{\rm constant}}{\epsilon}\big)$ for both cases. Despite the same asymptotic dependence, the actual implementation cost can vary significantly between the kernels. Finally, we remark that the result for real poles is directly applicable to solving linear systems of equations, a routine fundamental in quantum linear algebra.

\subsection{Numerical experiments for real poles}
\label{subsec:Gaussian_discrete_numerics}
For the MFIM Hamiltonian in \cref{eq:MFIM}, we explore different discrete LCHS constructions for poles on the real line.
Again we consider the rescaled Hamiltonian with $(h, g) = (1, \frac{2}{3})$ and $L_{\rm sys} = 8$.

\paragraph{Convergence as a function of simulation time.} We compare the three quadrature rules for the $y$-integral (Hermite, Trapezoidal, and Legendre) combined with a Legendre discretization of the $q$-integral.
The first term in \cref{eq:GLT_J} provides an estimate for the number of Legendre nodes, $L_q\approx \log{\frac{32e^\pi}{\sqrt{\pi} \epsilon^2 a_k^+ a_k^-}} + \frac{2\sqrt{2} a_k^+}{a_k^-}\log{\frac{4}{\epsilon a_k^-}}$, required to reach a desired accuracy.
However, as we have noticed that it often overestimates the optimal choice, we empirically determine $L_q$ closer to the optimal and report these values in the figures.
The convergence of the Legendre, trapezoidal, and Hermite rule is shown in the top panel of \cref{fig:MFIM_convergence_double_3} for the real pole $z=-1.1$ and accuracy $\epsilon = 10^{-6}$.
While all three combinations exhibit an exponential rate of convergence, the onset of convergence occurs earliest for the trapezoidal rule, resulting in the most efficient approximation.
In the bottom panel of \cref{fig:MFIM_convergence_double_3}, we compare the current standard in literature \cite{ChKoSo17}, \textit{i.e.}, the trapezoidal rule for both integrals, to our proposed Legendre-trapezoidal combination which is more efficient.

\begin{figure}[!htb]
  \centering
    \setlength\figureheight{2cm}
    \setlength\figurewidth{0.75\textwidth}	
%
  \begin{tikzpicture}
			\begin{axis}[%
				width=0.476\figurewidth,
				height=\figureheight,
				scale only axis,
				xmode=log,
				xmin=100,
				xmax=1e+06,
				xminorticks=true,
				xlabel={$\frac{T^{\rm tot}}{T^{\rm max}}$},
				ymode=log,
				ymin=1e-09,
				ymax=100,
				yminorticks=true,
				ylabel={$\Vert R(z)-\mathcal{I_J}(z)\Vert_2$},
				]
				\addplot [color=myblue, draw=none, mark=o, mark options={solid, myblue, thick}, mark size=2.5, forget plot]
				table[row sep=crcr]{%
					248.05	3.6147\\
					756.91	4.5489\\
					1266.1	2.8164\\
					1775.4	4.5454\\
					2284.6	4.5547\\
					2793.9	3.357\\
					3303.2	3.3164\\
					3812.5	2.8055\\
					4321.8	2.5676\\
					4831.1	2.5603\\
					5340.4	2.3295\\
					5849.7	2.1604\\
					6359	1.9969\\
					6868.3	1.8519\\
					7377.5	1.7483\\
					7886.8	1.6391\\
					8396.1	1.5526\\
					8905.4	1.4016\\
					9414.7	1.3017\\
					9924	0.8784\\
					9.9494e+03  7.4509e-01\\
					9.9749e+03  6.1614e-01\\
					1.0000e+04  4.9711e-01\\
					1.0026e+04  3.9164e-01\\
					1.0051e+04  3.0152e-01\\
					1.0077e+04  2.2703e-01\\
					1.0102e+04  1.6731e-01\\
					1.0128e+04  1.2076e-01\\
					1.0153e+04  8.5419e-02\\
					1.0179e+04  5.9253e-02\\
					1.0204e+04  4.0330e-02\\
					1.0230e+04  2.6949e-02\\
					1.0255e+04  1.7688e-02\\
					1.0281e+04  1.1409e-02\\
					1.0306e+04  7.2350e-03\\
					1.0331e+04  4.5127e-03\\
					1.0357e+04  2.7696e-03\\
					1.0382e+04  1.6731e-03\\
					1.0408e+04  9.9531e-04\\
					10433	0.00058319\\
					1.0459e+04  3.3672e-04\\
					1.0484e+04  1.9160e-04\\
					1.0510e+04  1.0751e-04\\
					1.0535e+04  5.9469e-05\\
					1.0561e+04  3.2470e-05\\
					1.0586e+04  1.7472e-05\\
					1.0611e+04  9.2985e-06\\
					10637	4.8652e-06\\
					10662	2.5329e-06\\
					10688	1.2828e-06\\
					10713	6.6093e-07\\
					10739	3.1718e-07\\
					10764	1.6912e-07\\
					10790	6.9808e-08\\
					10815	4.6214e-08\\
					10841	2.2242e-08\\
					10866	2.2242e-08\\
					10892	2.2242e-08\\
					10917	2.2242e-08\\
					10943	2.2242e-08\\
					11452	2.2242e-08\\
					11961	2.2242e-08\\
					12470	2.2242e-08\\
				};
				
				\addplot [color=myorange, draw=none, mark=+, mark options={solid, myorange, thick}, mark size=2.5, forget plot]
				table[row sep=crcr]{%
					222.22	18.785\\
					620.69	3.2678\\
					1020.4	4.5454\\
					1420.3	3.7568\\
					1820.2	2.8117\\
					2220.2	2.546\\
					2620.2	2.1731\\
					3020.1	1.9296\\
					3420.1	1.7338\\
					3820.1	1.5551\\
					4220.1	1.4017\\
					4620.1	1.2712\\
					5020.1  0.97519\\
					5040    0.7585\\
					5060.1  0.5009\\
					5080    0.2711\\
					5100.1  0.1171\\
					5120.0  3.9643e-02\\
					5140.1  1.0380e-02\\
					5160.0  2.0839e-03\\
					5180.1  3.1872e-04\\
					5200.0  3.6981e-05\\
					5220.1  3.2345e-06\\
					5240.0  2.2200e-07\\
					5260.1  1.9033e-08\\
					5280.0  1.9030e-08\\
					5300.1  1.9028e-08\\
					5320.0  1.9026e-08\\
					5340.1  1.9024e-08\\
					5360.0  1.9022e-08\\
					5380.1  1.9020e-08\\
					5400.0  1.9019e-08\\
					5420.1  1.9017e-08\\
					5820.1  1.8986e-08\\
					6220.1  1.8964e-08\\
					6620.1  1.8948e-08\\
					7020.1  1.8935e-08\\
					7420.1  1.8926e-08\\
					7820.1  1.8918e-08\\
					8220	1.9077e-08\\
					8620	1.9381e-08\\
					9020	1.9643e-08\\
					9420	1.987e-08\\
					9820	2.0068e-08\\
				};
				
				\addplot [color=myred, draw=none, mark=asterisk, mark options={solid, myred, thick}, mark size=2.5, forget plot]
				table[row sep=crcr]{%
					1.7014e+05	1.6107\\
					1.963e+05	1.5314\\
					2.2367e+05	1.4754\\
					2.5221e+05	1.4197\\
					2.8187e+05	1.3592\\
					3.1261e+05	1.3249\\
					3.4439e+05	1.2711\\
					3.7719e+05	1.2332\\
					4.1096e+05	1.2094\\
					4.4568e+05	1.0344\\
					4.4921e+05	0.96437\\
					4.5274e+05	0.87984\\
					4.5628e+05	0.78298\\
					4.5983e+05	0.6776\\
					4.6339e+05	0.56868\\
					4.6696e+05	0.46173\\
					4.7054e+05	0.36192\\
					4.7413e+05	0.27338\\
					4.7773e+05	0.19869\\
					4.8133e+05	0.13877\\
					4.8495e+05	0.093046\\
					4.8857e+05	0.059836\\
					4.922e+05	0.036881\\
					4.9585e+05	0.021776\\
					4.995e+05	0.01231\\
					5.0316e+05	0.006661\\
					5.0682e+05	0.0034487\\
					5.105e+05	0.0017081\\
					5.1419e+05	0.00080925\\
					5.1788e+05	0.00036667\\
					5.2159e+05	0.00015888\\
					5.253e+05	6.5837e-05\\
					5.2902e+05	2.6089e-05\\
					5.3275e+05	9.8864e-06\\
					5.3649e+05	3.5829e-06\\
					5.4024e+05	1.2418e-06\\
					5.44e+05	9.8106e-07\\
					5.4776e+05	9.8106e-07\\
					5.5154e+05	9.8106e-07\\
					5.5532e+05	9.8106e-07\\
					5.9361e+05	9.8106e-07\\
				};
        \draw [dashed,thick,black] (0,1e-6) -- (1e+6,1e-6);
			\end{axis}
    				\setlength{\tabcolsep}{5pt}
				\renewcommand\arraystretch{1.7}
				  \node (somenode) [shape=rectangle] at (9.2,1.2) {
			{\footnotesize \begin{tabular}{l|llll}		
				\quad $q/y$-int & $q^{\rm max}$                                         & $L_q$ & $y^{\rm max}$                             \\ \hline
			{\color{myblue}$\circ$} Leg/Leg     & $\frac{2}{a^-_k}\sqrt{\log\frac{2}{\epsilon a_k^-}}$   & 80    & $\sqrt{2\log \frac{2}{\epsilon a^-_k}}$  \\
			{\color{myorange}$+$} Leg/Trap    & $\frac{2}{a^-_k}\sqrt{\log\frac{2}{\epsilon a_k^-}}$   & 80    & $\sqrt{2\log \frac{2}{\epsilon a^-_k}}$  \\
			{\textcolor{myred}{$\ast$}} Leg/Her      & $\frac{2.6}{a^-_k}\sqrt{\log\frac{2}{\epsilon a_k^-}}$ & 80    & $[-\infty,\infty]$                  
			\end{tabular}}
		}; 
		\end{tikzpicture}

		\begin{tikzpicture}
			\begin{axis}[%
width=0.476\figurewidth,
height=\figureheight,
scale only axis,
xmode=log,
xmin=100,
xmax=10000,
xminorticks=true,
xlabel={$\frac{T^{\rm tot}}{T^{\rm max}}$},
ylabel={$\Vert R(z)-\mathcal{I_J}(z)\Vert_2$},
ymode=log,
ymin=1e-9,
ymax=100,
yminorticks=true,
]
\addplot [color=mypurple, draw=none, mark=triangle, mark options={solid, mypurple, thick}, mark size=2.5, forget plot]
  table[row sep=crcr]{%
135.5	32.111\\
474.26	4.8485\\
813.01	2.3701\\
1151.8	1.4407\\
1490.5	0.95387\\
1829.3	0.65429\\
2168	0.45149\\
2506.8	0.30625\\
2845.5	0.20058\\
3184.3	0.12547\\
3523	0.074579\\
3861.8	0.042048\\
4200.6	0.022475\\
4539.3	0.011387\\
4878.1	0.0054684\\
5216.8	0.0024893\\
5555.6	0.0010741\\
5894.3	0.00043927\\
6233.1	0.00017029\\
6571.8	6.2573e-05\\
6910.6	2.1793e-05\\
7249.3	7.1939e-06\\
7588.1	2.2498e-06\\
7926.9	6.6501e-07\\
8265.6	1.8182e-07\\
8604.4	2.3614e-08\\
8943.1	5.2019e-09\\
9281.9	5.2063e-09\\
9620.6	5.2101e-09\\
9959.4	5.2136e-09\\
};

\addplot [color=myorange, draw=none, mark=+, mark options={solid, myorange, thick}, mark size=2.5, forget plot]
  table[row sep=crcr]{%
135.5	2.7115\\
474.26	0.66235\\
813.01	0.18938\\
1151.8	0.046483\\
1490.5	0.019945\\
1829.3	0.0038781\\
2168	0.00064959\\
2506.8	0.00010486\\
2845.5	1.9592e-05\\
3184.3	4.7607e-06\\
3523	8.9114e-07\\
3861.8	1.3702e-07\\
4200.6	2.6245e-08\\
4539.3	1.9267e-08\\
4878.1	1.8112e-08\\
5216.8	1.9694e-08\\
5555.6	2.1544e-08\\
5894.3	2.0418e-08\\
6233.1	2.4598e-08\\
6571.8	2.1891e-08\\
6910.6	1.422e-08\\
7249.3	1.8205e-08\\
7588.1	1.7068e-08\\
7926.9	1.1647e-08\\
8265.6	1.3353e-08\\
8604.4	1.2811e-08\\
8943.1	1.1111e-08\\
9281.9	1.0123e-08\\
9620.6	1.0255e-08\\
9959.4	8.9718e-09\\
};
\draw [dashed,thick,black] (0,1e-6) -- (1e+4,1e-6);
\end{axis}
    				\setlength{\tabcolsep}{5pt}
				\renewcommand\arraystretch{1.7}
				  \node (somenode) [shape=rectangle] at (9.3,1.6) {
			{\footnotesize \begin{tabular}{l|llll}		
				\quad $q/y$-int & $q^{\rm max}$                                          & $y^{\rm max}$    & $L_y$                          \\ \hline
				{\color{mypurple} $\triangle$} Trap/Trap     & $\frac{2}{a^-_k}\sqrt{\log\frac{2}{\epsilon a_k^-}}$      & $\sqrt{2\log \frac{2}{\epsilon a^-_k}}$ & 270 \\
				{\color{myorange}$+$} Leg/Trap    & $\frac{2}{a^-_k}\sqrt{\log\frac{2}{\epsilon a_k^-}}$      & $\sqrt{2\log \frac{2}{\epsilon a^-_k}}$ & 270        
			\end{tabular}}
		}; 
		\end{tikzpicture}
    \caption{Error of quadrature approximations to $R(z_k)$ for the MFIM Hamiltonian with $L_{\rm sys}=8$ spins, a requested accuracy $\epsilon=10^{-6}$, and $z_k = -1.1$, in terms of the cost metric $\frac{T^{\rm tot}}{T^{\rm max}}$ for scaled total evolution time. The table shows the parameter settings and quadrature rules used for the outer $q$-integral and inner $y$-integral. \textbf{Top:} Number of nodes $L_y$ to discretize the $y$-integral is varied. \textbf{Bottom:} Number of nodes $L_q$ to discretize the $q$-integral is varied. Note: $a_k^-=0.1$.
    }
    \label{fig:MFIM_convergence_double_3}
\end{figure}
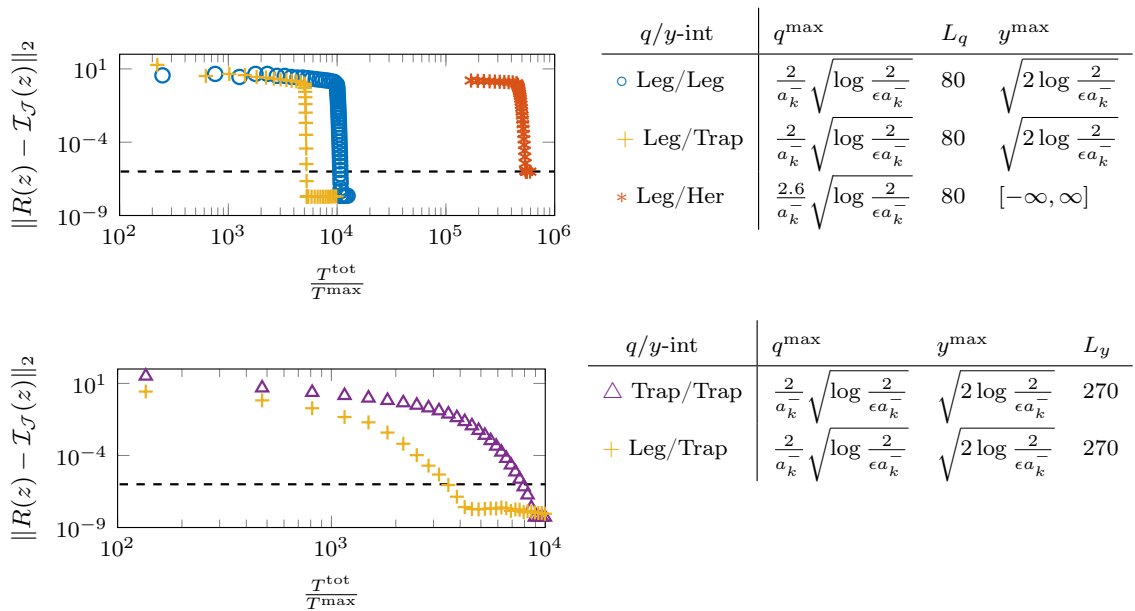

\paragraph{Simulation time as a function of pole position.} \cref{eq:GLT_J} implies that the cost of approximating $R(z_k)$ increases as $z_k$ moves closer to an eigenvalue.
In \cref{fig:cost_double} we report the cost $\frac{T^{\rm tot}}{T^{\rm max}_k}$ required to approximate $R(z_k)$ up to accuracy $\epsilon = 10^{-6}$ as $z_k$ approaches $E_0$.
The Hermite rule is significantly more expensive, and the trapezoidal rule is the most efficient. The estimate for $L_q$ suggests that for $z_k=-1-2^{-k-1}$ ($a_k^- = 2^{-k-1}$) the number of nodes should increase approximately by a multiplicative factor $2$ if $k$ increases by 1. However, the figure shows that a linearly increasing number of nodes is sufficient. The bound provided in \cref{eq:GLT_tmax}, based on the $L_q$ reported next to the figure, is shown as the solid line in \cref{fig:cost_double}. It largely overestimates the observed cost and seems to exhibit different asymptotic behavior.
The difference between the observed and predicted asymptotic behavior 
might be explained by the use of loose (pessimistic) bounds in the proof of Theorem 2. A more careful analysis of the errors could result in tighter bounds and a more accurate prediction of the asymptotic behavior, an important direction for future work.

\begin{figure}[!htb]
    \centering
    \setlength\figureheight{2cm}
    \setlength\figurewidth{0.75\textwidth}	
    		\begin{tikzpicture}

\begin{axis}[%
width=0.476\figurewidth,
height=\figureheight,
scale only axis,
xmin=0,
xmax=0.6,
xlabel=$a^{-}$,
ymode=log,
ymin=1,
ymax=1e9,
yminorticks=true,
ylabel=$\frac{T^{\rm tot}}{T^{\rm max}}$,
axis background/.style={fill=white},
]
\addplot [color=myblue , draw=none, mark=o, mark options={solid, myblue, thick}, mark size=2.5, forget plot]
  table[row sep=crcr]{%
0.5 1019\\
0.25 2923.9\\
0.125 7591.6\\
0.0625 19014\\
0.03125 46559\\
};

\addplot [color=myorange, draw=none, mark=+, mark options={solid, myorange, thick}, mark size=2.5, forget plot]
  table[row sep=crcr]{%
0.5 502.6\\
0.25 1425\\
0.125 3710\\
0.0625 9315\\
0.03125 22908\\
};
\addplot [color=myorange, forget plot]
  table[row sep=crcr]{%
0.5 3158000\\
0.25 8337000\\
0.125 26588000\\
0.0625 96631000\\
0.03125 789980000\\
};
\addplot [color=myred, draw=none, mark=asterisk, mark options={solid, myred, thick}, mark size=2.5, forget plot]
  table[row sep=crcr]{%
0.5 30671\\
0.25 242960\\
0.125 2045700 \\
};

\end{axis}

    				\setlength{\tabcolsep}{4pt}
				\renewcommand\arraystretch{1.7}
				  \node (somenode) [shape=rectangle] at (9.5,1.2) {
			{\footnotesize \begin{tabular}{l|llll}		
				\quad $q/y$-int & $q^{\rm max}$                                         & $L_q$ & $y^{\rm max}$                             \\ \hline
			{\color{myblue}$\circ$} Leg/Leg     & $\frac{2}{a^-_k}\sqrt{\log\frac{2}{\epsilon a_k^-}}$   & $30+20k$    & $\sqrt{2\log \frac{2}{\epsilon a^-_k}}$  \\
			{\color{myorange}$+$} Leg/Trap    & $\frac{2}{a^-_k}\sqrt{\log\frac{2}{\epsilon a_k^-}}$   & $30+20k$    & $\sqrt{2\log \frac{2}{\epsilon a^-_k}}$  \\
			{\textcolor{myred}{$\ast$}} Leg/Her      & $\frac{4}{a^-_k}\sqrt{\log\frac{2}{\epsilon a_k^-}}$ & $45+30k$    & $[-\infty,\infty]$                  
			\end{tabular}}
		}; 
		\end{tikzpicture}
    \caption{Cost $\frac{T^{\rm tot}}{T^{\rm max}}$ for quadrature rules necessary to compute an $\epsilon$-approximation to $(z-H)^{-1}$ with accuracy $\epsilon=10^{-6}$.
    We consider $z= -1-2^{-k-1}$ for $k=0,1,\dots, 4$. The solid line shows the cost predicted by \cref{thm:2}  for the combination of Leg/Trap rule.
    }
    \label{fig:cost_double}
\end{figure}
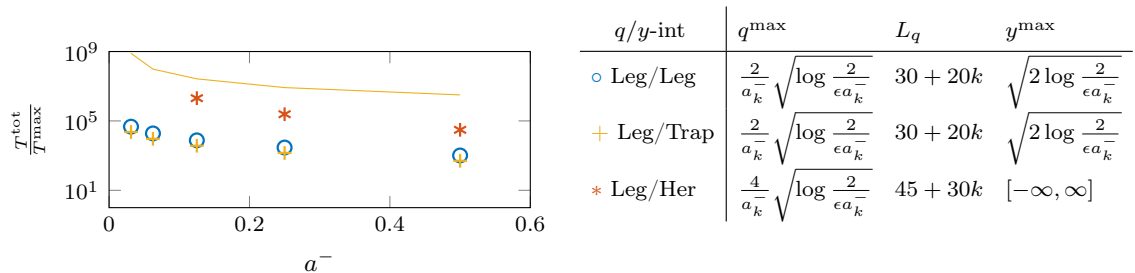

\section{Continuous-time LCHS construction of a resolvent}
\label{sec:resolvent_continuous}
In general, the resources required for 
Hamiltonian simulation must scale at least linearly with the simulation time, due to the no fast-forwarding theorem~\cite{Berry2007,Childs2010}.
This can make long-time simulations 
exceedingly costly, thus limiting the applicability of the discrete-time LCHS approach elaborated in \cref{sec:resolvent_discrete}.
In pursuit of a complementary strategy, we consider constructing the resolvent through the \textit{continuous-time} LCHS approach.
This allows us to control the cost of Hamiltonian simulations by the introduction of continuous-variable ancillae~\cite{LCU2023}.

The main idea behind the continuous-time approach is to associate the integral form of \cref{eq:resolvent_unitary} with some spatially extended ancillary state, \textit{e.g.}, a continuous-variable wavefunction describing a harmonic oscillator. The ancilla is coupled to the system via a total Hamiltonian $\Tilde{H} = H_{\rm sys} \otimes V$, where the system Hamiltonian $H_{\rm sys}(H;z_k)$ contains the operator of interest $H$, and the ancillary potential $V(\hat{q},\hat{y};z_k)$ depends on two (commuting) position operators $\hat{q}$ and $\hat{y}$. This approach requires hybrid quantum information processing which utilizes both discrete qubits and continuous Gaussian states~\cite{Andersen2015,liu2024hybrid}.
\cref{fig:continuous_time} illustrates the continuous-time LCHS procedure for a complex pole $z\notin \mathbb{R}$, which involves sampling a set of 1D Gaussian ancillary wavefunctions.
\begin{figure*}[tbh!]
    \centering
    \includegraphics[scale=0.425]{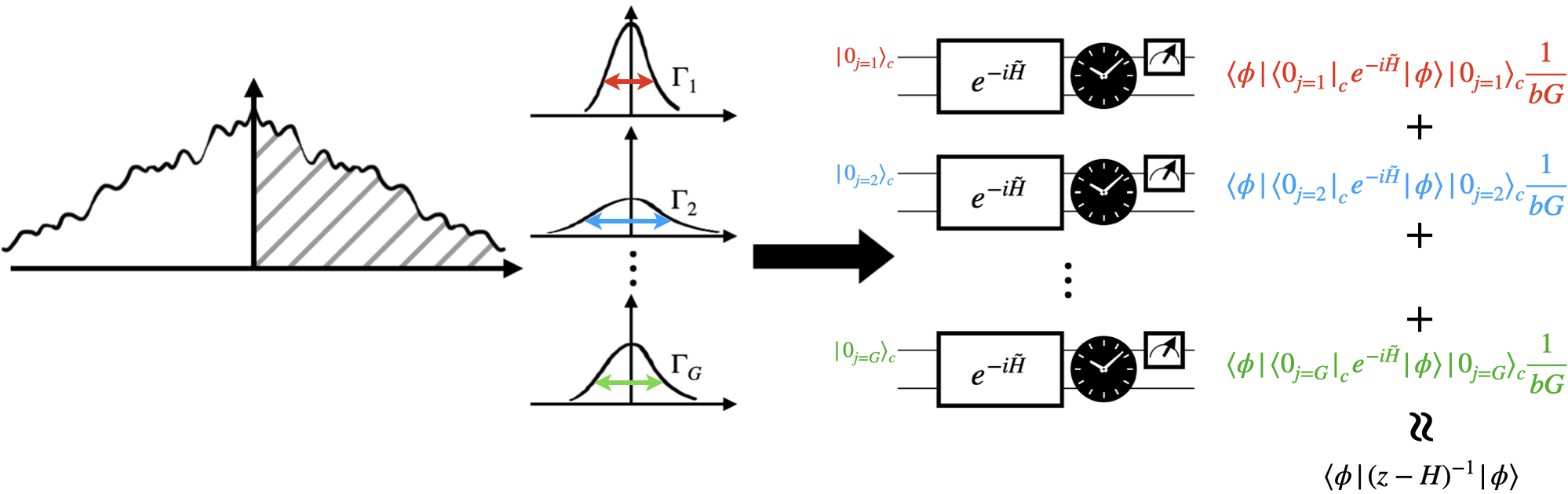}
    \caption{Continuous-time LCHS construction of the resolvent $R(z) = (z-H)^{-1}$ via the Gaussian representation. The integral of a decaying oscillatory function is approximated using a weighted combination of Gaussian integrals of the form $\frac{1}{\sqrt{2\pi\Gamma}} \int_{-\infty}^{\infty} dq \, e^{-q^2/2\Gamma - i(a-H)\abs{q}}$. On quantum hardware the Hamiltonian simulation $U = e^{-i\Tilde{H}}$ with a constant duration can be performed efficiently. The circuit diagram shows the composite evolution with ancilla $\ket{0_j}_c$ prepared in the Gaussian state. Repeated measurements in the computational basis enable us to access scalar quantities such as $\langle \phi \vert \bra{0_j}_{c} e^{- i\Tilde{H}} \vert \phi \rangle \ket{0_j}_{c}$, which can be collected and combined in post-processing to obtain $\langle \phi\vert R(z) \vert \phi \rangle$.}
\label{fig:continuous_time}
\end{figure*}

For the continuous-time LCHS framework, we introduce efficient strategies to approximate the integral transform of \cref{eq:resolvent_unitary}: \cref{subsec:Dirac_delta_continuous} uses the Dirac-delta kernel (complex poles) and \cref{subsec:Gaussian_continuous} uses the Gaussian kernel (real poles). \cref{subsec:Dirac_delta_continuous_numerics} and \cref{subsec:Gaussian_continuous_numerics} contain the corresponding numerical illustrations using the same MFIM Hamiltonian as considered in \cref{sec:resolvent_discrete}.

\subsection{Dirac-delta kernel for complex poles}
\label{subsec:Dirac_delta_continuous}

Formally, the continuous-time implementation requires the preparation of a \textit{single} ancilla,
\begin{align}
    \ket{0}_{c} = \int_{0}^{\infty} dq \, \psi(q;b_k) \ket{q}
    \label{eq:resolvent_ancilla}
\end{align}
where we can reinterpret the integration variable $q$ as the ancillary position degree of freedom and $\psi(q)$ as the associated ancillary wavefunction. Ideally, we aim to prepare $\psi(q;b_k) = \sqrt{b_k} e^{-b_{k}q/2}$, whose squared amplitude replicates the exponential decay in the resolvent integral representation of \cref{eq:resolvent_I}. 
We then pair the system qubits with this continuous-variable ancilla. Evolving under the Hamiltonian $\Tilde{H} = (H - a_k) \otimes q$, the composite state (derivation shown in \cref{app:cont_complex}) becomes,
\begin{align}
    e^{-i\Tilde{H}} \ket{\phi}\ket{0}_{c} = i b_k  R(z_k) \ket{\phi} \ket{0}_c + \ket{\perp}\label{eq:resolvent_continuousII},
\end{align}
where the unwanted component $\ket{\perp}$ can be eliminated by post-selecting the ancilla.
Thus, with a time evolution and post-selection we can implement the application of the resolvent on a state $\ket{\phi}$.
However, despite the simple analytical form of $\psi$, its accurate preparation on quantum hardware can be a nontrivial task, for example due to a discontinuity in the wavefunction at $q=0$.
To address this issue, we present an efficient single-ancilla approach for encoding the resolvent in this section.
The basic mathematical formulation of our approach is summarized in the following scheme and elaborated in \cref{app:cont_complex}.

\bigskip

{\centering

	\begin{tikzpicture}
		\node[draw = none,minimum height = 1cm, text centered,text width=3.5cm] (R) at (-1.5,1.3) {State preparation $\ket{0}_c$};	
		\node[draw = none,minimum height = 1cm, text centered,text width=4.1cm] (I) at (6,1.3) {$R(z_k)$};
		\draw[->] (R.east) --node[above,sloped]{Evolve} node[below,sloped]{Post-selection}  (I.west);	
		
		\node[draw = none,minimum height = 1cm, text centered,text width=4cm] (Resolvent) at (-1.5,0) {$\sqrt{ b_k } \int_{0}^{\infty} dq \, e^{-b_{k}q/2} \ket{q}$};	
		\node[draw = none,minimum height = 1cm, text centered,text width=4.1cm] (evolve) at (6,0) {$i b_k  R(z_k) \ket{\phi} \ket{0}_c$};	
		\draw[->] (Resolvent.east) --node[above,sloped]{$e^{i(H - a_k) \otimes \hat{q}}$}  (evolve.west);	
		
		\node[draw = none,minimum height = 1cm, text centered,text width=6cm] (symm) at (-1.5,-2.3) {$\frac{1}{b_k} \int_{-\infty}^{\infty} dq \, \int_{0}^{\infty} d\Gamma \, \rho_{\rm mix}(\Gamma) \varphi_{g}(q;\Gamma) e^{i (a_k - H) \lvert q \rvert}$};
		\draw[->] (Resolvent.south) --node[mid left]{Gaussian}node[mid right]{representation} (symm.north);	
		
		\node[draw = none,minimum height = 1cm, text centered,text width=5cm] (MC) at (-2,-4.6) {$\frac{1}{b_k G} \sum_{j=1}^{G} \int_{-\infty}^{\infty} dq \, \sqrt{\varphi_{g}(q;\Gamma_{j})} \ket{q}\qquad$};
		\draw[->] (symm.south) --node[mid left]{Monte Carlo}node[mid right]{on $\rho_{\rm mix}(\Gamma)$} (-1.5,-4.1);

		\node[draw = none,minimum height = 1cm, text centered,text width=7cm] (ev) at (6.7,-4.6) {$\underbrace{\frac{1}{b_k G} \sum_{j=1}^{G} \int_{-\infty}^{\infty} dq \, \varphi_{g}(q;\Gamma_{j}) e^{i(a_k - H) \lvert q \rvert }}_{\approx i R(z_k)}\ket{\phi}\ket{0}_c$};	
		\draw[->] (MC.east) --node[above,sloped]{$e^{i(H - a_k) \otimes \lvert \hat{q} \rvert}$}  (ev.west);	

		\draw[black,thick] ($(R.north west)+(-1.2,0)$)  rectangle ($(ev.south east)+(-0.1,-0)$);
	\end{tikzpicture}

\smallskip
Our proposed single-ancilla approach essentially relies on a Gaussian approximation of $R(z_k)$.
The approximation expands the decaying term $e^{-b_k q}$ using a smooth Gaussian basis set,
\begin{equation}
    R(z_k) = i\int_{0}^{\infty} dq \, e^{-b_k q + i(a_k - H)q} = \frac{i}{b_k} \int_{-\infty}^{\infty} dq \, \int_{0}^{\infty} d\Gamma \, \rho_{\rm mix}(\Gamma|b_k) \varphi_{g}(q;\Gamma) e^{i (a_k - H) \lvert q \rvert}. \label{eq:resolvent_gaussian}
\end{equation}
The Gaussian wavefunction of spatial variance $\Gamma$ centered at $q=0$, 
\begin{align}
    \varphi_{g}(q;\Gamma) = \frac{1}{\sqrt{2\pi\Gamma}} e^{-q^2/2\Gamma}, \label{eq:GaussianWavefunction}
\end{align}
allows us to use ancillae prepared in the state
\begin{align}
   \ket{0}_{c} = \int_{-\infty}^{\infty} dq \, \sqrt{\varphi_{g}(q;\Gamma)} \ket{q}.
   \label{eq:ket_c}
\end{align}
Being the ground state of a harmonic oscillator, such a Gaussian state constitutes one of the most accessible resources in continuous-variable computing and is easier to prepare compared to the wavefunction $\psi(q;b_k)$ introduced in \cref{eq:resolvent_ancilla}.

For the approximation of the integral in \cref{eq:resolvent_gaussian}, we note that the mixing density
\begin{align}
    \rho_{\rm mix}(\Gamma|b_k) = \frac{b_k^2 }{2} e^{- b_k^2 \Gamma/2}.\label{eq:mixing_density}
\end{align} 
is a probability measure.
Therefore, \cref{eq:resolvent_gaussian} can be reinterpreted in terms of the expectation $\mathbb{E}_{\rm mix}$, i.e., an average with respect to $\rho_{\rm mix}$, and can be approximated with Monte-Carlo sampling
\begin{align}
    iR(z_k) &= \frac{1}{b_k} \mathbb{E}_{\rm mix} \bigg[ \int_{-\infty}^{\infty} dq \, \varphi_g(q;\Gamma) e^{i(a_k - H) \lvert q \rvert} \bigg], \\
    &\approx\frac{1}{b_k G} \sum_{j=1}^{G} \underbrace{\int_{-\infty}^{\infty} dq \, \varphi_{g}(q;\Gamma_{j}) e^{i(a_k - H) \lvert q \rvert}}_{:= A_j} =: \mathcal{I}_G(z),\label{eq:Dirac_delta_stochastic_MC}
\end{align}
where we call $A_j$ the stochastic resolvent.
This interpretation of the resolvent as an expectation over Gaussian functions is the key to our continuous approach. 
It allows us to state the following lemma, which shows that Gaussian wavefunctions can indeed be used to prepare the action of the resolvent on any state.

\begin{lem}\label{lem:continuous_Gaussian}
    Consider the ancilla states $\ket{0_j}_c = \int_{-\infty}^\infty dq\, \sqrt{\varphi_g(q;\Gamma_j)}\ket{q}$, $j=1,\dots, G$, where $\Gamma_j$ is sampled from the probability measure in \cref{eq:mixing_density}, $\rho_{\rm mix}(\Gamma; b_k)$, and $\varphi_g(q;\Gamma_j)$ is the Gaussian wavefunction in \cref{eq:GaussianWavefunction}.
    The sum of time evolutions of unit duration under the total Hamiltonian $\tilde{H} = (H-a_k)\otimes\vert q \vert$ applied to composite states $\ket{\phi}\ket{0_j}_c$,
    \begin{equation}
        \frac{1}{b_k G}\sum_{j=1}^G e^{-i\tilde{H}}\ket{\phi}\ket{0_j}_c,
    \end{equation}
    prepares the state $iR(z_k)/\Vert R(z_k)\ket{\phi}$.
    The approximation error decays as $\frac{1}{\sqrt{G}}$ and, for each $j$, the state $e^{-i\tilde{H}}\ket{\phi}\ket{0_j}_c$ is prepared with success probability 
    \begin{equation}
        P_{A_j} = \Vert A_j \ket{\phi} \Vert^2_2\sum_{n=0}^{N-1} p_n e^{- \Gamma_{j} E_n^2} \bigg[ 1 + {\rm erfi}^2  \bigg( \sqrt{\frac{\Gamma_{j} }{2}} E_n \bigg) \bigg],
        \label{eq:continuous_single_sucess_prob}
    \end{equation}
    where ${\rm erfi}(\cdot)$ denotes the imaginary error function, $p_n = \lvert \braket{E_n|\phi} \rvert^2$ is the squared overlap between the $n$th eigenstate $\ket{E_n}$ of $H$ and the system state $\ket{\phi}$.
\end{lem}
\begin{proof}
    See \cref{app:cont_complex}.
\end{proof}

By \cref{{lem:continuous_Gaussian}}, the success probability depends on the initial state $\ket{\phi}$, the Hamiltonian spectrum ${\rm spec}(H) := (E_0, E_1, \cdots, E_{N-1})$, and the pole position $z_k = a_k+ib_k$ by the appearance of $b_k$ in $\rho_{\rm{mix}}$.

Put differently, we have shown that the real-time evolution $e^{-i\Tilde{H}}$ of \textit{unit} duration can be viewed algorithmically as an efficient $(1,\mathbf{1},0)$-BE of the stochastic resolvent $A_{j}$, except that $(i)$ the ancilla is bosonic (we will use bold number to highlight this difference) and $(ii)$ its initialization depends on $A_{j}$. This approach avoids the need of discretizing the integral of \cref{eq:resolvent_I} in time. Moreover, the maximal runtime of the Hamiltonian simulations, a major expense in the discrete-time LCHS, becomes independent of the Hamiltonian structure or pole location. In exchange, the cost of ancilla initialization now reflects the dependence on the pole as suggested by the dependence of $\rho_{\rm mix}$ on $b_k$. Here, preparing a broader Gaussian wavefunction, which results from a pole located closer to the real line, incurs a higher expense. This is due to the greater complexity in coherently loading and manipulating the wavefunction across a larger spatial region.

With such continuous-variable BE, we may formally implement the SEL and PREP oracles,
\begin{align}
    U^{\rm SEL} = -i \sum_{j=1}^{G} \ket{j} \bra{j} \otimes \big( e^{-i\tilde{H}} U_{c, j} \big), \qquad  U^{\rm PREP} \ket{0^{\log_2 G}} = \frac{1}{\sqrt{ G }} \sum_{j=1}^{G} \ket{j},
\end{align}
where $U_{c,j}$ is a circuit capable of preparing the Gaussian ancilla $\ket{0_j}_{\rm c}$. For example, we can define $U_{c,j} \ket{e}_{c} = \ket{0_{j}}_{c}$ for a fixed ancillary excited state $\ket{e}_{c}$ that can decay to the relevant ground state via tunable stimulated emission. Therefore, we obtain a $(\frac{1}{b_k}, \mathbf{1}+\log_{2} G,\epsilon_{G})$-BE of the resolvent in \cref{eq:Dirac_delta_stochastic_MC} with $\log_{2} G$ regular ancillae and an error $\epsilon_{G}$ from stochastic sampling.

In addition to our unbiased estimator from \cref{eq:Dirac_delta_stochastic_MC}, we also consider a biased estimator as a complementary benchmark. While the unbiased estimator represents the resolvent as a stochastic Gaussian mixture and samples the Gaussian widths from a continuous distribution, this alternative replaces the random sampling with a deterministic approximation: a finite sum of equally weighted Gaussian wavefunctions with prescribed widths. Specifically, we approximate the resolvent via the following construction:
\begin{align}
    iR(z_k) \approx \frac{1}{b_k G} \sum_{j=1}^{G} \int_{-\infty}^{\infty} dq \, \varphi_{g}(q;\Gamma_{j}) e^{- i H \lvert q \rvert}, \quad \text{where } \Gamma_j = - \frac{2}{b_k^2} \log \frac{1+2(j-1)}{2G},
    \label{eq:resolvent_biased}
\end{align}
whose bias can be systematically suppressed by increasing the number of Gaussians $G$. We will evaluate its performance numerically alongside the unbiased estimator, highlighting the resulting bias-variance tradeoff in the next section.

Finally, we leave two remarks. First, the effective Hamiltonian $\Tilde{H} = H \otimes \abs{\hat{q}}$ in the continuous-time LCHS contains a non-differentiable ancillary potential $\abs{\hat{q}}$. For the potential to be physically implemented on analog platforms, we may consider its smooth modifications, \textit{e.g.}, $\abs{\hat{q}} \mapsto \sqrt{\hat{q}^2 + q_0^2}$ for sufficiently small $q_0$. Second, instead of evolving the total Hamiltonian for a unit duration (as insisted in this work), we may in principle increase the simulation duration to reduce the ancilla-related cost. This can be seen from \cref{eq:ket_c}: a simulation $e^{-i\tilde{H}t}$ of duration $t$ can be associated with a rescaled mixing density $\rho_{\rm mix}(\Gamma|tb_k)$ through the change of variables $(q, \Gamma) \mapsto (tq, t^2 \Gamma)$. For $t>1$, the rescaling leads to a smaller sampled Gaussian variance on average, thus simplifying the ancilla preparation and enhancing the post-selection probability as indicated by $\rho_{\rm mix}$ and \cref{eq:continuous_single_sucess_prob}, respectively. Such possible trade-off between the simulation time and ancillary cost merits further investigation.

\subsection{Numerical experiment for complex poles}
\label{subsec:Dirac_delta_continuous_numerics}
For the MFIM Hamiltonian considered in \cref{subsec:Dirac_delta_discrete_numerics,subsec:Gaussian_discrete_numerics}, we explore the continuous LCHS strategy discussed above for complex poles. 

In the continuous-time scheme, the Gaussian variance $\Gamma$ plays a crucial role in determining the quantum cost, impacting both the ancilla preparation and the post-selection process. Specifically, the Gaussian variance follows an exponential distribution $\rho_{\rm mix}(\Gamma)$ whose moments depend on the distance of the pole to the real line. We first note that the unbiased estimator of \cref{eq:Dirac_delta_stochastic_MC}, composed of $G$ Gaussians, converges to the resolvent at a rate of $\mathcal{O}\left( \frac{1}{\sqrt{G}} \right)$ as guaranteed by the central limit theorem (CLT). Moreover, the fluctuation of the Gaussian variance, $\mathbb{E}_{\rm mix}[\Gamma^2] - \mathbb{E}_{\rm mix}[\Gamma]^2= \frac{4}{b_k^4}$ (c.f. \cref{eq:mixing_density}), can become unfavorably large if the pole is located near the real line. To achieve faster convergence with a better-controlled resource estimate, we may therefore replace the continuous Gaussian mixture by a suitable finite sum.

\begin{figure*}[hbt!]
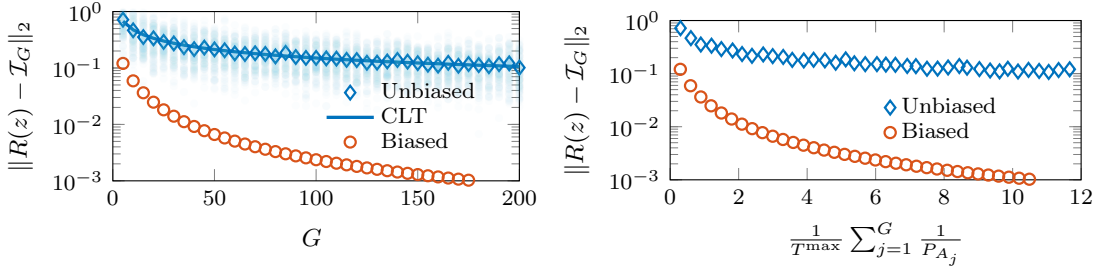

\centering
\plotconv[%
  width=0.48\textwidth,%
  xlabel={$G$ $\vphantom{\frac{1}{T^{\rm max}} \sum_{j=1}^{G} \frac{1}{P_{A_j}}}$},%
  xmax=200,%
  ylabel={$\lVert R(z) - \mathcal{I}_{G} \rVert_2$},
  ytick={1e-3, 1e-2, 1e-1 ,1e+0},%
  legend style={draw=none, font=\footnotesize, at={(0.5,0.375)},anchor=west, fill=none}
]{
\addplot[myColOne,only marks,thick,mark=diamond,mark size=2.5]%
  table[x index=0,y index=1] {\datfile{cLCU_single_G_unbiased_eps0.001}};
\addplot[myColOne,very thick,solid]%
  table[x index=0,y index=1] {\datfile{cLCU_single_G_CLT_eps0.001}};
\addplot[myColTwo,only marks,thick,solid,mark=o,mark size=2]%
  table[x index=0,y index=1] {\datfile{cLCU_single_G_biased_eps0.001}};
\addplot[myColSix,only marks,thick,solid,mark size=1,opacity=0.02]%
  table[x index=0,y index=1] {\datfile{testje}};
\legend{Unbiased,CLT,Biased};
}{0}{200}{1e-3}{1e+0}%
\hfill%
\plotconv[%
  width=0.48\textwidth,%
  xlabel={$ \frac{1}{T^{\rm max}} \sum_{j=1}^{G} \frac{1}{P_{A_j}}$},%
  xmax=12,%
  ylabel={},%
  ytick={1e-3, 1e-2, 1e-1 ,1e+0},%
  legend style={draw=none, font=\footnotesize, at={(0.5,0.375)},anchor=west, fill=none}
]{
\addplot[myColOne,only marks,thick,solid,mark=diamond,mark size=2.5]%
  table[x index=0,y index=1] {\datfile{cLCU_single_C_unbiased_eps0.001}};
\addplot[myColTwo,only marks,thick,solid,mark=o,mark size=2]%
  table[x index=0,y index=1] {\datfile{cLCU_single_C_biased_eps0.001}};
\legend{Unbiased,Biased};
}{0}{12}{1e-3}{1e+0}%
\caption{Resolvent approximation by sampling from a collection of Gaussian ancillary wavefunctions, for $z = -0.8+0.2i$ and the (scaled) MFIM Hamiltonian with $L_{\rm sys} =8$ spins. The approximation $\mathcal{I}_{G}(z)$ is computed using the unbiased ({\color{myblue}{$\diamond$}}) and biased (\textcolor{myred}{$\circ$}) estimator (\cref{eq:Dirac_delta_stochastic_MC} and \cref{eq:resolvent_biased} respectively) for a requested accuracy of $\epsilon = 10^{-3}$. To obtain the success probabilities, a reference state $\ket{\phi}$ with  $p_0 = 0.4$ and $p_n = \frac{1-p_0}{N-1}$ for $n \geq 1$ is considered. \textbf{Left:} Approximation error $\Vert R(z)-\mathcal{I}_{G}(z)\Vert_2$ as a function of the number of Gaussians $G$. For given $G$, the unbiased estimator is evaluated over a total of 100 trials (with each trial randomly drawing $G$ Gaussian variances from the mixing density $\rho_{\rm mix}$), while the biased estimator with prescribed Gaussian variances is constructed in a single trial. The error for the unbiased estimator is averaged over the 100 trials, with individual trials displayed in the background in light blue. The central limit scaling is represented as the solid line. \textbf{Right:} Approximation error as a function of the approximate quantum cost $\frac{1}{T^{\rm max}} \sum_{j=1}^{G} \frac{1}{P_{A_j}}$. The sum of the inverse probabilities is normalized by the maximal runtime $T^{\rm max}(\epsilon)$ in the discrete-time approach.}
\label{fig:MFIM_continuous_convergence_single}
\end{figure*}

To demonstrate the continuous-time LCHS approach, we perform the resolvent construction using the two kinds of Gaussian approximations introduced in \cref{subsec:Dirac_delta_continuous}, namely \cref{eq:Dirac_delta_stochastic_MC} and \cref{eq:resolvent_biased}. 


\paragraph{Convergence as a function of number of Gaussians.}
We evaluate the two Gaussian approximations for the complex pole $z = - 0.8 + 0.2 i$ in \cref{fig:MFIM_continuous_convergence_single}. The left panel shows the approximation error, $\lVert R(z) - \mathcal{I}_{G} \rVert_2$, as a function of the number of Gaussians $G$, where $\mathcal{I}_{G}$ denotes an approximation of the exact integral representation with $G$ Gaussians. We notice that the onset of convergence occurs significantly earlier for the biased estimator, resulting in an efficient continuous-time approximation.

In the right panel of \cref{fig:MFIM_continuous_convergence_single}, we quantify the actual quantum cost using the sum of inverse success probabilities, $\sum_{j=1}^{G} \frac{1}{P_{A_j}}$, which characterizes the total number of unit-duration Hamiltonian simulations attempted on average to realize a $G$-Gaussian approximation. This metric can be interpreted as the continuous-time equivalent of the total simulation time. For direct comparison with the discrete-time quadrature approach, we normalize the inverse probability metric by the maximal simulation time $T^{\rm max}$ (c.f. \cref{eq:GL_tmax}) in the discrete-time setting for $\epsilon = 10^{-3}$. Our empirical observation indicates that the biased Gaussian estimator requires a total simulation time similar to its discrete-time counterpart to achieve the requested accuracy, despite the higher cost of constructing an unbiased Gaussian estimator. Accordingly, the continuous-time approach may offer a distinct resource advantage due to the maximal simulation time always being unity. We comment that the success probabilities $P_{A_j}$ vary with the reference state $\ket{\phi}$ as reflected in \cref{eq:continuous_single_sucess_prob}. Although we have chosen a specific state $\ket{\phi}$ for illustration, the trend shown in \cref{fig:MFIM_continuous_convergence_single} remains consistent across different choices of reference state.

\subsection{Gaussian kernel for real poles}
\label{subsec:Gaussian_continuous}
For a real pole, a continuous-time compilation for the Gaussian kernel can be efficiently designed with 
\textit{two} continuous-variable ancillae. 
We consider preparation of the ancillae, 
\begin{align}
    \ket{0}_{c} = \int_{0}^{q^{\rm max}(2\epsilon)} dq \, \psi_1(q) \ket{q} \otimes \int_{-\infty}^{\infty} dy \, \psi_2(y) \ket{y},\label{eq:resolvent_ancilla_II}
\end{align}
in the uniform and Gaussian states $\psi_1(q) = \frac{1}{\sqrt{q^{\rm max} }}$ and $\psi_2(y) = \sqrt{\varphi_g(y;1)}$ (recall that $\varphi_{g}(y;\Gamma) = \frac{1}{\sqrt{2\pi\Gamma}} e^{-y^2/2\Gamma}$ as in  \cref{subsec:Dirac_delta_continuous}), with a finite $q$-truncation at some value $q^{\rm max}$
(see \cref{subsec:proof_theorem2} for the definition of $q^{\rm max}$). The two-ancillae setup follows similar ideas as for complex poles but does not require sampling of the Gaussian width; the general setup is outlined below.
\bigskip

{\centering

	\begin{tikzpicture}
		\node[draw = none,minimum height = 1cm, text centered,text width=3.5cm] (R) at (-1.5,1.3) {State preparation $\ket{0}_c$};	
		\node[draw = none,minimum height = 1cm, text centered,text width=5cm] (I) at (6.7,1.3) {$R(z_k)$};
		\draw[->] (R.east) --node[above,sloped]{Evolve} node[below,sloped]{Post selection} (I.west);	
		
		\node[draw = none,minimum height = 1cm, text centered,text width=5cm] (Resolvent) at (-2,0) {$\int_{0}^{q^{\rm max}} dq   \int_{-\infty}^{\infty} dy \, \sqrt{\frac{\varphi_g(y;1)}{q^{\rm max}} }\ket{q}\ket{y}$};	
		\node[draw = none,minimum height = 1cm, text centered,text width=5cm] (evolve) at (6.7,0) {$\frac{\sqrt{\pi}}{\sqrt{2}q^{\rm max}}  R_{\epsilon}(z_k) \ket{\phi} \ket{0}_c$};	
		\draw[->] (Resolvent.east) --node[above,sloped]{$e^{i(H - a_k) \otimes \hat{q} \otimes \hat{y}}$}  (evolve.west);	
		
		\node[draw = none,minimum height = 1cm, text centered,text width=7cm] (symm) at (-2,-3) {$\int_{-\infty}^{\infty} dq \, \int_{-\infty}^{\infty} dy \, \sqrt{\varphi_g(q;\Gamma_0 ) \varphi_g(y;1)} \ket{q} \ket{y}$};
		\draw[->] (Resolvent.south) --node[mid left]{Integral}node[mid right]{relation \cref{eq:resolvent_ancilla_III}} (symm.north);	
		
		\node[draw = none,minimum height = 1cm, text centered,text width=5cm] (ev) at (6.7,-3) {$\frac{ 1 }{ \sqrt{\Gamma_0 } } R_{\epsilon}(z_k) \ket{\phi} \ket{0}_c$};	
		\draw[->] (symm.east) --node[above,sloped]{$e^{i(H - a_k) \otimes \hat{q} \otimes \hat{y} }$}  (ev.west);	
		
		\draw[black,thick] ($(R.north west)+(-2.2,0)$)  rectangle ($(ev.south east)+(-0.1,0)$);
	\end{tikzpicture}

\bigskip

Driven under a total Hamiltonian $\Tilde{H} := (H - a_k) \otimes \hat{q} \otimes \hat{y}$, a real-time evolution of unit duration generates the composite state,
\begin{align}
    e^{-i\Tilde{H}} \ket{\phi}\ket{0}_{c}  &= \int_{0}^{q^{\rm max}} dq \, \int_{-\infty}^{\infty} dy \, \psi_1(q) \psi_2(y) e^{iy(a_k - H)q} \ket{\phi} \ket{q} \ket{y}, \\ 
    &= \frac{\sqrt{\pi}}{\sqrt{2}q^{\rm max}} R_{\epsilon}(z_k) \ket{\phi} \ket{0}_c + \ket{\perp} \label{eq:resolvent_II_continuousIII},
\end{align}
where the residual state $\ket{\perp}$ belongs to the kernel of the projector ${\rm Id} \otimes \ket{0}_c \bra{0}_c$ and, 
\begin{align}
    R_{\epsilon}(z_k) = \frac{1}{\pi} \int_{0}^{q^{\rm max}} dq \, \int_{-\infty}^{\infty} dy \, e^{-y^2/2} e^{iy(a_k - H)q} \approx R(z_k).
\end{align}
Therefore by post-selecting the $y$-ancilla, we can implement the action of the resolvent on any initial state $\ket{\phi}$, \textit{i.e.}, the normalized state $R_{\epsilon}(z_k) \ket{\phi} / \lVert R_{\epsilon}(z_k) \ket{\phi} \rVert_2$ that is $\mathcal{O}\left( \frac{\epsilon}{q^{\rm max}} \right)$-close to the target state. The success probability of a post-selection is,
\begin{align}
    P_{R(z_k)}^{\epsilon} &= \frac{\pi}{2 (q^{\rm max})^2} \lVert R_{\epsilon}(z_k) \ket{\phi}  \rVert_2^2 
    =  \frac{\pi (a_k^{-})^2}{8} \bigg[ \log \frac{1}{\epsilon a_k^{-} } \bigg]^{-1} \sum\limits_{n=0}^{N-1} \frac{p_n}{ \abs{ a_k - E_n }^2} ,
    \label{eq:continuous_LCU_success_II}
\end{align}
which varies with the location of the pole, $a_k$, the initial state $\ket{\phi}$, and the spectrum ${\rm spec}(H)$. The quantity $a^-_k$ is the smallest distance between the pole $a_k$ and any eigenvalue of $H$.

Although the uniform state $\psi_1(q)$ has a simple physical interpretation of a particle-on-a-ring wavefunction where the ring diameter is of $\mathcal{O}(q^{\rm max})$~\cite{LCU2023}, we note that an implementation using only Gaussian states can be adopted. Different from the stochastic implementation discussed in \cref{subsec:Dirac_delta_continuous}, here we construct a single, deterministic block-encoding of the resolvent by exploiting the following integral relation,
\begin{align}
        \int_{0}^{\infty} dq \, \int_{-\infty}^{\infty} dy \,  e^{-y^2/2 + iy(a_k - H)q} = \frac{1}{2} \int_{-\infty}^{\infty} dq \, e^{-q^2/2 \Gamma_0} \int_{-\infty}^{\infty} dy \, e^{-y^2/2 + iy(a_k - H)q} + \mathcal{O}(\pi \epsilon), 
\end{align}
where we define the variance $\Gamma_0 = \frac{1}{\epsilon (a_k^{-})^{3}}$ along the $q$ coordinate. That is, we initialize the ancillae in a product state of Gaussian wavefunctions in the variables $q$ and $y$,
\begin{align}
    \ket{0}_c = \int_{-\infty}^{\infty} dq \, \int_{-\infty}^{\infty} dy \, \sqrt{\varphi_g(q;\Gamma_0 ) \varphi_g(y;1)} \ket{q} \ket{y}.
    \label{eq:resolvent_ancilla_III}
\end{align}
Evolving this state under the total Hamiltonian $\tilde{H}$ yields an approximation to the resolvent,
\begin{align}
    e^{-i\tilde{H}} \ket{\phi}\ket{0}_c =  \frac{ 1 }{ \sqrt{\Gamma_0 }} R_{\epsilon}(z_k) \ket{\phi} \ket{0}_c + \ket{\perp},
\end{align}
where $\ket{\perp}$ labels the residual orthogonal to the Gaussian initial state. Post-selection on the $y$-ancilla implements the normalized state $R_{\epsilon}(z_k) \ket{\phi} / \lVert R_{\epsilon}(z_k) \ket{\phi} \rVert_2$ that is $\mathcal{O} \left( \frac{\epsilon}{\sqrt{\Gamma_0}} \right)$-close to the target state, therefore achieving an effective $(\sqrt{ \Gamma_0 }, \mathbf{2}, \epsilon)$-BE of the resolvent. This encoding comes with a success probability,
\begin{align}
    P_{R(z_k)}^{\epsilon} &= \frac{1}{ \Gamma_0 } \lVert R_{\epsilon}(z_k) \ket{\phi}  \rVert_2^2 
    =  \epsilon (a_k^{-})^3 \sum\limits_{n=0}^{N-1} \frac{p_n}{ \abs{ a_k - E_n }^2},
    \label{eq:continuous_LCU_success_III}
\end{align}
which scales less favorably in $\epsilon$ compared to \cref{eq:continuous_LCU_success_II}, despite the simpler ancillae preparation and loading.

\subsection{Numerical experiment for real poles}
\label{subsec:Gaussian_continuous_numerics}
We examine the continuous-time LCHS construction for real poles by initializing two continuous-variable ancillae. We compare two initialization scenarios: first, a combination of a uniform and Gaussian state as described by \cref{eq:resolvent_ancilla_II}, and alternatively, a purely Gaussian construction enabled by \cref{eq:resolvent_ancilla_III}.

\begin{figure*}[t!]
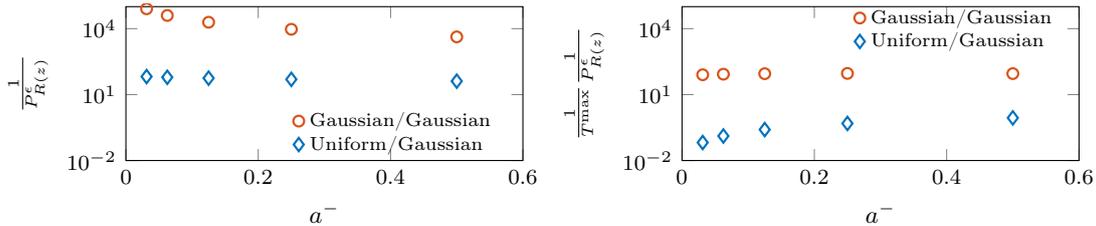

\centering
\plotconv[%
  width=0.48\textwidth,%
  xlabel={$a^{-}$},%
  xmin=0,
  xmax=0.5,%
  ymin=1e-2,
  ymax=1e+5,%
  ylabel={$\frac{1}{P^{\epsilon}_{R(z)}}$},%
  ylabel shift=-10pt,%
  legend style={draw=none, font=\scriptsize, at={(0.4,0.18)}, anchor=west, fill = none}
]{
\addplot[myColTwo,only marks,thick,solid,mark=o,mark size=2]%
  table[x index=0,y index=1] {\datfile{cLCU_double_C_gg_eps0.001}};
  \addplot[myColOne,only marks,thick,solid,mark=diamond,mark size=2.5]%
  table[x index=0,y index=1] {\datfile{cLCU_double_C_ug_eps0.001}};
\legend{Gaussian/Gaussian,Uniform/Gaussian};
}{0}{0.6}{1e+1}{1e+5}%
\hfill%
\plotconv[%
  width=0.48\textwidth,%
  xlabel={$a^{-}$},%
  xmin=0,
  xmax=0.5,%
  ymin=1e-2,
  ymax=1e+5,%
  ylabel={$\frac{1}{T^{\rm max}} \frac{1}{P^{\epsilon}_{R(z)}}$},%
  ylabel shift=-10pt,%
  legend style={draw=none, font=\scriptsize, at={(0.4125,0.85)}, anchor=west, fill=none}
]{
\addplot[myColTwo,only marks,thick,solid,mark=o,mark size=2]%
  table[x index=0,y index=1] {\datfile{cLCU_double_Cn_gg_eps0.001}};
  \addplot[myColOne,only marks,thick,solid,mark=diamond,mark size=2.5]%
  table[x index=0,y index=1] {\datfile{cLCU_double_Cn_ug_eps0.001}};
\legend{Gaussian/Gaussian,Uniform/Gaussian};
}{0}{0.6}{1e-2}{1e+2}
\caption{Cost necessary to construct an $\epsilon$-approximation to $(z-H)^{-1}$ with a requested accuracy $\epsilon = 10^{-3}$. We consider the MFIM Hamiltonian with $L_{\rm sys} =8$ spins and  $z = -1 -2^{-k-1}$ for $k=0,1,\ldots,4$. The approximation is computed using two continuous-variable ancillae prepared in the uniform/Gaussian ({\color{myblue}{$\diamond$}}) and Gaussian/Gaussian (\textcolor{myred}{$\circ$}) initial states (\cref{eq:resolvent_ancilla_II} and \cref{eq:resolvent_ancilla_III}) respectively. To obtain the success probability, a reference state $\ket{\phi}$ with $p_0 = 0.4$ and $p_n = \frac{1-p_0}{N-1}$ for $1 \leq n \leq N-1$ is chosen. \textbf{Left:} The quantum cost $\frac{1}{P^{\epsilon}_{R(z)}}$ as a function of the pole position $a^{-} = \min_n \abs{a-E_n} $. \textbf{Right:} The relative cost, normalized by the maximal runtime $T^{\rm max}(\epsilon;z)$ in the discrete-time approach, as a function of the pole position.}
\label{fig:MFIM_continuous_convergence_double}
\end{figure*}

We evaluate the two continuous-time approximations for the set of real poles $z_k = -1 -2^{-1-k}$ in \cref{fig:MFIM_continuous_convergence_double}. In particular, we report the quantum cost $\frac{1}{P^{\epsilon}_{R(z_k)}}$, understood as the number of unit-duration Hamiltonian simulations required on average to successfully realize an $\epsilon$-approximation. We consider a target accuracy of $\epsilon = 10^{-3}$ as in \cref{subsec:Dirac_delta_continuous_numerics}. Notice that an approximation using the uniform-Gaussian initialization from \cref{eq:resolvent_ancilla_II} is significantly less expensive than using the purely Gaussian initialization from \cref{eq:resolvent_ancilla_III}, although the latter is more practical to implement on hardware since Gaussian states are typically easier to manipulate. 
For the selected poles, we also assess the cost relative to the maximal runtime, $T^{\rm max}_{k}$, in the discrete-time setting. The normalized cost in the right pane in fact decreases as the pole approaches the Hamiltonian spectrum, exhibiting an opposite trend to that observed in \cref{fig:cost_double} of \cref{subsec:Gaussian_discrete_numerics}. This suggests a slower cost growth relative to the discrete-time approach and highlights the resource efficiency of the continuous-time approach.

\section{Construction of quantum rational transformations}
\label{sec:filter}
In \cref{sec:resolvent_discrete,sec:resolvent_continuous}, we have shown the construction of resolvents for varying pole locations and, additionally in \cref{sec:repeatedPoles}, the extension to multiple poles. 
These constructions allow for rational transformations on a quantum computer since any QRT can be written as a linear combination of resolvents, \textit{i.e.},
\begin{align}
    r(H;\bc,\bz,M) = \sum_{k=0}^{K-1} \sum_{m=1}^{M_k} c_{k,m} (z_k - H)^{-m} = \frac{ \sum\limits_{d=0}^{M-1} \beta_d H^d }{\prod\limits_{k=0}^{K-1} (z_k - H)^{M_k} }, \label{eq:rationalApproximant}
\end{align}
where $\bc$, extending the single resolvent setting, is now the vector of coefficients $c_{k,m}$ associated with the poles $z_k$ accounting for their multiplicities $M_k$, and $(\beta_0, \cdots, \beta_{M-1})$ defines polynomial coefficients in the numerator of the second equality, and $M = \sum_{k=0}^{K-1} M_k$ measures the total pole multiplicity.
The rational function in \cref{eq:rationalApproximant} is said to be of degree $(M-1,M)$.

Efficient QRTs opens the path to the development of new quantum rational algorithms.
Two emergent applications are spectral estimation and matrix function approximation.
The latter can be tackled, \textit{e.g.}, using a contour integration approach \cite{Higham} where the poles are placed on a customized contour. However, for some important functions, the poles for the optimal rational approximation are known and do not necessarily originate from a contour integral.
We illustrate our efficient QRT construction for the signum function ${\rm sgn}: [-1, 1] \rightarrow \{0,\pm 1\}$. Although the ${\rm sgn}(\omega)$ function appears simple, its discontinuity at $\omega = 0$ poses a challenge for approximation. Polynomials cannot capture the discontinuity as accurately, thus making rational functions the preferred approximants.

A QRT approximating the matrix signum function is useful for computing the singular value decomposition and symmetric eigenvalue decomposition, as demonstrated by recent classical algorithms  \cite{NaFr16}.
The ${\rm sgn}$ QRT can also be employed to construct a step function $\Theta(\omega) = \frac{1 - {\rm sgn}(\omega)}{2}$ for filtering out high energy eigenstates in ground and excited state problems. Instead of relying on an idealized step filter, in practice we seek a rational approximant  $r_{E}(\omega) \approx \Theta(\omega-E)$ so that,
\begin{align}
   \lvert r_{E}(E_n) \rvert \approx \begin{cases}
   1 & {\rm if}~ E_n \leq E - \Delta E,  \\
   0 & {\rm if}~ E_n \geq E + \Delta E,
   \end{cases}
   \label{eq:rE}
\end{align}
where $\Delta E$ characterizes the width of the buffer region over which $\lvert r_E \rvert$ transitions roughly from $1$ to $0$. The width $\Delta E$ can be chosen to balance accuracy of the filter and its construction cost, since a smaller buffer width requires a higher degree rational approximant.

The Zolotarev rational functions, known for their abilities to approximate ${\rm sgn}(\omega)$ with a tight $L^{\infty}$ error bound, are introduced in \cref{subsec:Zolotarev_approx}. 
While a Zolotarev approximant does not follow from the contour integration approach, it can be generated with our proposed methods.
We detail the quantum construction of the Zolotarev approximation through both discrete- and continuous-time LCHS approaches. 
As the degree of the Zolotarev approximation increases, the poles tend to cluster towards the real line, which requires extended quantum resources (see \cref{thm:1} and \cref{lem:continuous_Gaussian}). To mitigate the resource requirements, we take an iterative approach to construct a rational filter in \cref{subsec:iterative_filtering}, which is more efficient as we use poles further away from the real line.

\subsection{Zolotarev approximation to ${\rm sgn}$}
\label{subsec:Zolotarev_approx}
Zolotarev \cite{akhiezer1990elements} provides a concrete expression for the $L^{\infty}$-optimal rational approximant of fixed degree to the signum function.
For any given window $\mathcal{W}_{\overline{\omega}} = [- 1, -\overline{\omega}] \cup [\overline{\omega}, 1] \subseteq [-1, 1]$ with $\overline{\omega}>0$, the Zolotarev approximant, of degree $(2K-1,2K)$, is given by
\begin{align}
    r_K(\omega)  = - \gamma_K \sum_{k=0}^{K-1} \left( \frac{c_k}{i b_k - \omega } + \frac{c_k}{- i b_{k} - \omega}\right), \label{eq:rf_Zolo}
\end{align}
where the poles $\pm i b_{k}$, residuals $c_k$, and multiplicative constant $\gamma_K$ are explicitly known~\cite{li2020interior}.
The error of this approximant decays exponentially in the number of poles and can be bounded by
\begin{align}
    {\rm err}_K := \sup_{\omega \in \mathcal{W}_{ \overline{\omega} }} \lvert r_{K}(\omega) - {\rm sgn}(\omega)  \rvert  \leq 4 e^{- K \pi^2/2 \log(4/\overline{\omega})}.
    \label{eq:Zolo_errBound}
\end{align}
See \cref{app:Zolotarev} for more details.
In the following discussions, we examine the discrete-time and continuous-time LCHS approaches for constructing the Zolotarev approximant.

\subsubsection{Discrete-time LCHS}
To implement the Zolotarev approximant on the quantum computer, we observe that its action on a state $\ket{\phi}$ can be measured via the imaginary part of the inner product $\braket{\phi | A | \phi}$, 
\begin{align}
   \bra{\phi} r_{K}(H) \ket{\phi} = -i \left[ \braket{\phi | A | \phi} - \braket{\phi | A | \phi}^{\ast} \right] = 2\Im \braket{\phi | A | \phi},
\end{align}
where, for the Zolotarev poles $z_k = ib_k$ with $b_k\neq 0$, we get from \cref{eq:rf_Zolo} that
\begin{align}
    A = - \gamma_K \sum_{k=0}^{K-1} c_k R(z_k) = - \int_{0}^{\infty} dq \, c(q) e^{-iHq}, \qquad \text{for }c(q) = \gamma_K \sum_{k=0}^{K-1}  c_{k} e^{- b_k q}.\label{eq:Zolo_discrete_LCU}
\end{align}
Thus we obtain $r_K(H)$ through $A$, which takes the form of an LCHS. An efficient discrete-time LCHS, implementing the $K$ relevant resolvents simultaneously, can be constructed with the quadrature technique discussed in \cref{subsec:Dirac_delta_discrete}. This is summarized in the following corollary.
\bigskip

\begin{cor}
\label{cor:1}
For any given $\epsilon > 0$, the Zolotarev signum approximant $r_K(H)$ {\color{red}} admits a real-time LCHS construction $r_{K,J}(H)$ with $\lVert r_K(H) - r_{K,J}(H) \rVert_2 < \epsilon$. The Gauss-Legendre rule defines a time grid $\bt$ such that
\begin{align}
    J = \log_{2} \log \frac{4 \gamma_K }{ \epsilon b^{-} }  + (\eta + 1) \log_2 \frac{4 \gamma_K }{\epsilon b^{-} } +3,
    \label{eq:Zolo_J}
\end{align}
distinct time evolution circuits suffice to construct $r_{K,J}(H)$, where $\eta = \frac{3b^{-} - 2\sqrt{2} b^{-}  + \lVert H \rVert_2}{4\sqrt{2} b^{-}}$ and $b^{-} = \underset{0 \leq k \leq K-1}{\min} b_k$. The maximal and total evolution time, respectively, can be controlled by
\begin{align}
    T^{\rm max} = \frac{1}{b^{-}} \log \frac{4 \gamma_K}{\epsilon b^{-}}, \quad
    \text{and} \quad T^{\rm tot} = \frac{J}{2 b^{-}} \log \frac{4 \gamma_K}{\epsilon b^{-}}.
    \label{eq:Zolo_tmetrics}
\end{align}
\end{cor}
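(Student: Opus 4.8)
The plan is to reduce the Zolotarev approximant to a single resolvent computation with an appropriately inflated tolerance, and then invoke Theorem~\ref{thm:1}. First I would note that, by \cref{eq:Zolo_discrete_LCU}, the operator $A = -\gamma_K \sum_{k=0}^{K-1} c_k R(z_k)$ is a fixed linear combination of the $K$ resolvents $R(z_k)$ at the Zolotarev poles $z_k = ib_k$, all of which share the \emph{same} single-integral representation \cref{eq:resolvent_I} with $a_k = 0$ and $b_k > 0$. Hence $A$ admits the exact representation $A = -\int_0^\infty dq\, c(q)\, e^{-iHq}$ with $c(q) = \gamma_K \sum_k c_k e^{-b_k q}$, i.e.\ the single integrand $f(q) = c(q) e^{-iHq}$ is again analytic on the whole complex plane, and it is dominated in norm by $\gamma_K \bigl(\sum_k |c_k|\bigr) e^{-b^- q}$, controlled by the slowest-decaying exponential $e^{-b^- q}$ with $b^- = \min_k b_k$. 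So the truncation, Gauss--Legendre discretization, and analyticity-based error bound \cref{eq:GL_error} of \cref{subsec:Dirac_delta_discrete} apply verbatim, with $b_k$ replaced by $b^-$ in the truncation length $T^{\rm max} = \frac{1}{b^-}\log\frac{\text{const}}{\epsilon b^-}$.

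Next I would track how the residual weights $\gamma_K, c_k$ enter the error budget. Since $r_K(H) = 2\,\Im\,A$ acting on states (equivalently $r_K(H) = -i(A - A^\dagger)$ as operators, using $R^\dagger(ib_k) = R(-ib_k)$), an operator-norm error $\lVert A - A_J\rVert_2 \le \epsilon/2$ on the LCU approximation $A_J$ of $A$ immediately yields $\lVert r_K(H) - r_{K,J}(H)\rVert_2 < \epsilon$. The factor $\gamma_K$ in front of the integral, together with whatever normalization of the $c_k$ is used in \cite{li2020interior}, gets absorbed into the supremum $\sup_{t\in\mathcal{E}_\sigma}|f(t)|$ appearing in \cref{eq:GL_error}; propagating it through the same steps as in the proof of \cref{thm:1} produces the replacement of $\log\frac{2}{\epsilon b_k}$ by $\log\frac{4\gamma_K}{\epsilon b^-}$ in \cref{eq:Zolo_J,eq:Zolo_tmetrics}, and the replacement $a_k^+ \mapsto \lVert H\rVert_2$ in $\eta$ (since here $a_k = 0$, so $a_k^+ = \max_n |E_n| = \lVert H\rVert_2$ under the rescaling assumption). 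The ellipse-parameter optimization over $\sigma > 1$ that yields the constant $\eta$ is identical to the one in \cref{thm:1}; only the constants inside the logarithm change.

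The remaining step is bookkeeping on the number of Legendre nodes: demanding that the Gauss--Legendre error on the truncated interval $[0,T^{\rm max}]$ fall below $\epsilon/2$ (the truncation itself contributing the other $\epsilon/2$ by the exponential tail $e^{-b^- T^{\rm max}}$) gives, after the same algebra as in \cref{thm:1}, the value of $J$ in \cref{eq:Zolo_J} with the doubly-logarithmic term $\log_2\log\frac{4\gamma_K}{\epsilon b^-}$ and the linear-in-$\log$ term $(\eta+1)\log_2\frac{4\gamma_K}{\epsilon b^-}$. The maximal and total evolution times then follow exactly as in \cref{eq:GL_tmax}: $T^{\rm max}$ is the truncation length and $T^{\rm tot} = \lVert\bt\rVert_1 \le \frac{J}{2} T^{\rm max}$ since the Legendre nodes on $[0,T^{\rm max}]$ have $\lVert\bt\rVert_1 \le \frac{J}{2}\lVert\bt\rVert_\infty$.

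I expect the only real subtlety — and hence the main obstacle — to be making precise how the Zolotarev constants $\gamma_K$ and the residuals $c_k$ are bounded, since \cref{eq:Zolo_J} is stated ``up to leading order'' and a naive bound would carry a $\sum_k |c_k|$ factor rather than just $\gamma_K$. Resolving this requires using the explicit closed forms for the Zolotarev parameters from \cite{li2020interior,akhiezer1990elements} to argue that $\gamma_K \sum_k |c_k| = \mathcal{O}(\gamma_K)$ (or is otherwise subsumed into the logarithm without changing the leading order), so that it is legitimate to report the bound with $\log\frac{4\gamma_K}{\epsilon b^-}$ alone; everything else is a direct specialization of \cref{thm:1}.
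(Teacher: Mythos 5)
Your proposal is correct and follows essentially the same route as the paper: a direct application of \cref{thm:1} to $A$ with the worst-case decay rate $b^-$, the substitution $a_k^+ \mapsto \lVert H\rVert_2$ (since $a_k=0$), and the tolerance inflated to $\epsilon/(2\gamma_K)$, which yields exactly \cref{eq:Zolo_J,eq:Zolo_tmetrics}. The one ``subtlety'' you flag is resolved precisely as you anticipate --- the paper invokes the fact that the Zolotarev residuals are positive with $\sum_{k=0}^{K-1} c_k = 1$, so the prefactor $\gamma_K\sum_k\lvert c_k\rvert$ is just $\gamma_K$.
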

\begin{proof}
The proof is provided in \cref{subsec:proof_cor1}.
\end{proof}

We note that it suffices to consider the time grid $\bt$ associated with the pole closest to the real line, $z^{-}=ib^{-}$. This finest time grid can be reused for other poles, allowing all quantum measurements to be combined classically during the postprocessing step. However, $z^{-}$ approaches the real axis as the degree of the rational approximant increases \cite{akhiezer1990elements}. This could present a serious challenge for the discrete-time LCHS due to the $\frac{1}{b^{-}}$ factor appearing in the maximal and total simulation times (cf. \cref{eq:Zolo_tmetrics}).
The continuous-time LCHS may alleviate this challenge via importance sampling in a stochastic implementation.

\subsubsection{Continuous-time LCHS}
The continuous-time LCHS construction of a Zolotarev approximant can be achieved via the Monte-Carlo strategy introduced in \cref{subsec:Dirac_delta_continuous,subsec:Dirac_delta_continuous_numerics}. Since the coefficients $\{c_k\}_{k=0}^{K-1}$ from \cref{eq:rf_Zolo} satisfy the properties of a discrete probability measure, we acquire a Gaussian representation of $A$ that generalizes \cref{eq:Dirac_delta_stochastic_MC},
\begin{align}
    A &= - \gamma_K \mathbb{E}_{K} \bigg[ \frac{1}{b_k} \mathbb{E}_{\rm mix} \bigg[ \int_{-\infty}^{\infty} dq \, \varphi_g(q;\Gamma) e^{- i H \lvert q \rvert} \bigg] \bigg] \approx -\frac{\gamma_K}{N_K} \sum_{\alpha=1}^{N_K} \frac{1}{b_{k_{\alpha}} G_{k_\alpha}} \sum_{\beta=1}^{G_{k_{\alpha} }} A_{\alpha \beta}, 
    \label{eq:Zolo_expectation}
\end{align}
where $\mathbb{E}_K$ denotes an average over $[K]$ with respect to $\left\{ c_k \right\}_{k=0}^{K-1}$ and
\begin{align}
   A_{\alpha \beta} = \int_{-\infty}^{\infty} dq \, \varphi_{g}(q;\Gamma_{\alpha \beta}) e^{ -i  H \lvert q \rvert},
   \label{eq:Zolo_expectation2}
\end{align}
is a stochastic Zolotarev approximant that can be block-encoded by a continuous-variable ancilla. In 
\cref{eq:Zolo_expectation}, we sample Gaussian variance jointly from two probability distributions: $k_{\alpha}$ is sampled i.i.d.~from the marginal distribution $\left\{ c_k \right\}_{k=0}^{K-1}$ and $\Gamma_{\alpha \beta}$ i.i.d.~from the conditional distribution $\rho_{\rm mix}(\cdot|b_{k_{\alpha}})$ as defined in \cref{subsec:Dirac_delta_continuous}.

Recall that the cost of a continuous-time LCHS scheme is determined by the Gaussian variance $\Gamma$ as a random variable. The Gaussian variance follows a distribution, 
\begin{align}
     \rho_K(\Gamma) = \sum_{k=0}^{K-1} c_{k} \rho_{\rm mix}(\Gamma|b_k).
    \label{eq:full_var_distn}
\end{align}
Due to the intimate connection between the Zolotarev parameters $(c_{k}, b_k)$ and the elliptic functions and integrals~\cite{akhiezer1990elements}, we find that the average Gaussian width $\braket{\Gamma}_{K} := \int_{0}^{\infty} d\Gamma \, \rho_K(\Gamma) \Gamma$ remains independent of $K$. Since the Gaussian width sets the spatial extent of continuous-variable wavefunction and thereby the ancilla preparation cost, this independence allows for the use of higher-degree rationals. To control the variance $\braket{\Gamma^2}_K - \braket{\Gamma}_K^2$, a practical approach involves employing a finite convex sum of Gaussians rather than a continuous mixture, as discussed within \cref{subsec:Dirac_delta_continuous}.

To illustrate the utility of the continuous-time approach, we perform a QRT that constructs the Zolotarev approximant $r_K$ for given rational degree $2K$ and buffer region width $2\overline{\omega}$. We follow the recipe of \cref{eq:Zolo_expectation,eq:Zolo_expectation2} to generate a Monte-Carlo estimate $\hat{r}_K$ of $r_K$. To accelerate the convergence and control the sample variance, here we adopt the biased resolvent estimator using the Gaussian wavefunctions $\varphi_{g}(q;\Gamma_{j})$ with $\Gamma_j = - \frac{2}{b_k^2} \log \frac{1+2(j-1)}{2G_k}$ as in \cref{eq:resolvent_biased}. For a given number $N_{\rm MC}$ of Monte-Carlo attempts, we then sample the Gaussian variance using two distributions, a marginal corresponding to the Zolotarev coefficients $\left\{c_k \right \}_{k}$ and a uniform conditional defined by \cref{eq:resolvent_biased}. The bias of such a stochastic construction can be controlled by increasing the number of Gaussians $G_k$.

\begin{figure*}[hbt!]
\centering
\begin{tikzpicture}
\begin{axis}[
  width=0.9\columnwidth,
  height=0.385\columnwidth,
  xlabel={$\omega$},
  xmin=-1,
  xmax=1,
  ylabel={Value of Approximant},
  ytick={0, 0.5, 1},
  title={Zolotarev Approximant $(K, \overline{\omega}) = (4, 10^{-1})$},
  legend style={draw=none, row sep=-2pt},
  legend pos=north east,
]{
\addplot[black,dashed,very thick]%
  table[x index=0,y index=1] {\datfile{Zolo_deg8_width0.1}};
\addplot[myColOne,thick,solid,mark=o,mark size=1.5]%
  table[x index=0,y index=1] {\datfile{Zolo_deg8_width0.1_r}};
\addplot[myColTwo,thick,solid,mark=diamond,mark size=1.5]%
  table[x index=0,y index=1] {\datfile{poly_deg8}};
\addplot[myColThr,thick,solid,mark=asterisk,mark size=1.5]%
  table[x index=0,y index=1] {\datfile{poly_deg16}};
\legend{Exact Zolotarev,Stochastic Zolotarev,Polynomial (Degree 8),Polynomial (Degree 16)};
}{-1}{1}{0}{1}
\end{axis}

\begin{axis}[
  at={(0.225\columnwidth,0.125\columnwidth)},
  anchor=center,
  width=0.325\columnwidth,
  height=0.225\columnwidth,
  ymode=log,
  xlabel={$\omega$},
  ylabel={Error},
  xmin=0.2, xmax=1,
  ymin=10^-6, ymax=10^-1,
  minor tick num=1,
]{
\addplot[black,dashed,very thick]%
table[x index=0,y index=1] {\datfile{Zolo_deg8_width0.1_inset}};
\addplot[myColOne,thick,solid,mark=o,mark size=1]%
table[x index=0,y index=1] {\datfile{Zolo_deg8_width0.1_r_inset}};
\addplot[myColTwo,thick,solid,mark=diamond,mark size=1]%
table[x index=0,y index=1] {\datfile{poly_deg8_inset}};
\addplot[myColThr,thick,solid,mark=asterisk,mark size=1]%
table[x index=0,y index=1] {\datfile{poly_deg16_inset}};
}
\end{axis}
\end{tikzpicture}
\caption{Rational and polynomial approximants of the step function $\Theta(\omega)$ over spectral region $[-1, 1]$. The optimal Zolotarev approximant with $2K = 8$ poles is indicated as the black dashed line while its stochastic construction via a continuous-time LCHS is shown in blue. For comparison, the Chebyshev approximations of degrees $8$ and $16$, respectively, are colored in red and yellow. The inset displays the approximation error compared to $\textrm{sgn}$ near the edge of the spectrum.}
\label{fig:numerical_Zolo}
\end{figure*}
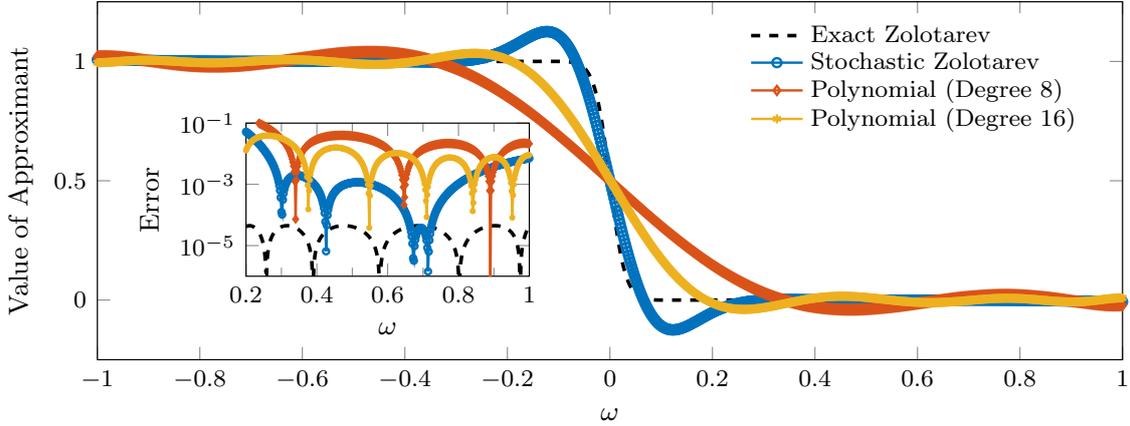

For $K=4$ and $G_k \equiv 8$, we display in \cref{fig:numerical_Zolo} approximations of the step function $\Theta(H)$ by the exact Zolotarev QRT $r_{K}(H)$ and its stochastic implementation $\hat{r}_{K}(H)$ using $N_{\rm MC} = 8$ Monte-Carlo samples. The same scaled Hamiltonian is considered as in \cref{sec:resolvent_discrete,sec:resolvent_continuous}. Since $\lVert H \rVert_2 \leq 1$, we simply examine values of the approximants over the spectral region $[-1, 1]$. We note that the sampling procedure is purely classical, allowing for repeated trials to refine an optimal stochastic approximant. Empirically, the stochastic approximant $\hat{r}_K$ uniformly filters out eigenvalues of $H$ within $[2\overline{\omega}, 1]$. Moreover we show polynomial approximants of varying degrees in the Chebyshev basis. The rational approximations, in contrast to the polynomial counterparts, exhibit $(i)$ a sharper transition around $\omega = 0$ and $(ii)$ a smaller error across $\mathcal{W}_{\overline{\omega}} =[-1, -\overline{\omega}] \cup [\overline{\omega}, 1]$. These distinctive features contribute to effective spectral filtering.

\subsection{Efficient rational filtering by iterative function composition}
\label{subsec:iterative_filtering}
The Zolotarev rational function can approximate the step function to a desired accuracy by increasing the number of poles. However, as these poles approach the real line, the cost of computing the corresponding resolvents becomes more expensive as indicated by \cref{cor:1}.
Alternatively, a fixed rational function (fixed number of poles) achieving a lower approximation accuracy can be applied iteratively and also obtain accurate eigenenergy approximations. Since the cost of resolvent simulation is fully determined by the distance $b_k$ of the pole $z_k=a_k+b_k i$ to the real line and by the distance of $a_k$ to the spectrum of the (filtered) Hamiltonian of interest, each iteration can be performed at a constant cost.

We assume knowledge of the spectral bounds, $E_{-} \leq E_n \leq E_{+}$, although the precise largest and smallest eigenenergies do not need to be known in advance. To achieve a reduction of spectral range $[E_{-}, E_{+}] \mapsto [E_{-}, E_{-} + \delta E]$ for arbitrary $\delta E$, we propose the repeated application of a fixed rational matrix filter $r_E(H)$ alternated by a shift and rescale of the resulting matrix.
Suppose that applying $r_E(H)$ suppresses the eigenvalues in some interval $[E+\Delta E,E_+]$, thereby reducing the effective spectral range by a factor $\xi = \frac{E + \Delta E - E_{-}}{E_{+} - E_{-}}$.
We define the following linear transformation, $f_{\xi}(\omega) = \frac{1}{\xi}(\omega- E_{-}) + E_{-}$, which rescales the spectrum of $r_E(H)$ so that it is again approximately $[E_-,E_+]$, allowing for the same filter $r_E$ to be applied again.
For a given $\delta E$, this process can be repeated $D = \mathcal{O}(\log_{\xi} \delta E)$ times,
\begin{equation}
    (r_E \circ f_{\xi^D}) \cdots (r_E \circ f_{\xi^3}) (r_E \circ f_{\xi^2})(r_E \circ f_{\xi})r_{E}(H)  = \prod_{d=0}^{D} r_{E} \circ f_{\xi^{d}} (H)=:r_{\star D} (H;\xi), 
   \label{eq:iterativeFilter}
\end{equation}
which reduces the effective spectral range to $[E_-,E_-+\delta E]$. For the discrete-time LCHS approach, the efficiency of the iterative filtering process is captured by the corollary below.

\begin{cor}\label{cor:iterative}
    A discrete-time $\epsilon$-approximation to the iterative filter constructed in \cref{eq:iterativeFilter}, $\lVert r_{\star D}(H) - \sum_{j=0}^{J-1} x_j e^{-iHt_j} \rVert_2 < \epsilon$ for $r_{E}(H) = \frac{1 - r_{K}(H)}{2}$, can be obtained with
\begin{align}
    J = \log_{2} \log \frac{ 2 (\gamma_K)^{D} \mathcal{C}_K }{ \epsilon \xi^{D} b^{-} } +   \frac{3b^{-} + 2\sqrt{2} b^{-}  + \lVert H \rVert_2}{4\sqrt{2} b^{-}} \log_2 \frac{ 2 (\gamma_K)^{D} \mathcal{C}_K }{ \epsilon \xi^{D} b^{-} } +3,
\end{align}
distinct time evolution circuits and a maximum runtime of
\begin{align}
    T^{\rm max} = \frac{1}{ \xi^{D} b^{-}} \log \frac{ (\gamma_K)^{D} \mathcal{C}_K }{ \epsilon \xi^{D} b^{-}},
\end{align}
for $\gamma_K$ defined in \cref{eq:rf_Zolo} and a constant $\mathcal{C}_K$ dependent on the rational function $r_K$.
\end{cor}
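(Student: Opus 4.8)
The plan is to reduce the claim to \cref{thm:1}/\cref{cor:1} by expanding the iterated filter $r_{\star D}(H;\xi)=\prod_{d=0}^{D}r_E\circ f_{\xi^d}(H)$ into partial fractions and then invoking the Gauss--Legendre resolvent construction of \cref{subsec:Dirac_delta_discrete} on the resulting poles. First I would observe that $f_{\xi^d}(H)=\tfrac1{\xi^d}(H-E_-)+E_-$ is a shifted and rescaled Hermitian operator, and that $r_E=\tfrac{1-r_K}{2}$, so each factor is an \emph{affine} image of a Zolotarev rational evaluated at an affine image of $H$: $r_E\circ f_{\xi^d}(\omega)=\tfrac12-\tfrac12 r_K(L_d(\omega))$ with $L_d(\omega)=\alpha_d\omega+\beta_d$, $\alpha_d=s/\xi^d$ for the fixed scaling $s$ that maps the working spectral window into the domain of $r_K$. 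Reading off \cref{eq:rf_Zolo}, the poles of the $d$-th factor are $z_{d,k}^{\pm}=\tfrac{\xi^d}{s}(\pm i b_k-\beta_d)$, which (for $E\neq E_-$) all share the \emph{same} real part $a_d=E_-+\xi^d(E-E_-)$, have imaginary parts $\pm b_k\xi^d/s$, and are pairwise distinct across all $d,k,\pm$. In particular the pole nearest the real axis comes from $d=D$ and the $k$ attaining $b^-=\min_k b_k$, at distance $b^-\xi^D/s$; this quantity will drive every resource bound and accounts for the $\xi^D b^-$ in the claimed $J$ and $T^{\max}$.

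Next I would put $r_{\star D}$ into partial-fraction form. Since each factor equals the constant $\tfrac12$ plus a proper rational function, the product is $r_{\star D}(H)=\bigl(\tfrac12\bigr)^{D+1}{\rm Id}+\sum_{d,k,\pm}\tilde c_{d,k}^{\pm}R(z_{d,k}^{\pm})$, with $\bigl(\tfrac12\bigr)^{D+1}$ the value at $\omega\to\infty$ and all $2K(D+1)$ poles simple. The residue at $z_{d,k}^{\pm}$ equals the residue of the $d$-th factor there times $\prod_{d'\neq d}(r_E\circ f_{\xi^{d'}})(z_{d,k}^{\pm})$; the first piece is $O(\gamma_K\lvert c_k\rvert)$, while each of the $D$ cross-factor evaluations is bounded because the poles of distinct factors are well separated on the relevant complex region, giving $\lvert r_E\circ f_{\xi^{d'}}(z_{d,k}^{\pm})\rvert\le\gamma_K\,\mathcal{C}_K'$ for a $K$-dependent constant. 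Summing over $k,\pm$ and using $\sum_k c_k=1$ then yields $\sum_{k,\pm}\lvert\tilde c_{d,k}^{\pm}\rvert=O\bigl((\gamma_K)^{D}\mathcal{C}_K\bigr)$ --- this is precisely where the $(\gamma_K)^D$ in the statement originates. Grouping the $2K$ resolvents with the shared real part $a_d$ into a Laplace integral as in \cref{eq:resolvent_I}, $\sum_{k,\pm}\tilde c_{d,k}^{\pm}R(z_{d,k}^{\pm})=-i\int_0^{\infty}dq\,g_{(d)}(q)\,e^{i(a_d-H)q}+\text{(conjugate term)}$, with $\lvert g_{(d)}(q)\rvert\le\bigl(\sum_{k}\lvert\tilde c_{d,k}\rvert\bigr)e^{-b^-\xi^d q/s}$.

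I would then run, for each $d$, the discrete-time construction of \cref{subsec:Dirac_delta_discrete} verbatim: $\epsilon'$-truncate at $T_d^{\max}=\tfrac{s}{b^-\xi^d}\log\tfrac{\sum_k\lvert\tilde c_{d,k}\rvert}{\epsilon'}$ using the exponential decay, map $[0,T_d^{\max}]$ onto $[-1,1]$ as in \cref{eq:GL_discrete}, and apply Gauss--Legendre with the entire-function error bound \cref{eq:GL_error}. Over-resolving, take the single finest interval $[0,T^{\max}]$ with $T^{\max}=T_D^{\max}$ and node count $J=J_D$ for \emph{all} factors, so that every Laplace integral is discretized on the common grid $t_j=T^{\max}(1+t_j^{\rm Leg})/2$; since $e^{-iH\,T^{\max}(1+t_j^{\rm Leg})/2}$ is independent of $d$ (the $a_d$ enters only the scalar weight), all contributions collapse to one LCU $r_{\star D}(H)\approx\sum_j x_j e^{-iHt_j}$. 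Substituting the effective parameters ($b^-\!\to b^-\xi^D/s$, normalization $\sim(\gamma_K)^D\mathcal{C}_K$, real-part range $\le\lVert H\rVert_2$) into the $J$- and $T^{\max}$-formulas of \cref{cor:1} --- noting $\eta+1=\tfrac{3b^-+2\sqrt2 b^-+\lVert H\rVert_2}{4\sqrt2 b^-}$ --- reproduces exactly the stated expressions, after choosing $\epsilon'$ a fixed fraction of $\epsilon/(D+1)$ and absorbing into $\mathcal{C}_K$ the remaining $D$- and $K$-dependent constants, including the product-propagation factor $\prod_d\lVert r_E\circ f_{\xi^d}(H)\rVert_2\le(1+O({\rm err}_K))^{D}$, which is bounded since each $r_E\circ f_{\xi^d}(H)$ approximates a step function bounded by $1$ on ${\rm spec}(H)$.

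\textbf{The main obstacle} I anticipate is controlling the residues $\tilde c_{d,k}^{\pm}$ of the degree-$2K(D+1)$ product rational function: one must verify that poles of different factors remain genuinely separated (no near-cancellation of denominators) and track carefully how the $\gamma_K$ prefactor and the normalization $\sum_k c_k=1$ of a Zolotarev approximant interact when the function is evaluated at \emph{complex} arguments outside $[-1,1]$, since that is what pins down the precise power of $\gamma_K$ and the form of $\mathcal{C}_K$. The remaining ingredients --- truncation, Gauss--Legendre convergence via \cref{eq:GL_error}, collapsing everything onto one time grid, and bounding error propagation through the product --- are routine given \cref{thm:1} and \cref{cor:1}.
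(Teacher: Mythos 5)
Your proposal is correct and follows essentially the same route as the paper: reduce the product $r_{\star D}$ to a linear combination of $\mathcal{O}(DK)$ simple resolvents (the paper does this via the recursive resolvent identity in its Property 1, you via direct residue computation of the full product — the same partial-fraction decomposition), identify the worst pole at height $\xi^{D}b^{-}$ with coefficient norm $(\gamma_K)^{D}\mathcal{C}_K$, and apply \cref{thm:1}/\cref{cor:1} on the single finest Legendre grid. The obstacle you flag (bounding the residues of the composed rational function) is exactly the point the paper also leaves partially implicit by absorbing it into the constant $\mathcal{C}_K$.
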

\begin{proof}
    The proof is provided in \cref{subsec:proof_cor2}.
\end{proof}

We note that the iterative rational filter $r_{\star D}$ can also be constructed using the continuous-time LCHS approach discussed in \cref{sec:resolvent_continuous}. For a wide variety of many-body Hamiltonians $H$, the low-energy sector of the spectrum $\delta E$ is $\mathcal{O} (\frac{1}{{\rm polylog}N} )$ relative to the spectral range. This results in $D = \log_{1/\xi} {\rm polylog} N$ iterations, where the cost of the filtering procedure is determined by the pole locations $\frac{b_k}{{\rm polylog}N}$ (the imaginary parts are scaled down by a factor of $\xi$ with each iteration).

\section{Application of QRT to the ground and excited state problem}
\label{sec:numerics_groundExcitedState}
Once a quantum rational transformation is available, it can be employed in the context of ground and excited state problems.
For example, the QRT from \cref{sec:filter}, which approximates the step function, can be used as a pre-processing routine for certain quantum eigensolvers to speed up their convergence.
We illustrate this for the recently developed eigensolver, the observable dynamic mode decomposition (ODMD)~\cite{shen2023estimating}.
We then compare the performance of ODMD with and without QRT pre-processing to solve the ground and excited state problem for a TFIM Hamiltonian.

For a given Hamiltonian $H = \sum_{n=0}^{N-1} E_n \ket{E_n}\bra{E_n}$, an initial state $\ket{\psi}$, and a measurement state $\bra{\phi}$, ODMD measures the dynamical expectations,
\begin{align}
    o(\tau_j) =  \braket{\phi| e^{-iH \tau_j } |\psi} = \sum_{n=0}^{N-1} \tilde{p}_n e^{-iE_n \tau_j}, \qquad \tau_j = j \Delta \tau,
    \label{eq:odmd}
\end{align}
on the quantum computer. These measurements allow us to identify a least-squares (LS) solution $B$ satisfying,
\begin{align}
  \begin{bmatrix}  
  o(\tau_1) & o(\tau_2) & o(\tau_3) & \cdots \\
  o(\tau_2) &  o(\tau_3) &  o(\tau_4) & \cdots \\
  o(\tau_3) &  o(\tau_4) &  o(\tau_5) & \cdots \\
  \vdots & \vdots & \vdots & \ddots
  \end{bmatrix} \stackrel{{\rm LS}}{=} B \begin{bmatrix}  
  o(\tau_0) & o(\tau_1) & o(\tau_2) & \cdots \\
  o(\tau_1) &  o(\tau_2) &  o(\tau_3) & \cdots \\
  o(\tau_2) &  o(\tau_3) &  o(\tau_4) & \cdots \\
  \vdots & \vdots & \vdots & \ddots
  \end{bmatrix}.
  \label{eq:odmd_LS}
\end{align}
The matrix $B$ is called the system matrix and contains the key frequency/energy information for advancing each observable vector a timestep $\Delta \tau$ forward. We sample over a relatively small number of timesteps so that $B$, upon a regularization, remains
well-conditioned and compact in size. This makes classical diagonalization viable: the extremal eigenphase, $\Tilde{E}_0 = -\frac{1}{\Delta \tau} \max_{\lambda_{B}} {\rm arg}(\lambda_{B})$ for eigenvalues $\lambda_{B}$ of $B$, provides a reliable estimate of the exact ground state energy $E_0$.

By applying a suitable QRT $r(H)$ to the initial state $\vert \phi\rangle$, \textit{i.e.}, $r(H)\ket{\phi}$, we can improve the convergence of ODMD.
This is inspired by adopting a signal processing perspective, \cref{eq:odmd} can be understood as sinusoidal data comprised of exponentially many components $e^{-iE_n \tau_j}$, each oscillating in time at its own frequency. The application of a rational filter $r(\omega)$ through a QRT can remove the high-energy components from $\ket{\psi}$. Then, applying ODMD leads to the measurement of the filtered expectations,
\begin{align}
    o_{\star}(\tau_j) = \braket{\phi| e^{-iH \tau_j }  r(H) |\psi} \approx \sum_{n = 0}^{N_{c}} \tilde{p}_n e^{-iE_n \tau_j},
    \label{eq:filtered_obs}
\end{align}
where $N_{c}$ is an eigenindex such that $E_{N_c} \lesssim E_{-} + \delta E \lesssim E_{N_c + 1}$. The filtered expectations $o_{\star}$ can be computed with a sequence of time-evolution circuits as discussed in \cref{sec:filter}. Following the filtering procedure, we find a new system matrix $B_{\star}$ that effectively describes the dynamics within some low-energy subspace. Diagonalization of $B_{\star}$ thereby enables a simultaneous estimation of eigenenergies in this subspace.

We now apply our real-time rational framework to the problem of finding low-energy states of a many-body Hamiltonian, a fundamental task encountered within physical, chemical, and materials sciences. 
We illustrate the QRT pre-processing using the MFIM Hamiltonian in \cref{eq:MFIM} with parameters $(h, g) = (0, 1)$, so that we recover a transverse-field Ising model (TFIM) at the critical point, where the gap between the ground and first excited state closes in the thermodynamic limit.
Our goal is to compute the $N_c=3$ lowest eigenenergies of a system with $L_{\rm sys} = 12$ spins. 

\begin{figure*}[hbt!]
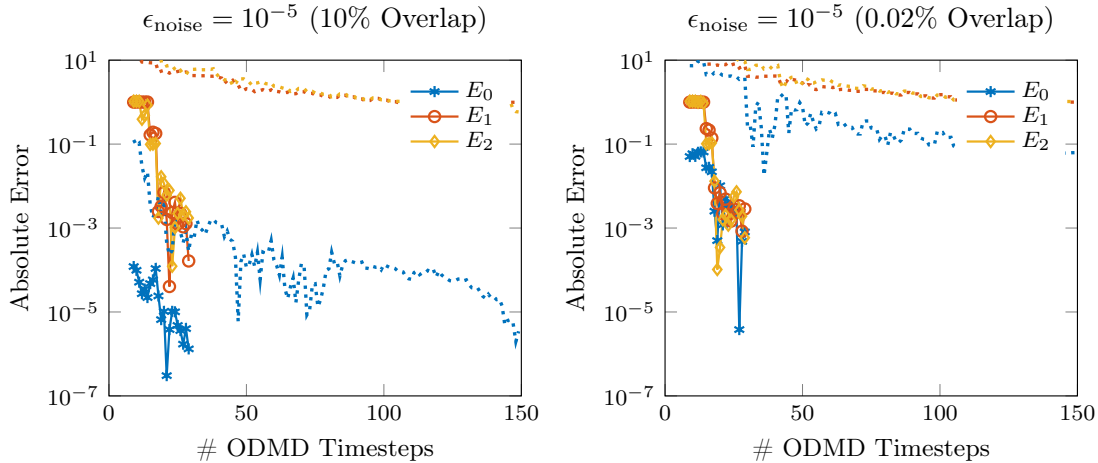

\centering
\plotconv[%
  width=0.48\textwidth,%
  xlabel={\# ODMD Timesteps},%
  xmax=150,%
  ytick={1e-7,1e-5,1e-3,1e-1,1e+1},%
  title={$\epsilon_{\rm noise} = 10^{-5}$ (10\% Overlap)},%
]{
\addplot[myColOne,thick,solid,mark=asterisk,mark size=2]%
  table[x index=0,y index=1] {\datfile{fodmd_TFIM_E0}};
\addplot[myColTwo,thick,solid,mark=o,mark size=2]%
  table[x index=0,y index=1] {\datfile{fodmd_TFIM_E1}};
\addplot[myColThr,thick,solid,mark=diamond,mark size=2]%
  table[x index=0,y index=1] {\datfile{fodmd_TFIM_E2}};
\addplot[myColOne,very thick,dotted]%
  table[x index=0,y index=1] {\datfile{odmd_TFIM_E0}};
\addplot[myColTwo,very thick,dotted]%
  table[x index=0,y index=1] {\datfile{odmd_TFIM_E1}};
\addplot[myColThr,very thick,dotted]%
  table[x index=0,y index=1] {\datfile{odmd_TFIM_E2}};
\legend{$E_0$,$E_1$,$E_2$};
}{0}{150}{1e-7}{1e+1}%
\hfill%
\plotconv[%
  width=0.48\textwidth,%
  xlabel={\# ODMD Timesteps},%
  xmax=150,%
  ytick={1e-7, 1e-5,1e-3,1e-1,1e+1},%
  title={$\epsilon_{\rm noise} = 10^{-5}$ (0.02\% Overlap)},%
]{
\addplot[myColOne,thick,solid,mark=asterisk,mark size=2]%
  table[x index=0,y index=1] {\datfile{fodmdu_TFIM_E0}};
\addplot[myColTwo,thick,solid,mark=o,mark size=2]%
  table[x index=0,y index=1] {\datfile{fodmdu_TFIM_E1}};
\addplot[myColThr,thick,solid,mark=diamond,mark size=2]%
  table[x index=0,y index=1] {\datfile{fodmdu_TFIM_E2}};
\addplot[myColOne,very thick,dotted]%
  table[x index=0,y index=1] {\datfile{odmdu_TFIM_E0}};
\addplot[myColTwo,very thick,dotted]%
  table[x index=0,y index=1] {\datfile{odmdu_TFIM_E1}};
\addplot[myColThr,very thick,dotted]%
  table[x index=0,y index=1] {\datfile{odmdu_TFIM_E2}};
\legend{$E_0$,$E_1$,$E_2$};
}{0}{150}{1e-7}{1e+1}%
\caption{Absolute errors in the TFIM eigenenergies calculated as a function of the number of ODMD timesteps $\tau_j = j \Delta \tau$. Convergence of the $N_{c} = 3$ lowest eigenenergies are shown both with (solid lines) and without (dashed lines) the application of the rational filter $r_{\star 3}(H) = (r_E \circ f_{\xi^3}) (r_E \circ f_{\xi^2})(r_E \circ f_{\xi})r_{E}$. The rational filter is constructed iteratively from a base filter specified by $(K, E_{\pm}, \Delta E) = (4, \pm 1, 10^{-1})$ where $K$, $E_{\pm}$, and $\Delta E$ determine the rational degree, spectral bounds, and width of the filter buffer region. \textbf{Left:} ODMD algorithm run on noisy data with $(\epsilon_{\rm noise}, p_0) = (10^{-5}, 10^{-1})$ where $\epsilon_{\rm noise}$ and $p_0$ specify the noise level and ground state overlap. \textbf{Right:} ODMD algorithm run on noisy data with $(\epsilon_{\rm noise}, p_0) = (10^{-5}, 2\times 10^{-4})$. The singular value threshold parameter $\tilde{\delta}$ is adjusted for the filtered and unfiltered cases to ensure optimal performance.}
\label{fig:odmd_filter}
\end{figure*}

For our eigenstate problem, as an illustration, we set $\ket{\psi} = \ket{\phi}$ with fixed ground state overlap $\tilde{p}_0 = p_0 = \lvert \braket{E_0|\phi} \rvert^2$ and uniform excited state overlap $\tilde{p}_{n} = p_{n} = \frac{1 - p_0}{2^{L_{\rm sys}} - 1}$ for $n \geq 1$. To account for the effects of noise, we first
compute the exact overlap matrix elements in \cref{eq:odmd,eq:filtered_obs}. We then introduce Gaussian perturbations $\mathcal{N}(0, \epsilon_{\rm noise}^2)$ to the matrix elements. According to the central limit theorem, the Gaussian noise model can faithfully capture effects of the shot noise, \textit{i.e.}, the statistical errors due to taking a finite number of measurements. Some types of hardware
noise may also be Gaussian; we do not investigate hardware noise models that specifically include decoherence
or systematic errors. In order to classically mitigate the impact of noise, we employ singular value thresholding~\cite{shen2023estimating} to regularize the LS solution in \cref{eq:odmd_LS}. In particular, we truncate all singular values of the system matrix that are smaller than a set threshold $\Tilde{\delta} \sigma_{\rm max}$ relative to the largest singular value $\sigma_{\rm max}$.

\cref{fig:odmd_filter} shows the obtained approximations to the lowest three eigenenergies with different ground state overlap $p_0$. 
The dashed curves represent the ODMD results with $\ket{\psi} = \vert \phi \rangle$. For $p_0=10\%$, the approximation to the ground state energy $E_0$ converges rapidly to a reasonable accuracy of $\epsilon = 10^{-3}$. Nevertheless, the approximations to the excited state energies $E_1$ and $E_2$ stagnate, highlighting the challenge in capturing these excited states. 
This issue is more pronounced for a poorer initial state with $p_0=0.02\%$, where none of the three energy approximations converge effectively. By applying a suitable rational filter $r(H)$ on the initial state $\vert \phi\rangle$, \textit{i.e.}, choosing $\ket{\psi} = r(H)\ket{\phi}$, and running the ODMD with this filtered state, the convergence improves significantly. This is illustrated by the solid curves in \cref{fig:odmd_filter} with $r=r_{\star 3}$, \textit{i.e.}, $D=3$ in \cref{eq:r_star}. Such an improvement in performance is evident even when 
$\ket{\phi}$ has uniform eigenstate overlap ($p_{0} = p_{n} \equiv 2^{-L_{\rm sys}}$ in the right panel). Thus, the application of a rational filter accelerates the simultaneous estimation of eigenenergies, especially for excited state energies where the convergence of regular ODMD becomes stagnant.

\section{Conclusions}
\label{sec:conclusions}
In this work, we investigate the capability of real-time evolution for implementing rational transformations on quantum hardware. Leveraging LCHS as our algorithmic primitive, we focus on the effective construction of operator resolvents via Hamiltonian simulations. Specifically, we present 
two distinct and complementary LCHS strategies that optimize the overall simulation cost. First, we consider a discrete-time scheme based on exponentially convergent quadrature rules. As a counterpart tailored for analog implementation, we also introduce a stochastic continuous-time scheme that employs resourceful bosonic ancillae. Together, these two schemes provide a comprehensive toolkit for composing a rational function as a linear combination of resolvents. 

Drawing on the unique ability of rational functions to capture singularities, we investigate the implementation of optimal rational approximations to the signum function on a quantum computer. We demonstrate the application of such rational transformations to the ground and excited state problem, where we construct a spectral filter to extract the low-lying eigenenergies of a many-body system. 
As rational functions outperform polynomials by a landslide for a variety of central tasks across scientific computing, our framework offers original insights into exploiting such advantage on quantum platforms. Based on these insights, we propose novel ways of constructing quantum rational transformations that not only advance existing approaches, but also serve as a foundational building block for the development of efficient rational algorithms.

\section{Acknowledgements}
This work was funded by the U.S. Department of Energy under Contract No.~DE-AC02-05CH11231, through the Office of Science, Office of Advanced Scientific Computing Research (ASCR) Exploratory Research for Extreme-Scale Science.
This research used resources of the National Energy Research Scientific Computing Center (NERSC), a U.S. Department of Energy Office of Science User Facility located at Lawrence Berkeley National Laboratory, operated under Contract No. DE-AC02-05CH11231. 
This work started during NVB's research visits to Lawrence Berkeley National Laboratory, partially supported by Charles University Research program No. PRIMUS/21/SCI/009 and Universidad Carlos III de Madrid travel grant Proyectos Jóvenes PPIT2024 No. 2024/00735/001. 
The authors thank Siddharth Hariprakash for helpful discussions.

\bibliographystyle{unsrtnat}
\bibliography{main}

\clearpage
\appendix

\section{Proofs}

\subsection{\textbf{Theorem 1}}
\label{subsec:proof_theorem1}
\noindent \textit{Proof.} By the triangle inequality, we have
\begin{align}
    \lVert R(z_k) - \mathcal{I}_{J}(z_k) \rVert_2 \leq  \lVert R(z_k) - \mathcal{I}_{\infty}(z_k) \rVert_2 + \lVert \mathcal{I}_{\infty}(z_k) - \mathcal{I}_{J}(z_k) \rVert_2,
\end{align}
where the two terms on the RHS represent the truncation and discretization errors, denoted as $\epsilon_1$ and $\epsilon - \epsilon_1$. By \cref{eq:Tmax_discrete}, we already know that $\lVert \bt \rVert_\infty \leq \frac{1}{b_k} \log \frac{1}{\epsilon_1 b_k}$. Moreover, the error bound of \cref{eq:GL_error} clearly applies to the entire function $f_n(t) = e^{i(z_k - E_n)T_{k}^{\rm max} (1+t)/2}$ where $E_n \in [-1, 1]$ is the $n$th eigenvalue of $H$. It is rather straightforward to show through direct 
calculations that $\forall \sigma > 1$,
\begin{align}
    \max_{0 \leq n \leq N-1} \sup_{t \in \mathcal{E}_{\sigma}} \lvert f_n(t) \rvert \leq e^{[ a_k^{+} (\sigma - \sigma^{-1}) +  b_k (\sigma -2 + \sigma^{-1})] T_{k}^{\rm max} / 4 }.
\end{align}
This supremum estimate implies an overall error estimate (cf. \cref{eq:Tmax_discrete,eq:GL_error}),
\begin{align}
     \lVert R(z_k) - \mathcal{I}_{J}(z_k) \rVert_2 \leq \epsilon_1 + \frac{4 \sigma^{2-2J}}{(\sigma^2 - 1) b_k} \left( \frac{1}{\epsilon_1 b_k} \right)^{\eta_k} {\log \frac{1}{\epsilon_1 b_k}},
     \label{eq:GL_overall_qbound}
\end{align}
where 
\begin{align}
    \eta_k(\sigma) = \frac{(\sigma - 1)^2 b_k + (\sigma^2 - 1) a_k^{+} }{4 \sigma b_k}
\end{align}
appears as an exponent. Therefore \cref{eq:GL_overall_qbound} implies that
\begin{align}
    J(\epsilon,\epsilon_1) = \frac{1}{2} \log_{\sigma} \log \frac{1}{\epsilon_1 b_k}  + \frac{\eta_k}{2} \log_\sigma \frac{1}{\epsilon_1 b_k} + \frac{1}{2}\log_{\sigma} \frac{4}{(\sigma^2-1)(\epsilon - \epsilon_1)b_k} + 1
\end{align}
Legendre nodes suffice for an $\epsilon$-accurate approximation.
By setting, for example, $\sigma = \sqrt{2}$ and $\epsilon_1 = \frac{\epsilon}{2}$, we arrive at the upper bound as claimed. We conclude our proof by noting that the Gauss-Legendre time grid $\bt$ is symmetric around $0$: the center of symmetry shifts to $\frac{T_{k}^{\rm max}}{2}$ after we revert the change of variable. $\qed$

\subsection{\textbf{Theorem 2}}
\label{subsec:proof_theorem2}
\noindent \textit{Proof.} Again by the triangle inequality, we have
\begin{align}
    \lVert R(z_k) - \mathcal{I}_{J}(z_k) \rVert_2 \leq  \lVert R(z_k) - \mathcal{I}_{\infty}(z_k) \rVert_2 + \lVert \mathcal{I}_{\infty}(z_k) - \mathcal{I}_{J}(z_k) \rVert_2,
\end{align}
where the truncation and approximation errors contain contributions from both the Gauss-Legendre and trapezoidal rules. For $T_k^{\rm max} = \lVert \bt \rVert_{\infty} = q^{\rm max} y^{\rm max}$, observe that
\begin{align}
    \lVert R(z_k) - \mathcal{I}_{\infty}(z_k) \rVert_2 \leq \frac{1}{a_k^{-}} {\rm erfc} \bigg( \frac{a_k^{-} q^{\rm max}}{2} \bigg) + \frac{\sqrt{2} q^{\rm max}} {\sqrt{\pi} } {\rm erfc} \left(\frac{y^{\rm max}}{\sqrt{2}} \right),
\end{align}
where the RHS involving the complementary error function ${\rm erfc}(\cdot)$ accounts for truncation error along the $q$ and $y$ coordinate respectively. Given $0 < \epsilon_1 < \epsilon$ sufficiently small, setting $q^{\rm max} = \frac{2}{a_k^{-}} \sqrt{\log \frac{2}{\epsilon_1 a_k^{-} }}$ and $y^{\rm max} = \sqrt{ 2 \log \frac{2}{\epsilon_1 a_k^{-} } }$ yields
\begin{align}
    \lVert R(z_k) - \mathcal{I}_{\infty}(z_k) \rVert_2 \leq \bigg[ \frac{1}{a_k^{-}} + \frac{2 \sqrt{2}}{\pi a_k^{-}} \bigg] \frac{\epsilon_1 a_k^{-}}{2} < \epsilon_1,
\end{align}
where we invoke simple bounds, such as a Chernoff-type upper bound ${\rm erfc}(x) \leq \frac{1}{\sqrt{\pi}x}e^{-x^2}$, to obtain an estimate of the truncation error.

For the approximation error, we can apply the bound of \cref{eq:GLT_error} to the entire function $F_n(y|q) = e^{-y^2/2} e^{iy(a_k - E_n)q}$. Direct calculations show that $\forall \sigma_y > 0$,
\begin{align}
    \max_{0 \leq n \leq N-1} e_{\Dy}[F_{n}(\cdot|q)] \lesssim  2\sqrt{2\pi} e^{\sigma_y^2/2 + \sigma_y a_k^{+} \lvert q \rvert - 2 \pi\sigma_y / \Dy },
\end{align}
for a trapezoidal step $\Dy = \frac{2y^{\rm max}}{L_y - 1}$ and given $q$ coordinate. The supremum estimate implies an approximation error,
\begin{align}
    \begin{split}
        \lVert \mathcal{I}_{\infty}(z_k) - \mathcal{I}_{J}(z_k) \rVert_2 \leq \frac{1}{\pi} \int_{0}^{q^{\rm max}} & dq \, \max_{0 \leq n \leq N-1} e_{\Dy}[F_{n}(\cdot|q); \mathcal{S}_{\sigma_y} ] \\
        &+ \frac{q^{\rm max} \Dy}{2 \pi} \sum_{\ell =0}^{L_{y}-1} e^{-y_\ell^2/2} \max_{0 \leq n \leq N-1} e_{L_q} [f_{n\ell}; \mathcal{E}_{\sigma_q}],
    \end{split}
\end{align}
where $f_{n\ell}(q) = e^{i y_\ell (a_k - E_n) q^{\rm max} (1+q)/2}$ denotes the integrand in \cref{eq:resolvent_II_marginal}, now transformed to the interval $[-1, 1]$, and $L_q$ the number of discretization nodes for the Gauss-Legendre rule. The terms $e_{\Dy}[F_{n}(\cdot|q)]$ and $e_{L_{q}} [f_{n\ell}]$ capture discretization errors associated with the trapezoidal and Legendre rule respectively, derived in relation to region of analyticity $\mathcal{S}_{\sigma_y}$ and $\mathcal{E}_{\sigma_q}$. Setting $\sigma_y = y^{\rm max}$ and $\sigma_q = \sqrt{2}$ for simplicity, we hence arrive at an estimate of the overall error,
\begin{align}
    \begin{split}
        \lVert R(z_k) - \mathcal{I}_{J}(z_k) \rVert_2 \leq \epsilon_1 + \frac{4\sqrt{2}}{\sqrt{\pi} \epsilon_1 a_k^{+} a_k^{-} y^{\rm max} } & e^{- \pi(L_y-1)} \bigg[ \frac{2}{\epsilon_1 a_k^{-}} \bigg]^{4\sqrt{2}\zeta_k}  \\
        &+ \frac{\sqrt{2} 2^{5 -L_q}}{\pi a^{-}_k }  \bigg[ \frac{2}{\epsilon_1 a_k^{-}} \bigg]^{ \zeta_k} \log \frac{2}{\epsilon_1 a_k^{-}},\end{split}\label{eq:cost_theorem2}
\end{align}
where the exponent reads $\zeta_k = \frac{ a_k^{+} }{2 a_k^{-}}$. We can separately control the second and third term within arbitrary small tolerance $\epsilon_2$ and $\epsilon_3$, \textit{i.e.}, taking
\begin{align}
    L_y(\epsilon_1, \epsilon_2) = \frac{1}{\pi} \bigg[ \log \frac{4 e^{\pi}}{ \sqrt{\pi} \epsilon_1 \epsilon_2  a_k^{+} a_k^{-} } - \frac{1}{2} \log \log \frac{2}{\epsilon_1 a_k^{-}} + 4 \sqrt{2} \zeta_k \log \frac{2}{\epsilon_1 a_k^{-}} \bigg],
\end{align}
and
\begin{align}
    L_q(\epsilon_1, \epsilon_3) = \log_2 \frac{ 32 \sqrt{2}}{\pi\epsilon_3 a^{-}_k } + \zeta_k \log_2 \frac{2}{\epsilon_1 a_k^{-}} + \log_2 \log \frac{2}{\epsilon_1 a_k^{-}}, \label{eq:Lq}
\end{align}
suffices to obtain an $\epsilon$-accurate approximation.
The total number of discretization nodes is $J(\epsilon_1, \epsilon_2, \epsilon_3) = L_y L_q$. Further setting $\epsilon_1 = 2\epsilon_2 = 2\epsilon_3 = \frac{\epsilon}{2}$, we recover the upper bound as claimed. $\qed$

\subsection{Formulation of continuous-time approach for complex poles}
\label{app:cont_complex}
In this section, the full derivation of our continuous-time approach for complex poles, outlined in \cref{subsec:Dirac_delta_continuous}, is provided. We also emphasize the choices made in the basis functions to ensure that our approach is efficient to implement on hardware.

To elucidate the advantage of continuous-variable computing, we first consider the na\"{\i}ve case, \cref{lem:continuous_naive} states the result but hides the complexity of preparing the ancilla state.

\begin{claim}[Naive case]\label{lem:continuous_naive}
    Consider a continuous-variable ancilla $\ket{0}_c = \int_0^\infty dq\,\psi(q;b_k) \ket{q}$ as in \cref{eq:resolvent_ancilla}. 
    Let $\psi(q;b_k) = \sqrt{b_k}e^{-b_k q/2}$ be the ancillary wavefunction and $\tilde{H}=(H-a_k)\otimes q$ the total Hamiltonian. Time evolution under $\hat{H}$ applied to the composite state $\ket{\phi}\ket{0}_c$ of unit duration prepares the (normalized) action of the resolvent on $\ket{\phi}$:
     \begin{equation}
         iR(z_k)\ket{\phi}/\Vert R(z_k)\ket{\phi}\Vert_2.
     \end{equation}
     The success probability of the state preparation is, for $z_k = a_k+ib_k$,
     \begin{equation}
         P_{R(z_k)} = b_k^2 \lVert R(z_k) \ket{\phi}  \rVert_2^2 
    =  \sum\limits_{n=0}^{N-1} \frac{p_n b_k^2}{ \abs{ a_k - E_n }^2 + b_k^2 },
    \label{eq:continuous_success_0}
     \end{equation}
     where $p_n = \lvert \braket{E_n|\phi} \rvert^2$ denotes the squared overlap between the $n$th eigenstate $\ket{E_n}$ of $H$ and the initial system state $\ket{\phi}$.
\end{claim}
\begin{proof}
    Evolving $\ket{\phi}\ket{0}_c$ for unit duration and substituting the ancilla expression yields:
    \begin{align}
    e^{-i\Tilde{H}} \ket{\phi}\ket{0}_{c} &= \int_{0}^{\infty} dq \, \psi(q) e^{i(a_k - H)q} \ket{\phi} \ket{q} \label{eq:resolvent_continuousI}, \\
    &= \int_{0}^{\infty} dq \, \psi(q) e^{i(a_k - H)q} \ket{\phi} \psi^{\ast}(q) \ket{0}_{c} + \ket{\perp} = i b_k  R(z_k) \ket{\phi} \ket{0}_c + \ket{\perp}.
\end{align}
 In the last equality, we recover the integral transform of \cref{eq:resolvent_I} using ancilla wavefunction given by \cref{eq:resolvent_ancilla}.
 The residual state $\ket{\perp}$ belongs to the kernel of the projector ${\rm Id} \otimes \ket{0}_c \bra{0}_c$ and is orthogonal to $R(z_k) \ket{\phi}\ket{0}_c$.
 Thus by post-selection on the ancilla, we can implement the action of a resolvent, \textit{i.e.}, we can prepare the normalized state $i R(z_k) \ket{\phi} / \lVert R(z_k) \ket{\phi} \rVert_2$. In particular, the success probability of post-selection is given by
\begin{align}
    P_{R(z_k)} &= b_k^2 \lVert R(z_k) \ket{\phi}  \rVert_2^2 
    =  \sum\limits_{n=0}^{N-1} \frac{p_n b_k^2}{ \abs{ a_k - E_n }^2 + b_k^2 },
    \label{eq:continuous_success_0}
\end{align}
where $p_n = \lvert \braket{E_n|\phi} \rvert^2$ is the squared overlap between the $n$th eigenstate $\ket{E_n}$ of $H$ and the initial system state $\ket{\phi}$.
\end{proof}

\cref{lem:continuous_naive} shows that evolving the composite state under $\Tilde{H}$ applies the resolvent to any system state $\ket{\phi}$. 
By \cref{eq:resolvent_continuousI,eq:resolvent_continuousII}, the system-ancilla Hamiltonian simulation has fixed maximal and total runtime $\lVert \bt \rVert_{\infty} = \lVert \bt \rVert_{1} = 1$. While time evolution itself can be executed much more efficiently compared to the discrete approach, the complexity of this continuous approach primarily falls onto the preparation and measurement of the ancilla $\ket{0}_c$. Note that, similar to the discrete setting, we may truncate the spatial integral at finite value of $q$.

Despite the elementary analytical form of $\psi$, its accurate preparation on quantum hardware can be a nontrivial task, for example due to the discontinuity in the wavefunction at $q=0$. Instead, we would like to exploit smooth wavefunctions with rapid spatial decay. To significantly simplify the ancilla initialization, we now express \cref{eq:resolvent_ancilla,eq:resolvent_continuousI} in an alternative representation in terms of Gaussian wavefunctions.
\begin{proof}[Proof of \cref{lem:continuous_Gaussian}]
    We start from \cref{eq:Dirac_delta_stochastic_MC}, and assign variables $A_j$ to the integrals in the sum,
\begin{align}
    iR(z_k) &= \frac{1}{b_k} \mathbb{E}_{\rm mix} \bigg[ \int_{-\infty}^{\infty} dq \, \varphi_g(q;\Gamma) e^{i(a_k - H) \lvert q \rvert} \bigg]  \approx\frac{1}{b_k G} \sum_{j=1}^{G} \underbrace{\int_{-\infty}^{\infty} dq \, \varphi_{g}(q;\Gamma_{j}) e^{i(a_k - H) \lvert q \rvert}}_{:= A_j}.
\end{align}
Each Monte-Carlo sample $\Gamma_j$ is generated according to the probability measure $\rho_{\rm mix}$ and $A_j$ in \cref{eq:Dirac_delta_stochastic_MC} can be obtained by evolving a composite state.
The composite state consists of the system qubits and an ancilla initialized in the Gaussian state with spatial variance $\Gamma_j$,
\begin{align}
   \ket{0_{j}}_{c} = \int_{-\infty}^{\infty} dq \, \sqrt{\varphi_{g}(q;\Gamma_{j})} \ket{q}.
\end{align}
Each sampled value $\Gamma_j$ then specifies the spatial variance for the Gaussian state $\ket{0_j}_c$ of the form \cref{eq:ket_c}. 
Evolving the composite state under the total Hamiltonian $\Tilde{H} = (H - a_k) \otimes \lvert \hat{q} \rvert$ for a unit duration, where $\hat{q}\ket{q} = q\ket{q}$, we obtain 
\begin{align}
    e^{-i\Tilde{H}} \ket{\phi} \ket{0_j}_{c} = A_{j} \ket{\phi} \ket{0_j}_{c} + \ket{\perp_{j}}.
\end{align}
Since the residual $\ket{\perp_{j}}$ is in the kernel of the projector $\textrm{Id}\otimes \ket{0_j}_c \bra{0_j}_c$, a post-selection on the ancilla, i.e., measuring the ancilla in the state $\ket{0_j}_c$, projects this state onto zero. 
Thus, the action $A_j\ket{\phi}$ is available through measurement, and can be summed together to provide an approximation to the action of the resolvent $R(z_k)\ket{\phi}$.

Since
\begin{align}
    \rho_{\rm mix}\left(\{ \Gamma_j\}_{j=1}^{G}|b_k \right) = \prod_{j=1}^{G} \rho_{\rm mix}(\Gamma_j|b_k) = \prod_{j=1}^{G} \frac{b_k^2 }{2} e^{- b_k^2 \Gamma_{j}/2},
\end{align}
can unbiasedly recover the resolvent in the large sample limit $G \rightarrow \infty$.
When employing a Monte-Carlo procedure \cite{Go16} to sample $\rho_{\rm mix}$, increasing the number of samples $G$ improves the accuracy of the resolvent approximation, where the error scales as  $ \sim \frac{1}{\sqrt{G}}$.
The post-selection on ancilla prepares the normalized state $i R(z_k) \ket{\phi} / \lVert R(z_k) \ket{\phi} \rVert_2$ with a success probability,
\begin{align}
    P_{A_{j} } = \left\lVert A_{j} \ket{\phi} \right\rVert_2^2 = \sum_{n=0}^{N-1} p_n e^{- \Gamma_{j} E_n^2} \bigg[ 1 + {\rm erfi}^2  \bigg( \sqrt{\frac{\Gamma_{j} }{2}} E_n \bigg) \bigg],
\end{align}
where ${\rm erfi}(\cdot)$ denotes the imaginary error function, $p_n = \lvert \braket{E_n|\phi} \rvert^2$ is the squared overlap between the $n$th eigenstate $\ket{E_n}$ of $H$ and the system state $\ket{\phi}$.
\end{proof}

\subsection{\textbf{Corollary 1}}\label{subsec:proof_cor1}
\begin{proof}
    Note that the finest time grid, \textit{i.e.}, the one sufficient to approximate the resolvent with the pole $i b^-$ up to an accuracy $\epsilon$, suffices to approximate the resolvents for all $i b_k$ up to an accuracy $\epsilon$.
    Then the statement follows from a direct application of \cref{thm:1} to $A$ in \cref{eq:Zolo_discrete_LCU} together with the fact that $\sum_{k=0}^{K-1} c_k = 1$. 
    The expression of the total evolution time follows immediately from \cref{eq:Zolo_J} and that of the maximal evolution time.
\end{proof}

\subsection{\textbf{Corollary 2}}\label{subsec:proof_cor2}
Before giving the proof to \cref{cor:iterative}, we need the following property.
\begin{prop}\label{prop}
    The application of $D$ iterations, as in \cref{eq:iterativeFilter}, requires computation of $(D+1)K$ resolvents.
\end{prop}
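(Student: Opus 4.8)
The plan is to count, factor by factor, how many distinct resolvents enter the partial-fraction form of $r_{\star D}(H)=\prod_{d=0}^{D} r_E\circ f_{\xi^{d}}(H)$, and to check that the contributions coming from different iterations never overlap. First I would recall from \cref{eq:rf_Zolo} that the Zolotarev approximant $r_K$ is rational of degree $(2K-1,2K)$ with the $2K$ simple poles $\pm i b_k$, $b_k>0$, $k=0,\dots,K-1$, occurring in conjugate pairs. Hence the base filter $r_E=\tfrac12(1-r_K)$ is rational of degree $(2K,2K)$, again with $2K$ simple poles off the real axis in conjugate pairs; since $R^{\dagger}(z)=R(z^{\ast})$, the $K$ resolvents attached to the upper-half-plane poles determine the remaining $K$, so $r_E(H)$ is an LCU of exactly $K$ resolvents of $H$ --- this is the content of \cref{cor:1}.

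Next I would transport poles through the affine rescaling. Writing $f_{\xi^{d}}(\omega)=\xi^{-d}(\omega-E_-)+E_-$, one has $f_{\xi^{d}}(H)=\xi^{-d}(H-E_-)+E_-$, and a one-line computation gives
\begin{align}
    \bigl(z-f_{\xi^{d}}(H)\bigr)^{-1}=\xi^{d}\,\bigl(\tilde z_{d}-H\bigr)^{-1},\qquad \tilde z_{d}:=E_-+\xi^{d}(z-E_-).
\end{align}
Thus $r_E\circ f_{\xi^{d}}(H)$ is itself rational in $H$ of degree $(2K,2K)$ whose $2K$ simple poles are the images $z\mapsto\tilde z_{d}$ of the poles of $r_E$, and by the same conjugation identity it is an LCU of $K$ resolvents of $H$. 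Explicitly, the $d$-th factor has poles at $E_-(1-\xi^{d})\pm i\,\xi^{d} b_k$, so both the real part and the modulus of the imaginary part are strictly monotone in $d$ (using $0<\xi<1$ and $b_k>0$); consequently the $D+1$ pole sets for $d=0,1,\dots,D$ are pairwise disjoint and all simple.

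Finally I would assemble the product. Because the $D+1$ factors of $r_{\star D}(H)$ are rational functions with numerator degree not exceeding denominator degree and with pairwise disjoint simple poles, their product has a partial-fraction expansion consisting of a constant plus one simple-pole term at each of the $\sum_{d=0}^{D}2K=(D+1)\cdot 2K$ poles, with no pole multiplicity created. The constant needs no resolvent, and pairing each pole with its conjugate halves the remaining count, so the LCU of $r_{\star D}(H)$ invokes exactly $(D+1)K$ distinct resolvents of $H$, as claimed. I expect the genuinely load-bearing step to be the disjointness/no-multiplicity claim of the previous paragraph: were two iterations to share a pole, the product would carry a double pole and the count would have to be recast in terms of a resolvent power $(z-H)^{-2}$; all other steps are routine algebra together with $R^{\dagger}(z)=R(z^{\ast})$.
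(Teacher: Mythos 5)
Your proof is correct, and it rests on the same mechanism as the paper's: the paper applies the operator resolvent identity $R(z_1)R(z_2)=\frac{1}{z_2-z_1}\left[R(z_1)-R(z_2)\right]$ pairwise and recursively, tracking the new coefficients $c^{(1)}_{k\iota}(\xi)$ at each iteration, whereas you invoke the equivalent scalar fact once and globally — a product of degree-$(2K,2K)$ rational functions with pairwise disjoint simple poles is a constant plus a partial-fraction sum with exactly one term per pole — and then pass to $H$ by functional calculus. Your packaging buys two things the paper leaves implicit: (i) you actually verify the disjointness of the $D+1$ transported pole sets $E_-(1-\xi^d)\pm i\xi^d b_k$ (the paper's coefficient formulas silently require $\tilde z_{k'\iota'}\neq\xi\tilde z_{k\iota}$, i.e., exactly this non-collision, for the denominators to be nonzero), and (ii) the conjugate-pairing reduction from $2K(D+1)$ poles to $(D+1)K$ resolvents is made explicit via $R^{\dagger}(z)=R(z^{\ast})$ and the realness of $r_E\circ f_{\xi^d}$. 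What you lose relative to the paper is the constructive recursion for the LCU coefficients, which the paper needs downstream (in the proof of \cref{cor:iterative}) to bound $\lVert\bc_{\bs}\rVert_1$ and hence the simulation times; your argument establishes the count but not the explicit weights. One small caveat: your disjointness argument via monotonicity of the real part $E_-(1-\xi^d)$ degenerates when $E_-=0$, in which case one must separately rule out accidental coincidences $\xi^d b_k=\xi^{d'}b_{k'}$; this is generically harmless and the paper's proof carries the same implicit assumption, but it is worth a remark if the claim is to hold for all parameter choices.
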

\noindent \textit{Proof of \cref{prop}.} According to \cref{eq:iterativeFilter}, the application of $D$ iterations immediately implies the need to compute $K^{D+1}$ resolvents.
For a rational filter of the form $r_E (H) = \sum_{k=0}^{K-1} c_{k} R(z_{k}) + {\rm h.c.}$ (where ${\rm h.c.}$ denotes the Hermitian conjugate), we have,
\begin{align}
    (r_E \circ f_{\xi})(H) = \sum_{k=0}^{K-1} \xi c_{k} R( f_{1/\xi}(z_{k}) ) + {\rm h.c.},
\end{align}
where the resolvent identity \cite{Yo95} implies,
\begin{align}
   [(r_{E} \circ f_{\xi}) r_E] (H) &= \sum_{k,k'=0}^{K-1} \sum_{\iota, \iota' = \pm} \xi c_{k \iota} c_{k' \iota'}  R( f_{1/\xi}(z_{k \iota}) ) R(z_{k' \iota'}), 
   \label{eq:resolvent_product}\\
   &= \sum_{k,k'=0}^{K-1} \sum_{\iota , \iota' = \pm} \frac{\xi c_{k \iota} c_{k' \iota'}}{\tilde{z}_{k' \iota'} - \xi \tilde{z}_{k \iota} } \left[  R( f_{1/\xi}(z_{k \iota}) ) -  R(z_{k' \iota'}) \right] ,
\end{align}
with $c_{k\pm} = \Re c_k \pm i \Im c_k$, $z_{k\pm} = a_k \pm i b_k$, and $\tilde{z}_{k \pm} = z_{k \pm} - E_{-}$. For simplicity, we introduce the subscript $\iota$ to designate the complex conjugation of a scalar or the Hermitian conjugation of an operator. The composed filter $r_{\star 1}(H;\xi) = [(r_{E} \circ f_{\xi}) r_E ] (H)$ acts on an initial state $\ket{\phi}$ as
\begin{align}
    \begin{split}
        \braket{\phi| f(H) r_{\star 1}(H;\xi) |\phi} = \xi \sum_{k=0}^{K-1} \sum_{\iota = \pm} & c_{k\iota}^{(1)}(\xi) \braket{\phi|f_{\iota}(H) R( f_{1/\xi}(z_{k}) )|\phi}_{\iota} \\
        &+ \sum_{k=0}^{K-1} \sum_{\iota = \pm} c^{(1)}_{k\iota}(1/\xi) \braket{\phi|f_{\iota}(H) R(z_{k})|\phi}_{\iota},
    \end{split}
\end{align}
where
\begin{align}
    c_{k\iota}^{(1)}(\xi) = \sum_{k'=0}^{K-1} \sum_{\iota'=\pm} \frac{ c_{k \iota} c_{k' \iota'} }{\tilde{z}_{k' \iota'} - \xi \tilde{z}_{k \iota} }.
\end{align}
That is, we have converted the evaluation of $K^2$ resolvent products (c.f. \cref{eq:resolvent_product}) to that of $2K$ resolvents within a single iterative filtering step. 
From a recursive argument it follows immediately that \cref{eq:iterativeFilter} can, therefore, be constructed by only $D(K+1)$ resolvents. $\qed$
\bigskip

\noindent \textit{Proof of \cref{cor:iterative}.}
An efficient rational approximation to the step function can be obtained by the Zolotarev approximant $r_K(H)$ in \cref{eq:rf_Zolo}, \textit{i.e.}, $r_{E}(H) = \frac{1 - r_{K}(H)}{2}$. 
Below we describe in detail a discrete-time procedure for constructing the corresponding iterative filter $r_{\star D}(H)$. First observe that 
\begin{align}
    r_{\star D}(H;\xi) = \frac{1}{2^{D+1}} \sum_{d=0}^{D+1} (-1)^d \sum_{\bs \subseteq [D+1]: \lvert \bs \rvert = d} \prod_{i=0}^{d-1} r_{K} \circ f_{\xi^{s_i}} (H), \label{eq:r_star}
\end{align}
where each $\bs = [s_0, s_1, \cdots, s_{d -1}]$ is a multi-index specified by distinct $s_i \in [D+1]$. We simply assume that $\bs$ is ordered with $s_i < s_{i+1}$. Thus to construct a discrete-time filter $\epsilon$-close to $r_{\star D}(H)$, it suffices to adopt a Legendre rule $\bt$ such that
\begin{align}
    \bigg\lVert \prod_{i=0}^{d-1} r_{K} \circ f_{\xi^{s_i}}(H) - \sum_{j=0}^{J-1} x_j e^{-iHt_j} \bigg\rVert_2 \leq \epsilon,
\end{align}
for all multi-indices $\bs$. We assert that $\bs = [D+1]$ incurs the largest error among multi-indices since by \cref{cor:1},
\begin{align}
    J_{\bs} = \log_{2} \log \frac{ 2 (\gamma_K)^{d} \lVert \bc_{\bs} \rVert_1 }{ \epsilon \xi^{\lVert \bs \rVert_{\infty} } b^{-} } +   \frac{3b^{-} + 2\sqrt{2} b^{-}  + \lVert H \rVert_2}{4\sqrt{2} b^{-}} \log_2 \frac{ 2 (\gamma_K)^{d} \lVert \bc_{\bs} \rVert_1}{ \epsilon \xi^{\lVert \bs \rVert_{\infty}} b^{-} } +3,
\end{align}
and 
\begin{align}
    \lVert \bt_{\bs} \rVert_{\infty} = \frac{1}{ \xi^{\lVert \bs \rVert_{\infty}} b^{-}} \log \frac{ (\gamma_K)^{d} \lVert \bc_{\bs} \rVert_1 }{ \epsilon \xi^{\lVert \bs \rVert_{\infty}} b^{-}},
\end{align}
where $\bc_{\bs}$ is the vector of rational coefficients satisfying
\begin{align}
    \prod_{i=0}^{d-1} r_{K} \circ f_{\xi^{s_i}}(H) = (-\gamma_K)^{d} \sum_{i=0}^{d-1} \sum_{k=0}^{K-1} \sum_{\iota = \pm}  (c_{\bs})_{i k \iota}  R ( f_{1/\xi^{s_i}}(z_{k\iota}) ),
\end{align}
with $\lVert \bc_{\bs} \rVert_1$ bounded by $\xi^{\lVert \bs \rVert_{\infty}}$.  We notice that the spectral transformation $H \mapsto f_{\xi}(H)$ does not stretch the effective Hamiltonian spectral range, since we apply the transformed filters in a sequential way. For example, an application of the initial filter $r_K(H)$ almost `eliminates' the components of $\ket{\phi}$ in the energy range $[E+\Delta E, E_{+}]$, which also holds during succeeding iterative steps that operate on a smaller and smaller effective spectral range. $\qed$

\section{Extension from simple to repeated poles}
\label{sec:repeatedPoles}
The real-time constructions in Section \cref{sec:resolvent_discrete,sec:resolvent_continuous} can be generalized to account for a pole of higher multiplicity, thus extending our analysis beyond a simple pole associated with the resolvent. A natural generalization is to consider the $m^{\rm th}$-power of a resolvent $R^{m}(z_k) = (z_k -H)^{-m}$ for any positive integer $m$. Notably, a higher-order pole can be expressed through time evolutions if we simply differentiate \cref{eq:resolvent_unitary},
\begin{align}
    (z_k -H)^{-m} &= \frac{(-1)^{m-1}}{(m-1)!} \frac{d^{m-1}}{d \omega^{m-1}} (z_k +\omega - H )^{-1} \bigg \rvert_{\omega = 0}, \label{eq:higher_order_resolvent} \\
    &= \frac{(-i)^{m-1}}{(m-1)!} \int_{0}^{\infty} dq \, q^{m-1} \int_{-\infty}^{\infty} dy \, y^{m-1} \kernel_{k}(q,y) e^{iy(z_k - H)q},\label{eq:higher_order_resolvent_2}
\end{align}
where \cref{eq:higher_order_resolvent} holds by functional calculus \cite{Yo95}. Observe that the polynomial growth of $q^{m-1}$ or $y^{m-1}$ in the integrand is dominated by the exponential decay of $\kernel_k e^{-iy(z_k-H)q}$. 

For discrete-time LCHS, the observation above implies that the maximal evolution time, at the leading order, remains the same as in the case of a simple pole. For continuous-time LCHS, we can block-encode $R^{m}(z_k)$ stochastically with Gaussian states when $b_k > 0$, \textit{i.e.},
\begin{align}
    \begin{split}
        R^{m}(z_k) = \frac{(-i)^{m}}{(m-1)!} \int_0^{\infty} & dq \, q^{m-1} e^{-b_k q + i(a_k- H)q} \\
        &\approx \frac{(-i)^{m}}{(m-1)!} \sum_{j=1}^{G} \int_{-\infty}^{\infty} dq \, \varphi_{g}(q; q_j, \Gamma_j) e^{i (a_k - H) q},
    \end{split}
\end{align}
where $q^{m-1}e^{-b_k q}$ follows the density of a gamma distribution with shape parameter $m$ and rate parameter $b_k$, and we invoke Gaussian mixture approximation detailed in \cref{sec:resolvent_continuous} now with a set of center-displaced Gaussians $\varphi_{g}(q; q_j, \Gamma_j) := \varphi_{g}(q - q_j, \Gamma_j)$. In the large $m$ limit, the approximation holds for a single Gaussian mode by the central limit theorem, since a gamma random variable is statistically identical to the sum of $m$ i.i.d. exponential variables. When $b_k = 0$, we can block-encode $R^{m}(z_k)$ deterministically with two bosonic excited states,
\begin{align}
    R^{m}(z_k) \approx \frac{C}{2} \int_{-\infty}^{\infty} dq \, \psi_{m}^2(q;\Gamma^{{\rm max},m}) \int_{-\infty}^{\infty} dy \, \psi_{m}^2(y;1) e^{iy(a_k - H)q},
\end{align}
where $\psi_{m}(q;\Gamma^{{\rm max},m})$, for instance, is the $m^{\rm th}$-lowest eigenstate of the harmonic oscillator with a characteristic width $\sqrt{\Gamma^{{\rm max},m}}$, and $C$ is a normalization constant. Therefore the overall cost of implementing higher-order resolvent is marginally affected by pole multiplicity.

\section{Numerical benchmark for the Heisenberg models}
\label{app:Heisenberg}
Using the 2D and 3D Heisenberg spin Hamiltonians on different lattices, we run more experiments illustrating the main findings in this paper. All Hamiltonians considered are taken from the HamLib collection \cite{Hamlib}, where more details on the Hamiltonians can be found.

For the Hamiltonian "2D triag pbc qubitnodes Lx=4, Ly=6, h=2" with a pole located at $z = -0.49+0.1i$, \cref{fig:2D_triag} shows the convergence of the quadrature rules described in \cref{subsec:Dirac_delta_discrete}. The results confirm the conclusion in \cref{subsec:Dirac_delta_discrete_numerics}: the Legendre shows fast exponential convergence, outperforming the trapezoidal and Laguerre rule.
\begin{figure}[!ht]
    \centering
    \setlength\figureheight{2.5cm}
    \setlength\figurewidth{0.85\textwidth}	
%
\begin{tikzpicture}

\begin{axis}[%
width=0.476\figurewidth,
height=\figureheight,
at={(-0.56\figurewidth,0\figureheight)},
scale only axis,
xmin=0,
xmax=120,
ymode=log,
ymin=1e-07,
ymax=100,
yminorticks=true,
axis background/.style={fill=white},
legend style={draw=none, font=\footnotesize}
]
\addplot [color=myblue, draw=none, mark=o, mark options={solid, myblue}]
  table[row sep=crcr]{%
1	10.865\\
3.5	9.4695\\
6	6.3671\\
8.5	4.0939\\
11	3.4057\\
13.5	2.2696\\
16	1.0669\\
18.5	0.051527\\
21	0.00099925\\
23.5	0.00099925\\
26	0.00099925\\
28.5	0.00099925\\
31	0.00099925\\
33.5	0.00099925\\
36	0.00099925\\
38.5	0.00099925\\
41	0.00099925\\
43.5	0.00099925\\
46	0.00099925\\
48.5	0.00099925\\
51	0.00099925\\
53.5	0.00099925\\
};
\addlegendentry{Leg}

\addplot [color=myred, draw=none, mark=asterisk, mark options={solid, myred}]
  table[row sep=crcr]{%
0.043429	5.3865\\
0.53201	2.9086\\
1.5635	1.9324\\
3.1378	1.2127\\
5.255	0.7468\\
7.915	0.45102\\
11.118	0.26879\\
14.864	0.15634\\
19.152	0.089824\\
23.984	0.051184\\
29.358	0.028973\\
35.276	0.016317\\
41.736	0.0091525\\
48.739	0.0051171\\
56.285	0.002853\\
64.373	0.0015868\\
73.005	0.00088077\\
82.179	0.00048797\\
91.897	0.00026991\\
102.16	0.00014907\\
112.96	8.2227e-05\\
};
\addlegendentry{Lag}

\addplot [color=myorange, draw=none, mark=+, mark options={solid, myorange}]
  table[row sep=crcr]{%
1	46.014\\
3.5	11.862\\
6	10.551\\
8.5	10.26\\
11	10.152\\
13.5	3.6938\\
16	1.7674\\
18.5	1.1142\\
21	0.78681\\
23.5	0.59217\\
26	0.46467\\
28.5	0.37571\\
31	0.31077\\
33.5	0.26172\\
36	0.22367\\
38.5	0.1935\\
41	0.16914\\
43.5	0.14917\\
46	0.13259\\
48.5	0.11865\\
51	0.10682\\
};
\addlegendentry{Trap}
\draw [dashed,thick,black] (0,1e-3) -- (100,1e-3);
\end{axis}

\begin{axis}[%
width=0.476\figurewidth,
height=\figureheight,
at={(0\figurewidth,0\figureheight)},
scale only axis,
xmin=-3.8031,
xmax=100,
ymode=log,
ymin=1e-07,
ymax=48.04,
yminorticks=true,
axis background/.style={fill=white},
legend style={draw=none, font=\footnotesize}
]
\addplot [color=myblue, draw=none, mark=o, mark options={solid, myblue}]
  table[row sep=crcr]{%
1	9.4932\\
3.5	10.799\\
6	7.9378\\
8.5	6.3467\\
11	4.7035\\
13.5	3.877\\
16	3.0206\\
18.5	2.2393\\
21	1.4688\\
23.5	0.82803\\
26	0.37612\\
28.5	0.1148\\
31	0.009348\\
33.5	0.00012213\\
36	9.9925e-07\\
38.5	9.9925e-07\\
41	9.9925e-07\\
43.5	9.9925e-07\\
46	9.9925e-07\\
48.5	9.9925e-07\\
51	9.9925e-07\\
53.5	9.9925e-07\\
56	9.9925e-07\\
58.5	9.9925e-07\\
61	9.9925e-07\\
63.5	9.9925e-07\\
66	9.9925e-07\\
68.5	9.9925e-07\\
71	9.9925e-07\\
};
\addlegendentry{Leg}
\addplot [color=myred, draw=none, mark=asterisk, mark options={solid, myred}]
  table[row sep=crcr]{%
0.024817	5.3865\\
0.30401	2.9086\\
0.89341	1.9324\\
1.793	1.2127\\
3.0028	0.7468\\
4.5229	0.45102\\
6.3531	0.26879\\
8.4936	0.15634\\
10.944	0.089824\\
13.705	0.051184\\
16.776	0.028973\\
20.157	0.016317\\
23.849	0.0091525\\
27.851	0.0051171\\
32.163	0.002853\\
36.785	0.0015868\\
41.717	0.00088077\\
46.96	0.00048797\\
52.512	0.00026991\\
58.375	0.00014907\\
64.549	8.2227e-05\\
71.032	4.5302e-05\\
77.826	2.4931e-05\\
84.929	1.3707e-05\\
92.343	7.5291e-06\\
100.07	4.1322e-06\\
108.1	2.2662e-06\\
116.45	1.2419e-06\\
};
\addlegendentry{Lag}
\addplot [color=myorange, draw=none, mark=+, mark options={solid, myorange}]
  table[row sep=crcr]{%
1	80.548\\
3.5	15.324\\
6	11.675\\
8.5	10.769\\
11	10.462\\
13.5	10.307\\
16	10.158\\
18.5	9.806\\
21	7.2375\\
23.5	3.4888\\
26	2.2114\\
28.5	1.5898\\
31	1.2223\\
33.5	0.97976\\
36	0.80827\\
38.5	0.6811\\
41	0.58345\\
43.5	0.50644\\
46	0.44439\\
48.5	0.39353\\
51	0.35123\\
53.5	0.31561\\
56	0.2853\\
58.5	0.25927\\
61	0.23673\\
63.5	0.21706\\
66	0.1998\\
68.5	0.18455\\
};
\addlegendentry{Trap}
\draw [dashed,thick,black] (0,1e-6) -- (100,1e-6);
\end{axis}

\end{tikzpicture}%
    \caption{Error $\Vert R(z)-\mathcal{I_J}(z)\Vert_2$ shown as a function of the scaled total evolution time $\frac{T^{\rm tot}}{T^{\rm max}}$, for $z=-0.49+0.1i$ and the (scaled) Hamlib Hamiltonian "2D triag pbc qubitnodes Lx=4, Ly=6, h=2" of size $2^{12}\times 2^{12}$. The approximation $\mathcal{I_J}(z)$ is computed using the trapezoidal ({\color{green}+}), Legendre ({\color{red}$\circ$}) and Laguerre (\color{black}$\ast$) rule for a requested accuracy $\epsilon=10^{-3}$ (left) 
    and $\epsilon=10^{-6}$ (right).}
    \label{fig:2D_triag}
\end{figure}
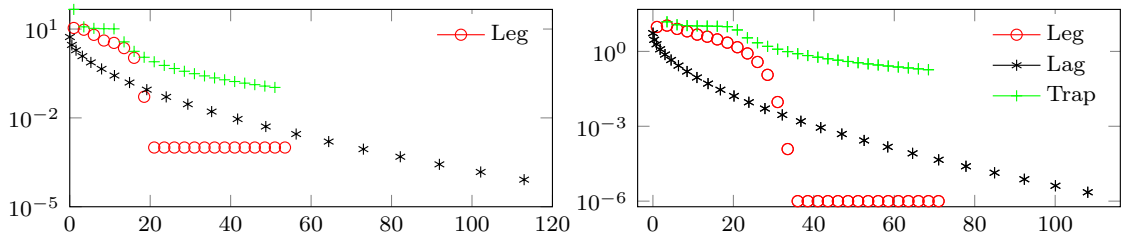
We perform the same experiment for different Hamiltonians and pole locations, two of the considered Hamlib Hamiltonians and poles are:
\begin{itemize}
    \item "2D-triag-pbc-qubitnodes Lx-3 Ly-5 h-2" ($z$=-0.49+0.1i and $z$=0+0.1i)
    \item "3D-grid-pbc-qubitnodes Lx-2 Ly-2 Lz-2 h-2" ($z$=-0.71+0.1i)
\end{itemize}
The Legendre rule requires a lower cost for the approximation of the resolvent for the above Hamiltonians, and we expect that this is true for any Hamiltonian in this collection and for any choice of pole.

\section{Zolotarev rational approximant}
\label{app:Zolotarev}
The Zolotarev rational approximant in \cref{eq:rf_Zolo} achieves a favorable error that is asymptotically sharp with respect to the supremum norm,
\begin{align}
    {\rm err}_K = \sup_{\omega \in \mathcal{W}_{ \overline{\omega} }} \lvert r_{K}(\omega) - {\rm sgn}(\omega)  \rvert  = \frac{2\sqrt{Z_K}}{1+Z_K},
\end{align}
where $Z_K(\overline{\omega})$, known as the Zolotarev number, depends on the approximation region set by $\overline{\omega}$ and can be bounded by $0 \leq Z_K \leq 4 e^{- K\pi^2/\log(4/\overline{\omega})}$~\cite{Beckermann2017}. This implies an exponentially decaying error ${\rm err}_K \leq 4 e^{- K \pi^2/2 \log(4/\overline{\omega})}$, also reported in \cref{eq:Zolo_errBound}. We remark that a tighter bound can be derived in terms of the Gr\"{o}tzsch ring function (see \cite{Beckermann2017}).

The Zolotarev number enjoys the important invariance property that
\begin{align}
    Z_K(\mathcal{W}_{ \overline{\omega} })  \equiv Z_K(\mathcal{T}([-1, -\overline{\omega}]) \cup \mathcal{T}([ \overline{\omega}, 1]) ),
\end{align}
for any M{\"o}bius transformation $\mathcal{T}$. Such invariance allows us to restrict to windows of the previous form. This is because for any pair of disjoint, ordered intervals $[\omega_a, \omega_b] \cup [\omega_c, \omega_d]$, we can identify a M{\"o}bius transformation $\mathcal{T}_{abcd}$ such that
\begin{align}
    \mathcal{T}_{abcd}([\omega_a, \omega_b]) = [ -1, -\overline{\omega}], \qquad \mathcal{T}_{abcd}([\omega_c, \omega_d]) = [\overline{\omega}, 1],
\end{align}
where the transformed window bound $\overline{\omega}$ satisfies
\begin{align}
    \frac{(1 + \overline{\omega})^2}{4 \overline{\omega}} = \frac{(\omega_c - \omega_a)(\omega_d - \omega_b)}{(\omega_c - \omega_b)(\omega_d - \omega_a)} = \frac{(\omega_{ab} + \omega_{bc}) (\omega_{bc} + \omega_{cd})}{\omega_{bc}(\omega_{ab} + \omega_{bc} + \omega_{cd}) },
\end{align}
with, for example, $\omega_{ab} := \omega_{b} - \omega_{a}$ (so a larger value of $\omega_{bc}$, i.e., larger buffer width $\Delta E$ in \cref{eq:rE}, results in a faster convergence).

\end{document}